\numberwithin{equation}{section}
\let\n\noindent
\DeclareMathOperator{\sng}{sng}
\DeclareMathOperator{\str}{str}
\DeclareMathOperator{\sdet}{sdet}
\DeclareMathOperator{\tr}{tr}
\DeclareMathOperator{\ima}{i}
\DeclareMathOperator*{\Res}{Res}
\newcommand{\ee}[1]{{\,{\rm e}^{\, #1}}}
\newcommand{\trans}{\ensuremath{\mathrm{t}}}
\newcommand{\bmu}{\begin{multline}}
\newcommand{\emu}{\end{multline}}
\newcommand{\beq}{\begin{equation}}
\newcommand{\eeq}{\end{equation}}
\newcommand{\bea}{\begin{eqnarray}}
\newcommand{\eea}{\end{eqnarray}}
\renewcommand{\and}{{\qquad \text{and} \qquad}}
\newtheorem{theorem}{Theorem}
\newtheorem{lemma}[theorem]{Lemma}
\newtheorem{proposition}[theorem]{Proposition}
\newtheorem{corollary}[theorem]{Corollary}
\theoremstyle{remark}
\newtheorem{remark}[theorem]{Remark}
\begin{document}

\title[Supermatrix models]{Supermatrix models, loop equations, and duality}

\author{Patrick Desrosiers}

\address{Instituto Matem\'atica y F\'isica, Universidad de Talca, 2 Norte 685, Talca, Chile.}

\author{Bertrand Eynard}

\address{Institut de Physique Th\'eorique, CEA--Saclay, 91191
 Gif-sur-Yvette cedex, France.}

\email{patrick.desrosiers@inst-mat.utalca.cl}
\email{bertrand.eynard@cea.fr}

\thanks{IPhT--t09/163}

\begin{abstract}
We study integrals over Hermitian supermatrices of arbitrary size $p+q$,
that are parametrized by an external field $X$ and a source $Y$,
of respective size $m+n$ and $p+q$.  We show that these integrals exhibit a simple topological expansion in powers of a formal parameter $\hbar$, which can be identified with
$1/(p-q)$.  The loop equation and the associated spectral curve are also obtained.
 The solutions to the loop equation are given in terms of the symplectic invariants introduced in \cite{EOFg}.
 The symmetry property of the latter objects allows us to prove a duality that relates supermatrix models in which the role of $X$ and $Y$ are interchanged. \end{abstract}

\subjclass[2000]{15A52, 14H70, 05C30}

\keywords{Matrix models, Supermatrices, Algebraic curves, Duality}

\maketitle

{

\renewcommand{\baselinestretch}{1.0}
\setlength{\parskip}{0ex}
\small
\tableofcontents

}

\newpage

\renewcommand{\baselinestretch}{1.0}
\setlength{\parskip}{1ex}

\normalsize

\section{Introduction}
In this article, we study supermatrix integrals that include an external field.  We first prove that they can be expanded topologically.
We then write their loop equations, and show that they are the same as for usual matrix integrals, and therefore, they have the same solution.
In other words, super-matrix integrals' topological expansion is given by the symplectic invariants of \cite{EOFg}. As a consequence, we prove a duality which generalizes that of \cite{Brezin,Des}.  Namely, we define (notations are explained below)
\beq\label{eqPartFS1}
Z_{(m|n),(p|q)}(X,Y) = \frac{\ee{-\frac{1}{2\hbar}\str Y^2}}{z_{p,q}(\hbar)} \,\, \int_{H(p|q)}\, dM\,\, \ee{-{1\over 2\hbar}\,\str (IM)^2}\,\,  \ee{\frac{1}{\hbar}\str IMY}  {\prod_{i=1}^{m} \sdet(x_i-IM) \over \prod_{i=m+1}^{m+n} \sdet(x_i-M) }
\eeq
and we show that, if $X$ and $Y$ diagonal with complex entries,
\beq \label{eqDuality}
Z_{(m|n),(p|q)}(X,Y) = Z_{(p|q),(m|n)}(Y,X)
\eeq
for all $m, n, p, q \geq 0$.  Note that the matrix integrals $Z_{(m|n),(p|q)}(X,Y)$ and  $Z_{(p|q),(m|n)}(Y,X)$ share the same formal parameter $\hbar$.
 The duality exchanges the size of the matrix with the number of sources,
and it exchanges the external field $Y$ of size $p+q$, with the sources $X$ of size $m+n$.

\bigskip

Let us be more explicit.  The ensemble $H(p|q)$ of hermitian supermatrices of size $p+q$, is the set of matrices of the form:
\beq
M = \left(\begin{array}{ll}A & B \cr  C & D\end{array}\right)
\eeq
where $A$ and $D$ are hermitian matrices of respective size $p\times p$ and $q\times q$, and $B$ and $C=B^\dagger$ are fermionic matrices (entries are Grassmann anti-commuting variables) of respective size $p\times q$ and $q\times p$.
A Hermitian supermatrix can be diagonalized by an element of $U(p|q)$, the supergroup of unitary transformations.
For a short review of the theory of Grassmann algebras and supermatrices, see Appendices A and B, which are based on references \cite{Berezin,Frappat, Efetov}.

Consider $P$, a complex-valued function   depending upon an Hermitian supermatrix $M$.  Its expectation value  with Gaussian measure in $H{(p|q)}$   is defined by
\beq
\left\langle P(M)\right\rangle_{M\in GU{(p|q)}}=\int_{H(p|q)} dM e^{-\frac{1}{2\hbar}\str  (I M)^2}P(I M)\Big/ \int dM e^{-\frac{1}{2\hbar}\str  (I M)^2}
\eeq
where  $\str$ is the supertrace and $I$ is a   supermatrix which ensures the convergence of the integral.  We choose
\beq
I= \mathrm{diag} (\overbrace{1,\dots,1}^p,\overbrace{\ima,\ldots,\ima}^q)
\eeq
so that $I$ is an element of $U(p|q)$, the supergroup of unitary transformations.

Our aim is to study the partition function of Gaussian supermatrix models containing   sources $x_j$ as well as an external field $Y$:
\beq\label{eqPartFS}
Z_{(m|n),(p|q)}(X,Y) = \ee{-\frac{1}{2\hbar}\str Y^2}  \left\langle  \ee{\frac{1}{\hbar}\str MY}  {\prod_{i=1}^{m+n} \sdet(x_i-M)^{\sigma(i)} }   \right\rangle_{M\in GU(p|q)}.
\eeq
In the last equation,   $X$  stands for a Hermitian supermatrix of size $m+n$ with eigenvalues $x_j$ (possibly not all distinct),
 so that $x_i-M$ is understood as the matrix whose element in the $j$th row and $k$th column is equal to $x_i\delta_{jk}-M_{jk}$,
while
\beq
\sigma(i)=\sigma_{m,n}(i)=\begin{cases}
+1&\text{if}\quad  i\in \{1,\ldots, m\},\\
-1&\text{if}\quad  i\in \{m+1,\ldots, m+n\}.
\end{cases}
\eeq


It is worth mentioning that models  involving  hermitian supermatrix have been studied in the past.
For instance,  Itzykson-Zuber and character formulas for $U(n|m)$  have been  obtained  in \cite{Alfaro}.  This had been preceded, in in the beginning
of the 1990s, by a few attempts to generalize the
well known connection between conventional matrix models and quantum
gravity in 2D \cite{Alvarez,Dadda,Yost}. It was soon realized however that  supermatrix models could not provide a discrete version of
 supergravity.  \footnote{Note that other routes have been followed for describing
 supergravity as a matrix models, such as the formal supereigenvalue
 models (see \cite{Plefka} for a review) and matrix model in superspace such as the Marinari-Parisi
 model (see \cite{Verlinde} and references therein).}
 Especially, convincing arguments were given for the equivalence between supermatrix and matrix models when no external fields are involved \cite{Alvarez}.

 Here we indeed prove, in the first section, that we can formally map the partition function of a matrix model to that of supermatrix model.
 It should be understood that this bijection remains true as long as the models are interpreted as linear combination of expectation values
 with respect to the
 Gaussian measure, like in Eq.\ \eqref{EqPartFormal}, and as long as all the matrices' size and entries are considered as parameters.
 When considering the models from a more general perspective, based on algebraic geometry,  the relation between matrix
 and supermatrix problems becomes more subtle.   In particular, we show that supermatrix models possess new critical
behaviors and enjoy more symmetry.  Especially, we will prove the following
duality property given in Eq.\ \eqref{eqDuality}.

\section{Topological expansion}

Here we show that the expectation value of powers of superstraces
have a simple interpretation in terms of ribbon graphs (also called fatgraphs). This naturally leads
to the conclusion that the partition function for a matrix model of
hermitian $(p+q)\times (p+q)$ supermatrices that contains an external field is similar to
the partition function for the usual hermitian matrix model with an external field.   As shown below,
the formal power series expansion in $\hbar^2$ of the supermatrix integrals is of topological nature.
This property will allow us, in section \ref{SectionSpectral}, to exploit the uniqueness of the solution to the loop equations.

\begin{proposition} Let $y$ and $Y$ be a $N\times N$ Hermitian matrix and a $(p+q)\times (p+q)$ Hermitian supermatrix, respectively.
Define the map $\phi$ as \footnote{The function  $\phi$ can be used only when the matrix
$y$ and the supermatrix $Y$ as well as their respective sizes, are considered as parameters.
We cannot  for instance set $N=2$ and $y=\mathrm{diag}(1,1)$ and then apply $\phi$.
In fact, $\phi$  is a homomorphism that maps of the algebra of the symmetric polynomials
in the eigenvalues of $y$ to the subalgebra of the polynomials which symmetric in two set of eigenvalues of
$Y$, $y_1,\ldots,y_p$ and $y_{p+1},\ldots,y_{p+q}$,  that becomes independent of $y_p$ if $y_p=y_{p+q}$.   In general $\phi$ is not invertible.}
\beq  \phi (\tr y^n) =\str Y^n \qquad \forall \, n\geq 0.\eeq Note
in particular that the case $n=0$ corresponds to $\phi (N)=p-q$.
Then
\begin{equation}
 \Big\langle \prod_i \str M^{n_i}\ee{\str MY}\Big\rangle_{M\in GU(p|q)}=
 \phi\left(\Big\langle \prod_i \tr m^{n_i}\ee{\tr my}\Big\rangle_{m\in GU(N)}\right).
\end{equation}
\end{proposition}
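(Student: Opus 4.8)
The plan is to evaluate both sides by Gaussian (Wick) integration and to show that they are, term by term, the \emph{same} polynomial in power sums, one side carrying ordinary traces and the other supertraces. First I would recall that, since the integrand in each average is a product of (super)traces of powers of the random matrix times the exponential of a linear source, Wick's theorem expresses each average as a finite sum over pairings of the matrix entries. As is standard, these pairings are organized into ribbon graphs (fatgraphs): the vertices come from the trace insertions $\tr m^{n_i}$ (resp.\ $\str M^{n_i}$) and from the expansion of the source $\ee{\tr m y}$ (resp.\ $\ee{\str M Y}$), while the edges are the propagators. The key object is thus the two-point function; for the ordinary ensemble $\langle m_{ab} m_{cd}\rangle_{GU(N)} = \hbar\,\delta_{ad}\delta_{bc}$, so that each propagator is a double line routing two index lines.

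Next, for the ordinary model I would carry out the index sums along the faces (closed index loops) of each ribbon graph. A face carrying $d_f$ source insertions contributes $\tr y^{d_f}$, where the product of the $y$'s is taken in the cyclic order in which they are met along the loop, while an empty face contributes $\tr y^0 = N$. Hence $\langle \prod_i \tr m^{n_i}\ee{\tr m y}\rangle_{GU(N)}$ equals a universal polynomial $\mathcal P\big(\{\tr y^k\}_{k\ge 0}\big)$, whose coefficients (powers of $\hbar$ and symmetry factors) are intrinsic to the graphs and know nothing about $N$ or about $y$ beyond the formal symbols $\tr y^k$ (with $\tr y^0 = N$). This already exhibits the ordinary average as a polynomial in the power sums, which is precisely the domain on which $\phi$ is defined.

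I would then repeat the computation for the supermatrix ensemble, the only new ingredient being the super two-point function of $\tilde M = IM$. Computing it from the Gaussian weight $\ee{-\frac{1}{2\hbar}\str(IM)^2}$, I expect to find the same Kronecker index structure $\delta_{ad}\delta_{bc}$ as in the bosonic case, now accompanied by Grassmann signs whose net effect is to turn each closed index sum $\sum_a$ into the graded sum $\sum_a (-1)^{|a|}$, where $|a|\in\{0,1\}$ is the parity of the index. Consequently an empty face contributes $\str \I = p-q$ and a face with $d_f$ field insertions contributes $\str Y^{d_f}$. Since the combinatorial weight attached to each ribbon graph is identical to the bosonic one, the super average is given by the \emph{same} polynomial, namely $\mathcal P\big(\{\str Y^k\}_{k\ge 0}\big)$ with $\str Y^0 = p-q$. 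Because $\phi$ is the algebra homomorphism sending $\tr y^k \mapsto \str Y^k$ for all $k\ge 0$ (in particular $N\mapsto p-q$), applying $\phi$ to $\mathcal P(\{\tr y^k\})$ yields exactly $\mathcal P(\{\str Y^k\})$, which is the asserted identity.

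The main obstacle I anticipate is the sign bookkeeping in the super case: one must verify that all the minus signs produced by anticommuting the Grassmann entries past one another during the Wick contractions recombine into exactly one factor $(-1)^{|a|}$ per closed index loop, with no leftover sign depending on the topology of the graph or on the order in which the contractions are performed. Concretely, this requires checking that the super-propagator has the clean form above and that the signs are local to individual index lines, so that the graded index sums factorize face by face into the supertraces $\str Y^{d_f}$. Once this locality of signs is established, the remaining steps reduce to the same routine combinatorics as in the Hermitian case.
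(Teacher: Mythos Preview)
Your proposal is correct and follows essentially the same approach as the paper: both reduce the expectation value to a sum over ribbon graphs whose faces contribute power sums $\tr y^{d_f}$ or $\str Y^{d_f}$, with identical combinatorial weights, so that applying $\phi$ maps one side to the other. The paper carries out exactly the sign bookkeeping you flag as the main obstacle, by introducing a $\mathbb{Z}_2$-graded weight function on ribbon graphs together with explicit commutation rules (Lemma~\ref{LemmaTraceGraph} and Figures~\ref{FigComTrace}--\ref{FigCrossingEdgesTrace}) that show the Grassmann signs localize to one factor $\sigma(i_k)$ per independent trace-type subgraph; this is precisely the ``locality of signs'' you anticipate needing.
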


 \begin{corollary} Let $g$ be a natural number  and  $p_n=\hbar\str Y^n$, where $Y$ is a
Hermitian supermatrix of size $p+q$.  Let moreover $F=F_{(m|n),(p|q)}(X,Y)$ be the free energy of the supermatrix model whose partition function is
$Z=Z_{(m|n),(p|q)}(X,Y)$; that is, $F=-\ln Z$.  Then, the following formal
power series holds  \beq
F=\sum_{g\geq 0}\hbar^{2g-2} F^{(g)}(p_1,p_2,\ldots).\eeq
\end{corollary}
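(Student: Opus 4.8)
The plan is to reduce the statement to the ordinary Hermitian matrix model, where the genus expansion is classical, and then to transport that expansion through the homomorphism $\phi$ of the Proposition.

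First I would expand each superdeterminant as $\sdet(x_i-M)^{\sigma(i)}=\exp\!\big(\sigma(i)\,\str\ln(x_i-M)\big)$ and develop $\str\ln(x_i-M)$ as a formal series in the supertraces $\str M^{n}$ with coefficients in the source $x_i$. Substituting into \eqref{eqPartFS} and expanding the exponential writes $Z$ as a formal series in the correlators $\big\langle\prod_i\str M^{n_i}\,\ee{\frac1\hbar\str MY}\big\rangle_{M\in GU(p|q)}$ appearing in the Proposition (the $1/\hbar$ in the external field being accommodated by a routine rescaling). Since $\phi$ fixes $\hbar$ and the sources $x_i$ and acts on each correlator according to the Proposition, linearity gives $Z_{(m|n),(p|q)}(X,Y)=\phi\big(Z_{\mathrm{ord},N}(X,y)\big)$, where $Z_{\mathrm{ord},N}$ is the partition function of the ordinary $N\times N$ Gaussian model with the same sources and external field $y$. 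Because $\phi$ is an algebra homomorphism it commutes with the formal logarithm, so $F=-\ln Z=\phi\big(F_{\mathrm{ord},N}\big)$ with $F_{\mathrm{ord},N}=-\ln Z_{\mathrm{ord},N}$.

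Next I would invoke the topological expansion of the ordinary model. Upon the identification $\hbar\sim1/N$ (the ordinary counterpart of the paper's $\hbar\sim1/(p-q)$), Wick's theorem expresses $F_{\mathrm{ord},N}$ as a sum over connected ribbon graphs graded by the Euler characteristic $2-2g$, giving
\beq
F_{\mathrm{ord},N}(X,y)=\sum_{g\geq0}\hbar^{2g-2}\,\td F^{(g)}(\td t_1,\td t_2,\ldots),\qquad \td t_n=\hbar\,\tr y^{n},
\eeq
each $\td F^{(g)}$ being a symmetric function of the eigenvalues of $y$, with the $x_i$ as parameters. Applying $\phi$ term by term and using $\phi(\tr y^{n})=\str Y^{n}$, so that $\phi(\td t_n)=\hbar\,\str Y^{n}=p_n$ while $\hbar$ is left inert, then yields $F=\sum_{g\geq0}\hbar^{2g-2}F^{(g)}(p_1,p_2,\ldots)$ with $F^{(g)}=\td F^{(g)}$, which is the claim.

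The step I expect to be the main obstacle is the topological expansion in the displayed form, namely that $\td F^{(g)}$ depends on the external field \emph{only} through the rescaled times $\td t_n=\hbar\,\tr y^{n}$, the matrix size entering solely through $N=1/\hbar$ so that the faces of the ribbon graphs build up the $\hbar^{2g-2}$ grading itself. This is what guarantees that $\phi$ preserves the genus filtration: were $\td F^{(g)}$ to carry a bare dependence on $\tr y^{n}$ with uncompensated powers of $\hbar$, or on $N$ outside the combination $\hbar N$, then the substitution $\str Y^{n}=p_n/\hbar$ together with $N\mapsto p-q$ would leak factors of $\hbar$ into the coefficients and mix contributions of different genera. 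Checking that $p_n=\hbar\,\str Y^{n}$ is exactly the rescaling absorbing all such factors — equivalently, that $\phi$ respects the genus grading — is the crux; the remaining manipulations are the formal ones above.
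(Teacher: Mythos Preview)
Your approach is correct and takes a genuinely different route from the paper. The paper does not derive the Corollary from the Proposition; instead it proves Theorem~\ref{PropCentral} directly for supermatrices by carrying out the full graded ribbon-graph analysis (Propositions~\ref{PropAvMatDif} and~\ref{PropBijection}, Lemma~\ref{LemmaTraceGraph}), and then remarks that both the Proposition and the Corollary are ``obvious reformulations'' of that theorem. You instead take the Proposition as input, transport $Z$ through the homomorphism $\phi$ to the ordinary Hermitian model, and quote the classical topological expansion there. This is more economical and modular: it separates the super/ordinary correspondence from the genus expansion and avoids tracking the $\mathbb Z_2$-grading through the combinatorics. The paper's route, by contrast, is self-contained and yields Theorem~\ref{PropCentral} as a statement about supermatrices in its own right, without appealing to the ordinary case.

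The obstacle you flag is exactly the right one, and it is worth being explicit about what it requires. For your argument to go through, the ordinary free energy must depend on $y$ and on $N$ \emph{only} through the combinations $\td t_n=\hbar\,\tr y^n$ for $n\geq 0$, with the size $N$ entering solely as $\td t_0=\hbar N$; otherwise $\phi$ (which sends $N\mapsto p-q$ rather than fixing it) would not respect the genus grading. Establishing this is precisely the $q=0$ case of Theorem~\ref{PropCentral}: each face of a ribbon graph contributes a factor $\hbar\,\tr y^{m_i}$, closed index loops contribute $\hbar\,\tr y^0=\hbar N$, and Euler's formula packages the remaining $\hbar$'s into $\hbar^{2g-2}$. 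So your reduction is sound, but note that the external input you invoke is essentially the same combinatorial theorem the paper proves, specialised to $q=0$.
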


The latter results are in fact obvious reformulations of Theorem \ref{PropCentral}, which will be proved in the following paragraphs.

\begin{proposition}\label{PropAvMatDif} One has
\begin{multline}\label{EqAvMatDif}
\ee{-\frac{\hbar}{2}\str Y^2} \Big\langle M_{i_1 j_1}\cdots M_{i_n j_n}\ee{\str MY}\Big\rangle_{M\in GU(p|q)}=
\\\left(\sigma(j_1)\frac{\partial}{\partial Y_{j_1 i_1}}+\hbar  Y_{i_1 j_1}\right)\circ
\cdots \circ\left(\sigma(j_n)\frac{\partial}{\partial Y_{j_n i_n}}+\hbar  Y_{i_n j_n}\right)\circ\,\, 1\, .
\end{multline}
\end{proposition}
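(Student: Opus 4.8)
The plan is to argue by induction on the number $n$ of matrix elements, the engine in both the base case and the inductive step being the Gaussian ``completion of the square.'' Write $G_n(i_1j_1,\dots,i_nj_n)$ for the left-hand side of \eqref{EqAvMatDif} and $D_k=\sigma(j_k)\frac{\partial}{\partial Y_{j_k i_k}}+\hbar Y_{i_k j_k}$ for the $k$th operator on the right, so that the claim reads $G_n=D_1\circ\cdots\circ D_n\cdot 1$. For the base case $n=0$ I would evaluate the pure source integral: recalling that the bracket evaluates its argument at $IM$, the change of variable $N=IM$ (whose Jacobian cancels against the normalisation) followed by the identity $-\frac{1}{2\hbar}\str N^2+\str NY=-\frac{1}{2\hbar}\str(N-\hbar Y)^2+\frac{\hbar}{2}\str Y^2$, valid since $\str$ is cyclic, shifts the Gaussian and gives $\langle\ee{\str MY}\rangle=\ee{\frac{\hbar}{2}\str Y^2}$; the prefactor $\ee{-\frac{\hbar}{2}\str Y^2}$ then cancels and $G_0=1$.

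For the inductive step I would establish the single recursion $G_n=D_1\,G_{n-1}'$, where $G_{n-1}'$ carries the factors $M_{i_2j_2}\cdots M_{i_nj_n}$; composing this with the induction hypothesis $G_{n-1}'=D_2\circ\cdots\circ D_n\cdot 1$ closes the argument. Applying $D_1$ to $G_{n-1}'=\ee{-\frac{\hbar}{2}\str Y^2}\langle M_{i_2j_2}\cdots\ee{\str MY}\rangle$ by the product rule produces three terms. In the term where $\frac{\partial}{\partial Y_{j_1i_1}}$ hits the Gaussian prefactor one computes $\frac{\partial}{\partial Y_{j_1i_1}}\str Y^2$; after multiplication by $\sigma(j_1)$ this term equals $-\hbar Y_{i_1j_1}\,G_{n-1}'$ and cancels exactly the explicit $+\hbar Y_{i_1j_1}$ contribution of $D_1$. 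What survives is $\sigma(j_1)\ee{-\frac{\hbar}{2}\str Y^2}\langle M_{i_2j_2}\cdots\frac{\partial}{\partial Y_{j_1i_1}}\ee{\str MY}\rangle$, and here the second identity $\sigma(j_1)\frac{\partial}{\partial Y_{j_1i_1}}\ee{\str MY}=M_{i_1j_1}\ee{\str MY}$, read off from $\str MY=\sum_{a,b}\sigma(a)M_{ab}Y_{ba}$, reinstates the missing matrix element and, up to the reordering addressed below, yields $G_n$.

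The main obstacle is the consistent bookkeeping of the grading (parity) throughout. Both key identities rest on $\frac{\partial}{\partial Y_{j_1i_1}}$ acting on supertraces, and for the fermionic entries — where $Y_{j_1i_1}$ is Grassmann-odd, where $I$ contributes factors of $\ima$, and where $\str Y^2$ splits into a bosonic part from the diagonal blocks and a part $2\tr(\beta\gamma)$ in which the off-diagonal blocks $\beta,\gamma$ enter through an anticommuting product — the left/right derivative conventions and the reordering of the odd insertion $M_{i_1j_1}$ past the remaining factors must be tracked carefully. It is precisely these signs that pin down the coefficient $\sigma(j_k)$ in front of each derivative and the relative sign between the two terms of $D_k$; checking that the prefactor term cancels the $\hbar Y$ term in every parity sector is the delicate computation. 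Equivalently, one may bypass the induction by completing the square once to rewrite $G_n$ as the shifted Gaussian average of $\prod_k(N_{i_kj_k}+\hbar Y_{i_kj_k})$ and matching its Wick expansion term by term against the expansion of $D_1\circ\cdots\circ D_n\cdot 1$; the same grading subtleties then reappear as the need to identify the super-propagator $\langle N_{ab}N_{cd}\rangle$ with the contraction $\hbar\,\sigma(j_k)\,\partial Y_{i_lj_l}/\partial Y_{j_ki_k}$.
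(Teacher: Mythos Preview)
Your proposal is correct and rests on the same two identities as the paper's proof, namely $\sigma(j)\,\partial_{Y_{ji}}\ee{\str MY}=M_{ij}\ee{\str MY}$ and $\ee{-\frac{\hbar}{2}\str Y^2}\partial_{Y_{ij}}\ee{\frac{\hbar}{2}\str Y^2}=\partial_{Y_{ij}}+\hbar\sigma(i)Y_{ji}$; the only difference is organizational, in that the paper first pulls all $n$ bare derivatives outside the integral and then conjugates them by $\ee{\frac{\hbar}{2}\str Y^2}$ in one stroke, while you run the same computation as an induction on $n$ with the conjugated operators $D_k$ from the start. Your cancellation of the prefactor-derivative term against $\hbar Y_{i_1j_1}$ is exactly the conjugation identity read backwards, and the reordering signs you flag (commuting $\partial_{Y_{j_1i_1}}$ past $M_{i_2j_2}\cdots M_{i_nj_n}$ and then $M_{i_1j_1}$ back) do cancel since both carry the same parity $\epsilon(i_1)+\epsilon(j_1)$; for the $n=0$ base case you may simply invoke the Gaussian formula of Appendix~B rather than the informal change of variable $N=IM$.
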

\begin{proof}By definition we have simply
\begin{equation}
\Big\langle M_{i_1 j_1}\cdots M_{i_n j_n}\ee{\str MY}\Big\rangle = \frac{\int dM e^{-\frac{1}{2\hbar}\str  (I M)^2}\ee{\str IMY}(IM)_{i_1 j_1}\cdots (IM)_{i_n j_n}}{ \int dM e^{-\frac{1}{2\hbar}\str  (I M)^2}}\\
 \end{equation}
But a few manipulations give \beq
\ee{\str IMY}(IM)_{ij}  =\sigma(j)\frac{\partial}{\partial Y_{ji}}
\ee{ \str IMY} \eeq so that
 \begin{multline} \Big\langle M_{i_1 j_1}\cdots M_{i_n j_n}\ee{\str MY}\Big\rangle =\\ \sigma(j_1)\frac{\partial}{\partial Y_{j_1i_1}}\cdots \sigma(j_n)\frac{\partial}{\partial Y_{j_ni_n}}
 \frac{\int dM e^{-\frac{1}{2\hbar}\str  (I M)^2}\ee{\str IMY}}{ \int dM e^{-\frac{1}{2\hbar}\str  (I
 M)^2}}=\\\sigma(j_1)\frac{\partial}{\partial Y_{j_1i_1}}\cdots \sigma(j_n)\frac{\partial}{\partial Y_{j_ni_n}}\Big\langle \ee{\str MY}\Big\rangle
 \end{multline}
Note that the order of the derivatives is important.  We now make
use of the Gaussian integral formula \eqref{eqGaussInt} and obtain:
\beq \Big\langle M_{i_1 j_1}\cdots M_{i_n j_n}\ee{\str
MY}\Big\rangle = \sigma(j_1)\frac{\partial}{\partial
Y_{j_1i_1}}\cdots \sigma(j_n)\frac{\partial}{\partial
Y_{j_ni_n}}\ee{\frac{\hbar}{2}\str Y^2} \eeq Finally, we note that
\beq \ee{-\frac{\hbar}{2}\str Y^2} \frac{\partial}{\partial Y_{ij}}
\ee{\frac{\hbar}{2}\str Y^2}=\frac{\partial}{\partial Y_{ij}}+\hbar
\sigma(i) Y_{ji} \eeq and the proposition follows.
\end{proof}

We now wish to evaluate the expectation value of products of matrix elements by means of a simple generalization of ribbon graphs.  We
define a ribbon graph of order $n$ as a sequence of $n$ vertices graphically
represented as half-edges $\uparrow\, \downarrow$ and ordered from left to right on an horizontal axis.
Each half-edge $\uparrow \,\downarrow$ is labeled by a pair $i\,j$ of
positive integers.  Moreover, a vertex $\uparrow \, \downarrow$  can be connected to at most one other vertex as long as the orientation
of the arrows is respected;
 the connection of two half-edges produces an edge.
Fig.\ref{FigRibbonGraph} gives an example of a ribbon graph of order 6 and labeled by $i_1j_1,\ldots,i_6j_6$.


\begin{figure}[h]\caption{ {\small A ribbon graph with 2 edges and 2 half-edges }}
\centering
 \psset{unit=0.8cm}
\begin{pspicture}(0,0)(9,4.5)
 \psline[linewidth=1pt,linearc=.3,ArrowInside=->](5,1)(5,2)
\psline[linewidth=1pt,linearc=.4,ArrowInside=-<](5.5,1)(5.5,2)
\psline[linewidth=1pt,linearc=.3,ArrowInside=->](8,1)(8,2)
\psline[linewidth=1pt,linearc=.4,ArrowInside=-<](8.5,1)(8.5,2)
\psline[ArrowInside=-<,linewidth=1pt,linearc=.3](2.5,1)(2.5,3)(6.5,3)(6.5,1)
\psline[linewidth=1pt,linearc=.4,ArrowInside=-<](7,1)(7,3.5)(2,3.5)(2,1)
\pscustom[linewidth=6pt,linecolor=white,fillstyle=solid,fillcolor=white]{%
\psline[linewidth=1pt,linearc=.4,border=2pt,ArrowInside=-<](4,1)(4,2.5)(0.5,2.5)(0.5,1)
\psline[linewidth=1pt,linearc=.3,border=2pt,ArrowInside=-<](1,1)(1,2)(3.5,2)(3.5,1)
}
\psline[linewidth=1pt,linearc=.4,ArrowInside=-<](4,1)(4,2.5)(0.5,2.5)(0.5,1)
\psline[linewidth=1pt,linearc=.3,ArrowInside=-<](1,1)(1,2)(3.5,2)(3.5,1)
\pscustom[linewidth=6pt,linecolor=white,fillstyle=solid,fillcolor=white]{%
 \psline[linewidth=1pt,linearc=.3]{->}(5,1)(5,4)
\psline[linewidth=1pt,linearc=.4]{-<}(5.5,4)(5.5,1) }
 \psline[linewidth=1pt,linearc=.3]{->}(5,1)(5,4)
\psline[linewidth=1pt,linearc=.4]{>-}(5.5,4)(5.5,1)
\pscustom[linewidth=6pt,linecolor=white,fillstyle=solid,fillcolor=white]{%
 \psline[linewidth=1pt,linearc=.3](8,1)(8,4)
\psline[linewidth=1pt,linearc=.4](8.5,4)(8.5,1) }
 \psline[linewidth=1pt,linearc=.3 ]{->}(8,1)(8,4)
\psline[linewidth=1pt,linearc=.4 ]{>-}(8.5,4)(8.5,1)
\rput(0.5,0.9){$\bullet$}\rput(1,0.9){$\bullet$}\rput(2,0.9){$\bullet$}\rput(2.5,0.9){$\bullet$}\rput(3.5,0.9){$\bullet$}\rput(4,0.9){$\bullet$}
\rput(8.5,0.9){$\bullet$}\rput(5,0.9){$\bullet$}\rput(8.0,0.9){$\bullet$}\rput(5.5,0.9){$\bullet$}\rput(6.5,0.9){$\bullet$}\rput(7,0.9){$\bullet$}
\rput(0.5,0.5){$i_1$}\rput(1,0.5){$j_1$}\rput(2,0.5){$i_2$}\rput(2.5,0.5){$j_2$}\rput(3.5,0.5){$i_3$}\rput(4,0.5){$j_3$}
\rput(5,0.5){$i_4$}\rput(5.5,0.5){$j_4$}\rput(6.5,0.5){$i_5$}\rput(7,0.5){$j_5$}\rput(8,0.5){$i_6$}\rput(8.5,0.5){$j_6$}
\end{pspicture}
\label{FigRibbonGraph}
\end{figure}
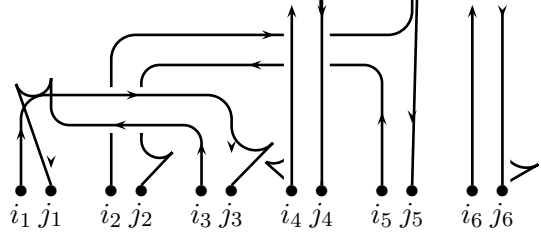

Any collection of independent graphs, $\{G_1,G_2,\ldots\}$, will be written as a sum $G_1+G_2+\ldots$.  The concatenation (or union) of two independent graphs, $G_1$ and $G_2$,   gives another graph $G=G_1G_2$.
For instance, if\\
$$
G_1=\,\psset{unit=0.8cm}\begin{pspicture}(0.5,0.8)(1,1.8)
\psline[linewidth=1pt,linearc=.3]{->}(0.5,1)(0.5,1.5)
\psline[linewidth=1pt,linearc=.4]{>-}(1,1.5)(1,1)
\rput(0.5,0.9){$\bullet$}\rput(1,0.9){$\bullet$}
\rput(0.5,0.5){$i_1$}\rput(1,0.5){$j_1$}
\end{pspicture}
\quad\text{and}\quad
G_2=\,
\psset{unit=0.8cm}\begin{pspicture}(0.5,0.8)(2.5,1.8)
\psline[ArrowInside=->,linewidth=1pt,linearc=.4](0.5,1)(0.5,2)(2.5,2)(2.5,1)
\psline[linewidth=1pt,linearc=.3,ArrowInside=->](2,1)(2,1.5)(1,1.5)(1,1)
\rput(0.5,0.9){$\bullet$}\rput(1,0.9){$\bullet$}\rput(2,0.9){$\bullet$}\rput(2.5,0.9){$\bullet$}
\rput(0.5,0.5){$i_2$}\rput(1,0.5){$j_2$}\rput(2,0.5){$i_3$}\rput(2.5,0.5){$j_3$}
\end{pspicture},
\quad \text{then}\quad G_1G_2=\psset{unit=0.8cm}\begin{pspicture}(0.5,0.8)(4,1.8)
\psline[linewidth=1pt,linearc=.3]{->}(0.5,1)(0.5,1.5)
\psline[linewidth=1pt,linearc=.4]{>-}(1,1.5)(1,1)
\rput(0.5,0.9){$\bullet$}\rput(1,0.9){$\bullet$}
\rput(0.5,0.5){$i_1$}\rput(1,0.5){$j_1$}
\psline[ArrowInside=->,linewidth=1pt,linearc=.4](2,1)(2,2)(4,2)(4,1)
\psline[linewidth=1pt,linearc=.3,ArrowInside=->](3.5,1)(3.5,1.5)(2.5,1.5)(2.5,1)
\rput(2,0.9){$\bullet$}\rput(2.5,0.9){$\bullet$}\rput(3.5,0.9){$\bullet$}\rput(4,0.9){$\bullet$}
\rput(2,0.5){$i_2$}\rput(2.5,0.5){$j_2$}\rput(3.5,0.5){$i_3$}\rput(4,0.5){$j_3$}
\end{pspicture}
$$\\

In order to take into account the presence of Grassmann odd variables,
we need to equip the vertices with the following  $\mathbb{Z}_2-$grading: The half-edge labeled by $ij$ has degree $\epsilon_i+\epsilon_j$, where $\epsilon_i$
is defined by
 $\sigma(i)=(-1)^{\epsilon_i}$, that is
\beq \label{EqEpsilon}
\epsilon_i=\epsilon(i)=
\begin{cases}
0&\text{if}\quad  i\in \{1,\ldots, p\},\\
1&\text{if}\quad  i\in \{p+1,\ldots, p+q\}.
\end{cases}
\eeq
The total degree of a graph or a subgraph $G$ is equal to the sum of the degrees of all the half-edges contained in $G$; that is, if $G$ can
 be separated into two independent subgraphs as
 $G=G_1G_2$, then $\deg G=\deg G_1+\deg G_2$ and the connection between half-edges doesn't affect their degree.


\begin{figure}[h]\caption{ {\small Weight function $W$ on the basic constituents (or propagators) of ribbon graphs.} }
\centering
\psset{unit=0.8cm}\begin{pspicture}(-4,0)(9,3)
\rput(-4,1.6){$W$}
\psline[linewidth=1pt,linearc=.3]{->}(-3.5,1)(-3.5,2.5)
\psline[linewidth=1pt,linearc=.3]{-<}(-3,1)(-3,2.5)
\rput(-3.5,1){$\bullet$}\rput(-3,1){$\bullet$}
\rput(-3.5,0.6){$i_1$}\rput(-3,0.6){$j_1$} \rput(-2,1.6){$=\hbar
Y_{i_1j_1}$}
 \rput(0,1.6){$W$}
\psline[ArrowInside=->,linewidth=1pt,linearc=.4](0.5,1)(0.5,2.5)(2.5,2.5)(2.5,1)
\psline[linewidth=1pt,linearc=.3,ArrowInside=->](2,1)(2,2)(1,2)(1,1)
\rput(0.5,1){$\bullet$}\rput(1,1){$\bullet$}\rput(2,1){$\bullet$}\rput(2.5,1){$\bullet$}
\rput(0.5,0.6){$i_1$}\rput(1,0.6){$j_1$}\rput(2,0.6){$i_2$}\rput(2.5,0.6){$j_2$}
\rput(7,1.6){$=\displaystyle \sigma(j_1)\frac{\partial }{\partial
Y_{j_1i_1}}(\hbar Y_{i_2j_2} )=\hbar
\sigma(j_1)\delta_{i_1j_2}\delta_{j_1i_2}$}
\end{pspicture}
 \label{figW1}
\end{figure}

We now introduce a weight function $W$ on graded ribbon graphs whose values belong to a Grassmann algebra over $\mathbb{C}$.
We first set $W(\emptyset)=0$, where $\emptyset$ stands for the empty graph.
Then we fix the weight of a half-edge or an edge as in Fig.\ref{figW1};
it is in correspondence with the matrix operations given in Eq.\eqref{EqAvMatDif}.
Finally, we impose the distributivity and commutation rules of Fig.\ref{figW2} and Fig.\ref{figW3}.
The recursive application of these rules allows us to reduce the evaluation of a whole diagram's weight to a product of weights of edges and half-edges.

The definition of the weight function is such that,
when applying $W$ on a ribbon graph $G$ of order $n$,  one can take any permutation $\pi$ of the half-edges and then multiply the
 weight of the new graph by the appropriate signum.
 In symbols, let $G$ be a graph labeled by $(i_1j_1,\ldots,i_nj_n)$, $\pi$ a permutation of $(1,\ldots,n)$,
 $$\pi(i_1j_1,\ldots,i_nj_n)=(i_{\pi(i_1)}j_{\pi(j_1)},\ldots,i_{\pi(i_n)}j_{\pi(j_n)}),$$ and let $\pi G$ denote the graph obtained
 by permuting the half-edges
 while keeping the links between the edges, then
  \beq W(G)=\sng (\pi)W(\pi G), \eeq  where $\sng (\pi)$ can be evaluated thanks to
  \beq Y_{i_1j_1}\cdots Y_{i_nj_n}=\sng (\pi)Y_{i_{\pi(1)}j_{\pi(1)}}\cdots
   Y_{i_{\pi(n)}j_{\pi(n)}}.\eeq
Consider for instance the graph
given in Fig.\ref{FigRibbonGraph}.  Then its weight is equal to
$$(-1)^{(\epsilon_{i_2}+\epsilon_{j_2})(\epsilon_{i_3}+\epsilon_{j_3})}(-1)^{(\epsilon_{i_4}+\epsilon_{j_4})(\epsilon_{i_5}+\epsilon_{j_5})}(-1)^{\epsilon_{j_1}+\epsilon_{j_2}}
\,\delta_{i_1j_3}\delta_{j_1i_3}\delta_{i_2j_5}\delta_{j_2i_5}\,\hbar^4\ Y_{i_4j_4}Y_{i_6j_6},$$
In the case where $p=2,q=2$ and $$(i_1,j_1,i_2,j_2,i_3,j_3,i_4,j_4,i_5,j_5,i_6,j_6)=(1,3,2,4,3,1,4,4,1,3,2,3),$$
the weight simplifies to  $-\hbar^4 Y_{44}Y_{23}$.

\begin{figure}[h] \caption{ {\small Weight function $W$ on independent subgraphs $G_1$ and $G_2$.
The signum depends on the degree of the components: $\epsilon=\deg G_1\cdot\deg G_2$ and  $\bar{\epsilon}=(\epsilon_i +\epsilon_j)\cdot\deg G_1$.  }}.
\centering
$$
\psset{unit=0.8cm}\begin{pspicture}(0,0)(10,1)
\rput(0.4,0.5){$W$}
\pscurve[linewidth=0.5pt](0.8,0)(.7,0.5)(.8,1)
 \psline[linewidth=.5pt]{-}(2,0)(1,0)(1,1)(2,1)(2,0)\rput(1.5,0.5){$G_1$}
\psline[linewidth=.5pt]{-}(3.2,0)(2.2,0)(2.2,1)(3.2,1)(3.2,0)\rput(2.7,0.5){$G_2$}
\pscurve[linewidth=0.5pt](3.4,0)(3.5,0.5)(3.4,1)
\rput(4,0.5){$=$}
\rput(4.6,0.5){$W$}
\pscurve[linewidth=0.5pt](5,0)(4.9,0.5)(5,1)
 \psline[linewidth=.5pt]{-}(6.2,0)(5.2,0)(5.2,1)(6.2,1)(6.2,0)\rput(5.7,0.5){$G_1$}
 \pscurve[linewidth=0.5pt](6.4,0)(6.5,0.5)(6.4,1)
 \rput(6.8,0.5){$W$}
\pscurve[linewidth=0.5pt](7.2,0)(7.1,0.5)(7.2,1)
 \psline[linewidth=.5pt]{-}(8.4,0)(7.4,0)(7.4,1)(8.4,1)(8.4,0)\rput(7.9,0.5){$G_2$}
 \pscurve[linewidth=0.5pt](8.6,0)(8.7,0.5)(8.6,1)
 \rput(9.0,0.5){$=$}
 \rput(9.7,0.5){$(-1)^{\epsilon}$}
\end{pspicture}
\begin{pspicture}(9.8,0)(15,1)
 \rput(10.4,0.5){$W$}
\pscurve[linewidth=0.5pt](10.8,0)(10.7,0.5)(10.8,1)
 \psline[linewidth=.5pt]{-}(12,0)(11,0)(11,1)(12,1)(12,0)\rput(11.5,0.5){$G_2$}
 \pscurve[linewidth=0.5pt](12.2,0)(12.3,0.5)(12.2,1)
 \rput(12.6,0.5){$W$}
\pscurve[linewidth=0.5pt](13,0)(12.9,0.5)(13,1)
 \psline[linewidth=.5pt]{-}(14.2,0)(13.2,0)(13.2,1)(14.2,1)(14.2,0)\rput(13.7,0.5){$G_1$}
 \pscurve[linewidth=0.5pt](14.4,0)(14.5,0.5)(14.4,1)
\end{pspicture}
$$

$$\psset{unit=0.8cm}\begin{pspicture}(2,0)(9.5,3.5)
\rput(2.4,2){$W$}
\pscurve[linewidth=0.5pt](2.8,3)(2.7,2)(2.8,1)\pscurve[linewidth=0.5pt](6.7,3)(6.8,2)(6.7,1)
 \psline[linewidth=.5pt]{-}(5.5,1)(6.5,1)(6.5,3)(5.5,3)(5.5,1)\rput(6,2){$G_2$}
 \psline[linewidth=.5pt]{-}(4,1)(5,1)(5,2)(4,2)(4,1)\rput(4.5,1.5){$G_1$}
\pscustom[linewidth=8pt,linecolor=white,fillstyle=solid,fillcolor=white]{%
\psline[linewidth=1pt,linearc=.4,ArrowInside=->](3,1)(3,3)(5.5,3)
\psline[linewidth=1pt,linearc=.3,ArrowInside=->](5.5,2.5)(3.5,2.5)(3.5,1)}
\psline[linewidth=1pt,linearc=.4,ArrowInside=->](3,1)(3,3)(5.5,3)
\psline[linewidth=1pt,linearc=.3,ArrowInside=->](5.5,2.5)(3.5,2.5)(3.5,1)
\rput(3,1){$\bullet$}\rput(3.5,1){$\bullet$}
\rput(3,0.6){$i$}\rput(3.5,0.6){$j$}
 \rput(8,2){$\displaystyle =(-1)^{\bar{\epsilon}}\,W$}
 \end{pspicture}
 \psset{unit=0.8cm}\begin{pspicture}(9,0)(15,3.5)
 \pscurve[linewidth=0.5pt](8.9,3)(8.8,2)(8.9,1)
\psline[linewidth=.5pt]{-}(9,1.5)(10,1.5)(10,2.5)(9,2.5)(9,1.5)\rput(9.5,2){$G_1$}
\pscurve[linewidth=0.5pt](10.1,3)(10.2,2)(10.1,1)
\rput(10.45,2){$W$}
\pscurve[linewidth=0.5pt](10.8,3)(10.7,2)(10.8,1)
 \psline[linewidth=.5pt]{-}(12.5,1)(13.5,1)(13.5,3)(12.5,3)(12.5,1)\rput(13,2){$G_2$}
 \pscustom[linewidth=8pt,linecolor=white,fillstyle=solid,fillcolor=white]{%
\psline[linewidth=1pt,linearc=.4,ArrowInside=->](11,1)(11,3)(12.5,3)
\psline[linewidth=1pt,linearc=.3,ArrowInside=->](12.5,2.5)(11.5,2.5)(11.5,1)}
\psline[linewidth=1pt,linearc=.4,ArrowInside=->](11,1)(11,3)(12.5,3)
\psline[linewidth=1pt,linearc=.3,ArrowInside=->](12.5,2.5)(11.5,2.5)(11.5,1)
\pscurve[linewidth=0.5pt](13.7,3)(13.8,2)(13.7,1)
\rput(11,1){$\bullet$}\rput(11.5,1){$\bullet$}
\rput(11,0.6){$i$}\rput(11.5,0.6){$j$}
\end{pspicture}
$$\label{figW2}
\end{figure}



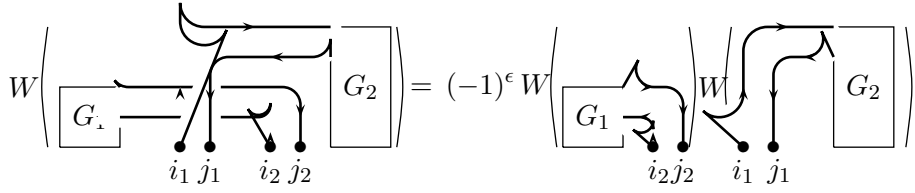
\begin{figure}[h] \caption{ {\small Weight $W$ associated to a crossing of ribbons.
The signum depends on the degree of each ribbon, that
is, ${\epsilon}=(\epsilon_{i_1} +\epsilon_{j_1})(\epsilon_{i_2} +\epsilon_{j_2})$. } }
\centering
\psset{unit=0.8cm}
\begin{pspicture}(1,0)(9,4)
\rput(0.4,2){$W$}
\pscurve[linewidth=0.5pt](0.8,3)(.7,2)(.8,1)\pscurve[linewidth=0.5pt](6.6,3)(6.7,2)(6.6,1)
 \psline[linewidth=.5pt]{-}(5.5,1)(6.5,1)(6.5,3)(5.5,3)(5.5,1)\rput(6,2){$G_2$}
 \psline[linewidth=.5pt]{-}(1,1)(2,1)(2,2)(1,2)(1,1)\rput(1.5,1.5){$G_1$}
 \pscustom[linewidth=8pt,linecolor=white,fillstyle=solid,fillcolor=white]{%
\psline[linewidth=1pt,linearc=.4,ArrowInside=->](4.5,1)(4.5,1.5)(2,1.5)
\psline[linewidth=1pt,linearc=.3,ArrowInside=->](2,2)(5,2)(5,1)}
\psline[linewidth=1pt,linearc=.2,ArrowInside=->](4.5,1)(4.5,1.5)(2,1.5)
\psline[linewidth=1pt,linearc=.3,ArrowInside=->](2,2)(5,2)(5,1)
\pscustom[linewidth=8pt,linecolor=white,fillstyle=solid,fillcolor=white]{%
\psline[linewidth=1pt,linearc=.4,ArrowInside=->](3,1)(3,3)(5.5,3)
\psline[linewidth=1pt,linearc=.3,ArrowInside=->](5.5,2.5)(3.5,2.5)(3.5,1)}
\psline[linewidth=1pt,linearc=.4,ArrowInside=->](3,1)(3,3)(5.5,3)
\psline[linewidth=1pt,linearc=.3,ArrowInside=->](5.5,2.5)(3.5,2.5)(3.5,1)
\rput(3,1){$\bullet$}\rput(3.5,1){$\bullet$}\rput(4.5,1){$\bullet$}\rput(5,1){$\bullet$}
\rput(3,0.6){$i_1$}\rput(3.5,0.6){$j_1$}\rput(4.5,0.6){$i_2$}\rput(5,0.6){$j_2$}
 \rput(8,2){$\displaystyle =\,(-1)^{{\epsilon}}\,W$}
\end{pspicture}
 \psset{unit=0.8cm}
\begin{pspicture}(8.8,0)(15,4)
\psline[linewidth=.5pt]{-}(9,1)(10,1)(10,2)(9,2)(9,1)\rput(9.5,1.5){$G_1$}
 \pscustom[linewidth=8pt,linecolor=white,fillstyle=solid,fillcolor=white]{%
\psline[linewidth=1pt,linearc=.4,ArrowInside=->]{->}(10.5,1)(10.5,1.5)(10,1.5)
\psline[linewidth=1pt,linearc=.3,ArrowInside=->]{->}(10,2)(11,2)(11,1)}
\psline[linewidth=1pt,linearc=.2,ArrowInside=->](10.5,1)(10.5,1.5)(10,1.5)
\psline[linewidth=1pt,linearc=.3,ArrowInside=->](10,2)(11,2)(11,1)
 \pscurve[linewidth=0.5pt](8.9,3)(8.8,2)(8.9,1)
\pscurve[linewidth=0.5pt](11.1,3)(11.2,2)(11.1,1) \rput(11.5,2){$W$}
\pscurve[linewidth=0.5pt](11.8,3)(11.7,2)(11.8,1)
 \psline[linewidth=.5pt]{-}(13.5,1)(14.5,1)(14.5,3)(13.5,3)(13.5,1)\rput(14,2){$G_2$}
 \pscustom[linewidth=8pt,linecolor=white,fillstyle=solid,fillcolor=white]{%
\psline[linewidth=1pt,linearc=.4,ArrowInside=->](12,1)(12,3)(13.5,3)
\psline[linewidth=1pt,linearc=.3,ArrowInside=->](13.5,2.5)(12.5,2.5)(12.5,1)}
\psline[linewidth=1pt,linearc=.4,ArrowInside=->](12,1)(12,3)(13.5,3)
\psline[linewidth=1pt,linearc=.3,ArrowInside=->](13.5,2.5)(12.5,2.5)(12.5,1)
\pscurve[linewidth=0.5pt](14.7,3)(14.8,2)(14.7,1)
\rput(10.5,1){$\bullet$}\rput(11,1){$\bullet$}\rput(12,1){$\bullet$}\rput(12.5,1){$\bullet$}
\rput(10.6,0.6){$i_2$}\rput(11,0.6){$j_2$}\rput(12,0.6){$i_1$}\rput(12.6,0.6){$j_1$}
\end{pspicture}
\label{figW3}
\end{figure}


\begin{proposition}\label{PropBijection} Let $\Sigma G$ denote the sum of all possible ribbon graphs of degree $n$ labeled by $i_1j_i,\ldots,i_nj_n$.  In order words,
let $\Sigma G$ be equal to
\beq\label{eqGraph}
\psset{unit=0.8cm}
\begin{pspicture}(0,0.5)(5,1.8)
\psline[linewidth=1pt,linearc=.3]{->}(0.5,1)(0.5,1.8)
\psline[linewidth=1pt,linearc=.4]{>-}(1,1.8)(1,1)
\psline[linewidth=1pt,linearc=.3]{->}(2,1)(2,1.8)
\psline[linewidth=1pt,linearc=.4]{>-}(2.5,1.8)(2.5,1)
 \psline[linewidth=1pt,linearc=.3]{->}(4,1)(4,1.8)
\psline[linewidth=1pt,linearc=.4]{>-}(4.5,1.8)(4.5,1)
\rput(0.5,0.9){$\bullet$}\rput(1,0.9){$\bullet$}\rput(2,0.9){$\bullet$}\rput(2.5,0.9){$\bullet$}\rput(3.5,0.9){$\dots$}\rput(4,0.9){$\bullet$}
\rput(4.5,0.9){$\bullet$}
\rput(0.5,0.5){$i_1$}\rput(1,0.5){$j_1$}\rput(2,0.5){$i_2$}\rput(2.5,0.5){$j_2$}\rput(4,0.5){$i_n$}\rput(4.5,0.5){$j_n$}
\end{pspicture}
\eeq
plus all distinct graphs obtained from the latter by connecting at least 2 half-edges and at most $n$ or $n-1$ of them,
depending on wether $n$ is even or odd, respectively.
Then
$$\ee{-\frac{\hbar}{2}\str Y^2} \Big\langle M_{i_1 j_1}\cdots M_{i_n j_n}\ee{\str MY}\Big\rangle_{M\in GU(p|q)}=W(\Sigma G)$$
 \end{proposition}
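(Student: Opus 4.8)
The plan is to start from Proposition~\ref{PropAvMatDif}, which already expresses the left-hand side as the composition of the operators
\beq
D_k=\sigma(j_k)\frac{\partial}{\partial Y_{j_k i_k}}+\hbar Y_{i_k j_k}
\eeq
applied to the constant $1$, and then to expand this composition and match each resulting term against a ribbon graph in $\Sigma G$. First I would read the composition $D_1\circ\cdots\circ D_n\circ 1$ from right to left: the operator $D_n$ produces $\hbar Y_{i_n j_n}$, and thereafter each $D_k$ contributes either its multiplicative part $\hbar Y_{i_k j_k}$, which appends a new free $Y$-factor, or its derivative part $\sigma(j_k)\partial/\partial Y_{j_k i_k}$, which by the Leibniz rule must hit one of the $Y$-factors already created by a later operator $D_l$ with $l>k$. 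Since $\partial Y_{i_l j_l}/\partial Y_{j_k i_k}=\delta_{j_k i_l}\delta_{i_k j_l}$, such a contraction yields exactly the edge weight $\hbar\,\sigma(j_k)\delta_{i_k j_l}\delta_{j_k i_l}$ of Fig.~\ref{figW1}, while an unmatched multiplicative part yields the half-edge weight $\hbar Y_{i_k j_k}$.

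This sets up a bijection between the nonzero terms of the expansion and the ribbon graphs of $\Sigma G$: a free half-edge $k$ corresponds to an unmatched multiplicative factor, and an edge $\{k,l\}$ with $k<l$ corresponds to the derivative of $D_k$ contracting the $Y$-factor of $D_l$. The constraint $k<l$, together with the fact that a factor once contracted is no longer available, reproduces precisely the rule that each half-edge joins at most one other and that orientations are respected; no term can connect half-edge $k$ to an already-paired half-edge because its $Y$-factor has been consumed. I would formalize this by induction on $n$, peeling off the leftmost operator $D_1$ and applying it to $W(\Sigma G')$, the weight of the sum of all ribbon graphs on the half-edges $2,\ldots,n$ furnished by the inductive hypothesis. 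The multiplicative part of $D_1$ then generates all graphs of $\Sigma G$ in which half-edge $1$ is free, via the distributivity rule of Fig.~\ref{figW2}, while its derivative part generates all graphs in which half-edge $1$ is joined to some free half-edge $l$ of $G'$; together these are exactly the graphs of $\Sigma G$.

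The hard part will be the sign bookkeeping. When the graded derivative in $D_1$, of degree $\epsilon_{i_1}+\epsilon_{j_1}$, is carried by the Leibniz rule past the intervening factors to reach the one it contracts, it accumulates a factor $(-1)^{(\epsilon_{i_1}+\epsilon_{j_1})\deg G}$ for each subgraph $G$ it crosses. I would verify that these are precisely the signs $(-1)^{\bar\epsilon}$ and $(-1)^{\epsilon}$ of the commutation and crossing rules in Figs.~\ref{figW2} and~\ref{figW3}, so that the graded Leibniz expansion coincides term by term with the recursive evaluation of $W$. This rests on the consistency of $W$ under relabeling, namely $W(G)=\sng(\pi)W(\pi G)$ with $\sng(\pi)$ the sign of the permutation of the graded variables $Y_{i_k j_k}$; granting this, the base case $n=1$ gives $\hbar Y_{i_1 j_1}=W(\Sigma G)$ directly, and the induction closes.
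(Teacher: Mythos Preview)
Your proposal is correct and follows essentially the same approach as the paper. The paper formalizes the combinatorics by introducing explicit graph operators $A_{ij}$ (append a half-edge) and $B_{ij,k\ell}$ (connect two half-edges), writes $\Sigma G$ as a product of such operators acting on the empty graph, and then shows that $W$ of this product reduces to the same operator composition $(\hbar Y_{i_kj_k}+\sigma(j_k)\partial/\partial Y_{j_ki_k})\circ\cdots\circ 1$ that Proposition~\ref{PropAvMatDif} provides; your argument runs the same identification in the opposite direction by expanding the differential operators via the graded Leibniz rule and matching terms to graphs, with the same inductive peeling-off of the leftmost operator and the same sign check against the rules of Figs.~\ref{figW2} and~\ref{figW3}.
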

\begin{proof}Let $G$ be an arbitrary ribbon graph and let $1=\mathrm{id}$ be the left identity operator on graphs, which means $1\circ G$=$G$.  We introduce $A_{ij}$, a
noncommutative operator acting on $G$ by creating a half-edge, labeled  by $ij$, to the left of $G$.
Obviously, the graph $O_n$ given in Eq.\eqref{eqGraph} can be written as
$ A_{i_1j_1}\ldots A_{i_nj_n}\emptyset.$ Let also $B_{ij,k\ell}$ be a noncommutative operator
acting on $G$ by connecting the half-edge labeled by $ij$ to that labeled by $k\ell$, the first half-edge being located to the left of the second.
We consider $B_{ij}$ as a nilpotent operator, which means $B_{ij}^2G=\emptyset$, and set $B_{ij,k\ell}G=\emptyset$ if $G$ doesn't contain a half-edge labeled by $ij$ (or $k\ell$) or if all half-edges
$ij$ (or $k\ell$) in $G$ are already connected.
Thus, every ribbon graph of degree $n$ can be uniquely written as a
polynomial in operators $B_{i,j}$ acting from the left on $O_n=A_{i_1j_1}\ldots A_{i_nj_n}\emptyset$.  In particular, the sum $\Sigma G$ of all graphs obtained from $O_n$ by connecting the half-edges in all possible ways, is given by
\begin{multline}\Sigma G= (1+B_{i_1j_1,i_2j_2})(1+B_{i_1j_1,i_3j_3}) \cdots (1+B_{i_1j_1,i_nj_n})\\
 \circ (1+B_{i_2j_2,i_3j_3})\cdots (1+B_{i_2j_3,i_nj_n})\cdots (1+B_{i_{n-1}j_{n-1},i_{n}j_{n}}) O_n.\end{multline}
>From the properties of the connection operators $B$, we can simplify the latter expression as follows:
\begin{multline}\label{EqPropSigmaG}\Sigma G= (1+B_{i_1j_1,i_2j_2}+B_{i_1j_1,i_3j_3} \ldots B_{i_1j_1,i_nj_n})\\
 \circ (1+B_{i_2j_2,i_3j_3}+\ldots+B_{i_2j_3,i_nj_n})\cdots (1+B_{i_{n-1}j_{n-1},i_{n}j_{n}}) O_n.\end{multline}

Now we consider the action weight function on the graph $\Sigma G$.  We first note that
$$ W(\sum_{\ell>k}B_{i_kj_k,i_lj_l}O_n)=W(A_{i_1j_1}\ldots A_{i_{k-1}j_{k-1}})\cdot W( \sum_{\ell>k}B_{i_kj_k,i_lj_l} A_{i_1j_1}\ldots A_{i_nj_n}\emptyset).$$  The latter formula and Eq.\eqref{EqPropSigmaG} then imply
\begin{multline}\label{EqPropSigmaG2}W(\Sigma G)= W\Big((1+B_{i_1j_1,i_2j_2}+B_{i_1j_1,i_3j_3} \ldots B_{i_1j_1,i_nj_n})A_{i_1j_1}\\
 \circ (1+B_{i_2j_2,i_3j_3}+\ldots+B_{i_2j_3,i_nj_n})A_{i_2j_2}\cdots (1+B_{i_{n-1}j_{n-1},i_{n}j_{n}})A_{i_nj_n}\emptyset\Big),\end{multline}
 Using the fact that $\sum_{\ell> k}B_{i_kj_k,i_\ell j_\ell}A_{i_kj_k}G=\sum_{\ell\neq k}B_{i_kj_k,i_\ell j_\ell}A_{i_kj_k}G$ when $G$ doesn't contain half-edges labeled by $i_\ell j_\ell$ such that $\ell < k$, we get
\beq W(\Sigma G)=W(D_{i_1j_1} \cdots D_{i_nj_n}\emptyset), \quad D_{i_kj_k}=A_{i_kj_k}+\sum_{\ell\neq k}B_{i_kj_k,i_\ell j_\ell}A_{i_kj_k},\eeq which corresponds the successive concatenation of a half-edge followed by its connection (or not) to all possible half-edges contained in the graph  on its right.   By making use of the commutation rules of Fig.\ref{figW2} and \ref{figW3}, one easily verifies that
$$ W(\Sigma G)=W(\cdots D_{i_kj_k} D_{i_\ell j_\ell}\cdots\emptyset)=(-1)^{(\epsilon_{i_k} +\epsilon_{j_k})(\epsilon_{i_\ell} +\epsilon_{j_\ell})}
W(\cdots D_{i_\ell j_\ell} D_{i_kj_k}\cdots\emptyset)$$ and $$ W(D_{i_kj_k}D_{i_{k+1}j_{k+1}} \cdots D_{i_nj_n}\emptyset)=\left(\hbar Y_{i_kj_k}+\sigma(j_k)\frac{\partial}{\partial Y_{j_ki_k}}\right)\circ W(D_{i_{k+1}j_{k+1}} \cdots D_{i_nj_n}\emptyset).$$
Thus by induction, $$W(\Sigma G)= \left(\hbar Y_{i_1j_1}+\sigma(j_1)\frac{\partial}{\partial Y_{j_1i_1}}\right)\circ\cdots\circ \left(\hbar Y_{i_nj_n}+\sigma(j_n)\frac{\partial}{\partial Y_{j_ni_n}}\right)\circ 1,$$and the proposition follows from the comparison of the latter equation and Lemma \ref{PropAvMatDif}.\end{proof}

An immediate consequence of the latter result is a graded version of the Wick
formula, which, in the context of matrix models, allows to express
the correlation of  $2k$ matrix elements in terms of a sum of
monomials involving $k$ correlations of 2 elements.

In order to present the formula in a compact form, we need to introduce some more notations.
Let $a=(a_1,\ldots,a_k)$ and $b=(b_1,\ldots,b_k)$ be two disjoint increasing sequences of integers in $\{1,\ldots,2k\}$.  By
$\mathcal{P}_{a,b}=\{(i_{a_1}j_{a_1},i_{b_1}j_{b_1}),\ldots,(i_{a_k}j_{a_k},i_{b_k}j_{b_k})\}$  we denote a set of pairings of the indices ${i_1 j_1},\ldots, i_{2k}
j_{2k}$.  Let $\pi_{a,b}\in S_{2k}$  be the permutation such that
$$(a_1,b_1,\ldots,a_k, b_k) =\pi_{a,b}(1,2,\ldots, 2k-1, 2k).$$
Define the signum of $\pi_{a,b}$ by
$$ M_{i_1 j_1}\cdots M_{i_{2k} j_{2k}}= \mathrm{sgn}(\pi_{a,b})M_{i_{a_1}j_{a_1}}M_{i_{b_1}j_{b_1}}\cdots
M_{i_{a_k}j_{a_k}}M_{i_{b_k}j_{b_k}}.$$

\begin{corollary} Let $\Big\langle M_{i_1 j_1}\cdots M_{i_{n} j_{n}} \Big\rangle$ denote the expectation
value given by the left-hand-side  Eq.\eqref{EqAvMatDif} with $Y=0$.   Then,
$$\displaystyle \Big\langle M_{i_k j_k}M_{i_{\ell}j_{n\ell}} \Big\rangle=\hbar\sigma(j_k)\delta_{i_k,j_\ell}\delta_{j_j,i_\ell}.$$
Moreover, if $k$ is a nonnegative integer,
$$ \displaystyle  \Big\langle M_{i_1 j_1}\cdots M_{i_{2k+1} j_{2k+1}} \Big\rangle=0 \displaystyle $$
\beq \Big\langle M_{i_1 j_1}\cdots M_{i_nj_{n}} \Big\rangle=\sum \mathrm{sgn}(\pi_{a,b})
 \Big\langle
M_{i_{a_1}j_{a_1}}M_{i_{b_1}j_{b_1}}\Big\rangle \cdots \Big\langle
M_{i_{a_k}j_{a_k}}M_{i_{b_k}j_{b_k}}\Big\rangle
, \eeq
where the sum runs over all distinct sets of pairings $\mathcal{P}_{a,b}$ of ${i_1 j_1},\ldots, i_{2k}j_{2k}$.
\end{corollary}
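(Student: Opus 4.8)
The plan is to specialize Proposition \ref{PropBijection} to $Y=0$ and to read the three assertions directly off the ribbon graphs that survive this limit. Recall from Figure \ref{figW1} that an unconnected half-edge labeled $ij$ carries the weight $\hbar Y_{ij}$, whereas a completed edge joining $i_kj_k$ to $i_\ell j_\ell$ (with $k<\ell$) carries the weight $\hbar\,\sigma(j_k)\,\delta_{i_kj_\ell}\delta_{j_ki_\ell}$, which does not involve $Y$. Hence, setting $Y=0$ in the identity $\ee{-\frac{\hbar}{2}\str Y^2}\langle M_{i_1j_1}\cdots M_{i_nj_n}\ee{\str MY}\rangle=W(\Sigma G)$ collapses the left-hand side to $\langle M_{i_1j_1}\cdots M_{i_nj_n}\rangle$, while on the right-hand side every graph containing at least one free half-edge is annihilated. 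Only the fully connected graphs---those in which all $n$ half-edges are matched into edges---can contribute.

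The two-point function follows at once: for $n=2$ the sole fully connected graph is the single edge joining $i_1j_1$ to $i_2j_2$, so $\langle M_{i_1j_1}M_{i_2j_2}\rangle$ equals its edge weight $\hbar\,\sigma(j_1)\,\delta_{i_1j_2}\delta_{j_1i_2}$. For odd $n=2k+1$ no perfect matching of the half-edges exists, so every graph in $\Sigma G$ retains a free half-edge and $W(\Sigma G)\big|_{Y=0}=0$, giving the vanishing of all odd correlators.

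For even $n=2k$ I would identify each surviving graph with a pairing $\mathcal{P}_{a,b}$ of the indices. Given such a pairing, I use the permutation property $W(G)=\sng(\pi)\,W(\pi G)$ established through the commutation rules of Figures \ref{figW2} and \ref{figW3} to reorder the half-edges so that the two members of each pair become adjacent in the standard arrangement $a_1b_1a_2b_2\cdots a_kb_k$; the sign accumulated in this step is precisely $\sng(\pi_{a,b})$ read off from the graded reordering $Y_{i_1j_1}\cdots Y_{i_{2k}j_{2k}}=\sng(\pi_{a,b})\,Y_{i_{a_1}j_{a_1}}Y_{i_{b_1}j_{b_1}}\cdots$. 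Once the pairs are adjacent, the distributivity rule of Figure \ref{figW2} factorizes the weight of the connected graph into the product of its $k$ edge weights, each of which is a two-point function. The contribution of the pairing is therefore $\sng(\pi_{a,b})\prod_{r=1}^{k}\langle M_{i_{a_r}j_{a_r}}M_{i_{b_r}j_{b_r}}\rangle$, and summing over all distinct pairings produces the stated Wick formula.

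The only delicate point is the sign bookkeeping: one must check that the sign generated by disentangling the ribbon structure matches the $\sng(\pi_{a,b})$ appearing in the definition preceding the corollary. This is not a genuine obstacle, because both signs are controlled by the same $\mathbb{Z}_2$-grading in which a pair $ij$ has degree $\epsilon_i+\epsilon_j$---the first arising from reordering the $Y$'s inside $W$, the second from reordering the possibly Grassmann-valued entries $M_{ij}$. Since the two gradings coincide, the signs agree term by term.
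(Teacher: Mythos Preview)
Your proof is correct and follows precisely the route the paper indicates: the corollary is presented there as ``an immediate consequence'' of Proposition~\ref{PropBijection}, and you have spelled out that consequence by specializing $W(\Sigma G)$ to $Y=0$, identifying the surviving fully paired graphs with the pairings $\mathcal P_{a,b}$, and extracting the sign via the graded permutation rule $W(G)=\sng(\pi)W(\pi G)$. The final remark that the $\mathbb Z_2$-gradings of $Y_{ij}$ and $M_{ij}$ coincide is exactly what pins down $\sng(\pi_{a,b})$, so nothing is missing.
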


We are ready to turn our attention to the expectation value of products of supertraces.  From Proposition \ref{PropBijection} and the definition of the supertrace, we know that
\beq\label{EqFormulaExpectation}\ee{-\frac{\hbar}{2}\str Y^2} \Big\langle \str M^n \ee{\str MY}\Big\rangle_{M\in GU(p|q)}=\sum_{1\leq i_1,\ldots, i_n\leq p+q} \sigma(i_1) \, W(\Sigma T),
\eeq
where $\Sigma T$ denotes the sum of all distinct graphs that we can get by connecting by pairs the half-edges of the following graph:
{ \beq\label{eqGraphT}
\psset{unit=0.85cm}
\begin{pspicture}(0,0.5)(6,1.8)
\psline[linewidth=1pt,linearc=.3]{->}(0.5,1)(0.5,1.8)
\psline[linewidth=1pt,linearc=.4]{>-}(1,1.8)(1,1)
\psline[linewidth=1pt,linearc=.3]{->}(2,1)(2,1.8)
\psline[linewidth=1pt,linearc=.4]{>-}(2.5,1.8)(2.5,1)
 \psline[linewidth=1pt,linearc=.3]{->}(4,1)(4,1.8)
\psline[linewidth=1pt,linearc=.4]{>-}(4.5,1.8)(4.5,1)
 \psline[linewidth=1pt,linearc=.3]{->}(5.5,1)(5.5,1.8)
\psline[linewidth=1pt,linearc=.4]{>-}(6,1.8)(6,1)
 \psline[linewidth=1pt,linearc=.3,ArrowInside=->,linestyle=dotted](4.5,1)(5.5,1)
\psline[linewidth=1pt,linearc=.3,ArrowInside=->,linestyle=dotted](6,1)(6,0.4)(0.5,0.4)(0.5,1)
 \psline[linewidth=1pt,linearc=.3,ArrowInside=->,linestyle=dotted](1,1)(2,1)
  \rput(3.3,1){$\ldots$}
\rput(0.5,1){$\bullet$}\rput(1,1){$\bullet$}\rput(2,1){$\bullet$}\rput(2.5,1){$\bullet$}\rput(4,1){$\bullet$}\rput(4.5,1){$\bullet$}
\rput(6,1){$\bullet$}\rput(5.5,1){$\bullet$}
\rput(0.3,1.3){$i_1$}\rput(1.3,1.3){$i_2$}\rput(2,0.7){$i_2$}\rput(2.5,0.7){$i_3$}\rput(4,0.7){$i_{n-1}$}\rput(4.6,0.7){$i_n$}
\rput(5.3,1.3){$i_n$}\rput(6.2,1.3){$i_1$}
\end{pspicture}
\eeq}
Note that doted arrows
$
\quad \psset{unit=0.8cm}\begin{pspicture}(0,.7)(1,1)\psline[linewidth=1pt,linearc=.3,ArrowInside=->,linestyle=dotted](0,1)(1,1)
\rput(0,1){$\bullet$}\rput(1,1){$\bullet$}
\rput(0,.7){$i_k$}\rput(1,.7){$i_k$}\end{pspicture}\quad$
have been used in order to ease the identification of components that share a same index.  We set $W\left(\;\psset{unit=0.8cm}\begin{pspicture}(0,.7)(1,1)\psline[linewidth=1pt,linearc=.3,ArrowInside=->,linestyle=dotted](0,1)(1,1)
\rput(0,1){$\bullet$}\rput(1,1){$\bullet$}
\rput(0,.7){$i_k$}\rput(1,.7){$i_k$}\end{pspicture}\;\right)=1$.
A ribbon graph or a subgraph $T$ is said to be of trace-type if it is labeled like the graph in Eq.\eqref{eqGraphT},
which means that its indices follow the pattern $$i_ki_{k+1}, \quad i_{k+1}i_{k+2}, \quad i_{k+2}i_{k+3},\quad \ldots,\quad i_{k+\ell}i_k.$$
The degree of a graph of trace-type depends only on the degree of the first  and last labels, for the intermediate indices are always repeated.
 In order words, if $T$ is of trace-type,
  \begin{equation}\label{EqDegTrace}
\deg \left(\quad \psset{unit=0.85cm}\begin{pspicture}(2.5,1.2)(3.5,1.8)
 \psline[linewidth=.5pt]{-}(2.5,1)(2.5,1.8)(3.5,1.8)(3.5,1)(2.5,1)\rput(3,1.4){$T$}
\rput(2.5,1){$\bullet$}\rput(3.5,1){$\bullet$}
\rput(2.5,0.7){$i_k$}\rput(3.5,0.7){$\;i_{\ell}$}
\end{pspicture}\quad\right)= \epsilon_{i_k}+\epsilon_{i_\ell}
\end{equation}\\
Notice that we adopt the following convention: if $T_j$ is an empty subgraph of a trace-type graph,  then
\begin{equation}
 \psset{unit=0.85cm}\begin{pspicture}(2.5,1.2)(3.5,1.8)
 \psline[linewidth=.5pt]{-}(2.5,1)(2.5,1.8)(3.5,1.8)(3.5,1)(2.5,1)\rput(3,1.4){$T_j$}
\rput(2.5,1){$\bullet$}\rput(3.5,1){$\bullet$}
\rput(2.5,0.7){$i_k$}\rput(3.5,0.7){$\;i_{k}$}
\end{pspicture}\quad = \quad \psset{unit=0.85cm} \begin{pspicture}(2.5,1.2)(3.5,1.8)
 \psline[linewidth=1pt,linearc=.3,ArrowInside=->,linestyle=dotted](2.5,1) (3.5,1)
\rput(2.5,1){$\bullet$}\rput(3.5,1){$\bullet$}
\rput(2.5,0.7){$i_k$}\rput(3.5,0.7){$\;i_{k}$}
\end{pspicture}
\end{equation}\\
so that the weight of $T_j$ is equal to one.

 In order to calculate the contribution of graphs of trace type,  we need to establish a few rules that allow to decompose a graph in its independent parts.
 From the definition of the weight $W$ and Eq.\eqref{EqDegTrace}, we see that if $T_1$ and $T_2$ stand for 2 subgraphs of trace type (independent or not), then
 \begin{equation}\label{EqCommuteT}
 W\left(\quad \psset{unit=0.85cm}\begin{pspicture}(2.5,1.2)(5,1.8)
 \psline[linewidth=.5pt]{-}(2.5,1)(2.5,1.8)(3.5,1.8)(3.5,1)(2.5,1)\rput(3,1.4){$T_1$}
  \psline[linewidth=1pt,linearc=.3,linestyle=dotted]{->}(3.5,1) (3.9,1)
  \psline[linewidth=.5pt]{-}(4,1)(4,1.8)(5,1.8)(5,1)(4,1)\rput(4.5,1.4){$T_2$}
\rput(2.5,1){$\bullet$}\rput(3.5,1){$\bullet$}
\rput(2.5,0.7){$i_k$}\rput(3.5,0.7){$\;i_{\ell}$}
\rput(4,1){$\bullet$}\rput(5,1){$\bullet$}
\rput(4,0.7){$i_\ell$}\rput(5,0.7){$\;i_{m}$}
\end{pspicture}\quad\right)  =(-1)^{(\epsilon_k+\epsilon_\ell)(\epsilon_\ell+\epsilon_m)}
 W\left(\quad
 \psset{unit=0.85cm}\begin{pspicture}(2.5,1.2)(5,1.8)
  \psline[linewidth=1pt,linearc=.3,ArrowInside=->,linestyle=dotted](5,1)(5,0.4)(2.5,0.4)(2.5,1)
 \psline[linewidth=.5pt]{-}(2.5,1)(2.5,1.8)(3.5,1.8)(3.5,1)(2.5,1)\rput(3,1.4){$T_2$}
  \psline[linewidth=.5pt]{-}(4,1)(4,1.8)(5,1.8)(5,1)(4,1)\rput(4.5,1.4){$T_1$}
\rput(2.5,1){$\bullet$}\rput(3.5,1){$\bullet$}
\rput(2.3,1.2){$i_\ell$}\rput(3.5,0.7){$\;i_{m}$}
\rput(4,1){$\bullet$}\rput(5,1){$\bullet$}
\rput(4,0.7){$i_k$}\rput(5.2,1.2){$\;i_{\ell}$}
\end{pspicture}\quad\right)
\end{equation}\\
Moreover,
{\small
\begin{equation}\label{EqNonZeroW}
W\left(\quad \psset{unit=0.85cm}\begin{pspicture}(2.5,1.5)(5.5,2.5)
 \psline[linewidth=.5pt]{-}(3.5,1)(3.5,1.8)(4.5,1.8)(4.5,1)(3.5,1)\rput(4,1.4){$T_2$}
   \psline[linewidth=1pt,linearc=.3,linestyle=dotted]{->}(3,1) (3.4,1)
     \psline[linewidth=1pt,linearc=.3,linestyle=dotted]{->}(4.5,1) (4.9,1)
\psline[linewidth=1pt,linearc=.4,ArrowInside=-<](5.5,1)(5.5,2.5)(2.5,2.5)(2.5,1)
\psline[linewidth=1pt,linearc=.3,ArrowInside=-<](3,1)(3,2)(5,2)(5,1)
\rput(2.5,1){$\bullet$}\rput(3,1){$\bullet$}\rput(3.5,1){$\bullet$}
\rput(4.5,1){$\bullet$}\rput(5,1){$\bullet$}\rput(5.5,1){$\bullet$}
\rput(2.5,0.7){$i_k$}\rput(3,0.7){$\;i_{k+1}$}\rput(3.6,0.7){$\;\;\;i_{k+1}$}\rput(4.5,0.7){$i_{\ell}$}
\rput(5,0.7){$i_{\ell}$}\rput(5.5,0.7){$\;i_{\ell+1}$}
\end{pspicture}\quad\right)\neq 0\quad \Longrightarrow \quad i_\ell=i_{k+1},\quad i_{\ell+1}=i_k;
\end{equation}
}

\n This in turn implies that the previous subgraph has a nonzero contribution only if both $T_2$ and the edge linking $i_ki_{k+1}$ and $i_{\ell}i_{\ell+1}$
has a degree equal to zero.  In order words, these subgraphs contribute only if they are bosonic,
so that we can commute all components of that type when evaluating the weight.  By making use of equations \eqref{EqDegTrace},\eqref{EqCommuteT},
\eqref{EqNonZeroW} and the fact that
$(-1)^{(\epsilon_k+\epsilon_\ell)(\epsilon_k+\epsilon_\ell)}=(-1)^{(\epsilon_k+\epsilon_\ell)}$, one can establish the following.

\begin{lemma} \label{LemmaTraceGraph} Let $T_1, \ldots, T_5$ be  trace-type ribbon graphs  that may be dependent or not. Then the equations given in Figures
\ref{FigComTrace}, \ref{FigEffectEdgeTrace}, and  \ref{FigCrossingEdgesTrace} hold.  \end{lemma}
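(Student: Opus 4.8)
The plan is to reduce each of the three figure-identities of Lemma~\ref{LemmaTraceGraph} to the elementary weight rules already established in Figures~\ref{figW1}, \ref{figW2}, and \ref{figW3}, specialized to the trace-type setting by means of the degree formula \eqref{EqDegTrace}. The organizing observation is that for a trace-type subgraph the $\mathbb{Z}_2$-degree collapses to $\epsilon_{i_k}+\epsilon_{i_\ell}$, depending only on its two outermost indices: every internal index along a chain $i_k i_{k+1},\, i_{k+1}i_{k+2},\,\ldots$ is repeated and hence cancels in pairs, contributing nothing to any sign. Thus all signs that arise will be expressible through the degrees of the external legs alone, which is exactly the form displayed in the three figures.

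First, for the commutation identity of Figure~\ref{FigComTrace}, I would simply iterate the elementary rule \eqref{EqCommuteT}: sliding a trace-type block of degree $\epsilon_i+\epsilon_j$ past an adjacent block of degree $\epsilon_j+\epsilon_m$ produces the factor $(-1)^{(\epsilon_i+\epsilon_j)(\epsilon_j+\epsilon_m)}$, and composing several such moves yields the stated reordering signum. By \eqref{EqDegTrace} this signum never sees the internal structure of the blocks, so the computation closes at the level of external indices. Next, for Figure~\ref{FigEffectEdgeTrace} the decisive input is the nonvanishing criterion \eqref{EqNonZeroW}: an edge joining two trace-type pieces forces, through the Kronecker deltas of Figure~\ref{figW1}, the two connected indices to coincide, which by the discussion following \eqref{EqNonZeroW} gives $i_\ell=i_{k+1}$ and $i_{\ell+1}=i_k$. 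This makes both the connecting edge and the enclosed trace-type block of degree zero, i.e.\ bosonic, so every commutation sign generated while sliding the edge across these pieces equals $+1$ and the weight factorizes as claimed. Along the way I would use the identity $(-1)^{(\epsilon_k+\epsilon_\ell)(\epsilon_k+\epsilon_\ell)}=(-1)^{(\epsilon_k+\epsilon_\ell)}$ to absorb any squared degree into a single power. Finally, Figure~\ref{FigCrossingEdgesTrace} follows from the crossing rule of Figure~\ref{figW3}, whose sign is $(-1)^{(\epsilon_{i_1}+\epsilon_{j_1})(\epsilon_{i_2}+\epsilon_{j_2})}$; once the constraints of \eqref{EqNonZeroW} are imposed on the two crossing strands, the relevant degrees are pinned down and the crossing sign collapses to the expression shown.

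The main obstacle will be purely one of bookkeeping rather than ideas. Because the creation and connection operators $A$ and $B$ of Proposition~\ref{PropBijection} do not commute, one must keep careful track of the order in which half-edges are introduced and joined, and then verify that the cascade of elementary signs coming from Figures~\ref{figW2}--\ref{figW3} telescopes correctly once the internal indices are collapsed by \eqref{EqDegTrace}. I expect the only delicate point to be checking that the parity contributions from intermediate, repeated indices genuinely cancel in every case covered by the figures, so that no residual dependence on the internal labels survives. No new ingredient beyond the rules already proved is required: the substance of the lemma is precisely the closure of those rules within the class of trace-type graphs.
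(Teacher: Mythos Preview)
Your proposal is correct and takes essentially the same approach as the paper. The paper's own argument is the single sentence immediately preceding the lemma, which cites precisely the ingredients you invoke: equations \eqref{EqDegTrace}, \eqref{EqCommuteT}, \eqref{EqNonZeroW}, and the identity $(-1)^{(\epsilon_k+\epsilon_\ell)(\epsilon_k+\epsilon_\ell)}=(-1)^{(\epsilon_k+\epsilon_\ell)}$; you have simply spelled out how each of the three figure-identities is obtained from these rules.
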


\begin{figure}[h]\caption{ {\small Commutation of trace type ribbon graphs}}
\centering
{
 \begin{equation*}\label{EqCommuteT}
 \sigma(i_1)\,W\left(\quad\;\psset{unit=0.8cm}\begin{pspicture}(2.5,1.2)(5,1.8)
 \psline[linewidth=.5pt ]{-}(2.5,1)(2.5,1.8)(3.5,1.8)(3.5,1)(2.5,1)\rput(3,1.4){$T_1$}
  \psline[linewidth=1pt,linearc=.3,linestyle=dotted]{->}(3.5,1) (3.9,1)
  \psline[linewidth=.5pt ]{-}(4,1)(4,1.8)(5,1.8)(5,1)(4,1)\rput(4.5,1.4){$T_2$}
\rput(2.5,1){$\bullet$}\rput(3.5,1){$\bullet$}
\rput(2.5,0.7){$i_1$}\rput(3.5,0.7){$\;i_{\ell}$}
\rput(4,1){$\bullet$}\rput(5,1){$\bullet$}
\rput(4,0.7){$\;i_{\ell}$}\rput(5,0.7){$\;i_{1}$}
\end{pspicture}\quad\;\right)  = \sigma(i_\ell)\,
 W\left(\quad\psset{unit=0.8cm}\begin{pspicture}(2.5,1.2)(5,1.8)
 \psline[linewidth=.5pt ]{-}(2.5,1)(2.5,1.8)(3.5,1.8)(3.5,1)(2.5,1)\rput(3,1.4){$T_2$}
  \psline[linewidth=1pt,linearc=.3,linestyle=dotted]{->}(3.5,1) (3.9,1)
  \psline[linewidth=.5pt ]{-}(4,1)(4,1.8)(5,1.8)(5,1)(4,1)\rput(4.5,1.4){$T_1$}
\rput(2.5,1){$\bullet$}\rput(3.5,1){$\bullet$}
\rput(2.5,0.7){$i_\ell$}\rput(3.5,0.7){$\;i_{1}$}
\rput(4,1){$\bullet$}\rput(5,1){$\bullet$}
\rput(4,0.7){$i_1$}\rput(5,0.7){$\;i_{\ell}$}
\end{pspicture}\quad\right)
\end{equation*}
}
\label{FigComTrace}
\end{figure}

\begin{figure}[h]\caption{ {\small Effect of an edge on trace-type ribbon graphs}}
\centering
{
\begin{equation*}
\sum_{i_{\ell}, i_{\ell+1}} W \left(\;
\psset{unit=0.8cm}\begin{pspicture}(1,1.5)(7,2.5)
 \psline[linewidth=.5pt ]{-}(1,1)(1,1.8)(2,1.8)(2,1)(1,1)\rput(1.5,1.4){$T_1$}
 \psline[linewidth=1pt,linearc=.3,linestyle=dotted]{->}(2,1) (2.4,1)
  \psline[linewidth=1pt,linearc=.3,linestyle=dotted]{->}(3,1) (3.4,1)
  \psline[linewidth=.5pt ]{-}(3.5,1)(3.5,1.8)(4.5,1.8)(4.5,1)(3.5,1)\rput(4,1.4){$T_2$}
   \psline[linewidth=1pt,linearc=.3,linestyle=dotted]{->}(4.5,1) (4.9,1)
    \psline[linewidth=1pt,linearc=.3,linestyle=dotted]{->}(5.5,1) (5.9,1)
\psline[linewidth=.5pt ]{-}(6,1)(6,1.8)(7,1.8)(7,1)(6,1)\rput(6.5,1.4){$T_3$}
\psline[linewidth=1pt,linearc=.4,ArrowInside=-<](5.5,1)(5.5,2.5)(2.5,2.5)(2.5,1)
\psline[linewidth=1pt,linearc=.3,ArrowInside=-<](3,1)(3,2)(5,2)(5,1)
\rput(1,1){$\bullet$}\rput(2,1){$\bullet$}\rput(2.5,1){$\bullet$}\rput(3,1){$\bullet$}\rput(3.5,1){$\bullet$}
\rput(4.5,1){$\bullet$}\rput(5,1){$\bullet$}\rput(5.5,1){$\bullet$}\rput(6,1){$\bullet$}\rput(7,1){$\bullet$}
\rput(1,0.7){$i_1$}\rput(2,0.7){$i_k$}
\rput(3,0.7){$\;\;i_{k+1}$}
\rput(4.5,0.7){$i_{\ell}$}
\rput(5.5,0.7){$\;\;i_{\ell+1}$}
\rput(7,0.7){$\,i_{1}$}
\end{pspicture}
\;\right)
= \hbar \sigma(i_{k+1})  W\left(\;
\psset{unit=0.8cm}\begin{pspicture}(11,1.5)(15,2.5)
\psline[linewidth=.5pt]{-}(11,1)(11,1.8)(12,1.8)(12,1)(11,1)\rput(11.5,1.4){$T_1$}
\psline[linewidth=1pt,linearc=.3,linestyle=dotted]{->}(12,1) (12.5,1)
\psline[linewidth=.5pt ]{-}(12.5,1)(12.5,1.8)(13.5,1.8)(13.5,1)(12.5,1)\rput(13,1.4){$T_3$}
\psline[linewidth=.5pt]{-}(14,1)(14,1.8)(15,1.8)(15,1)(14,1)\rput(14.5,1.4){$T_2$}
\rput(11,1){$\bullet$}\rput(12,1){$\bullet$}\rput(12.5,1){$\bullet$}\rput(13.5,1){$\bullet$}\rput(14,1){$\bullet$}\rput(15,1){$\bullet$}
\rput(11,0.7){$i_1$}\rput(12,0.7){$i_k$}
\rput(13.5,0.7){$i_1$}
\rput(14.2,0.7){$i_{k+1}$}
\rput(15,0.7){$i_{k+1}$}
\end{pspicture}\;\right)
\end{equation*}
}
\label{FigEffectEdgeTrace}
\end{figure}

\begin{figure}[h]\caption{  {\small  Effect of crossing edges on trace-type ribbon graphs}}
\centering
{\small
\begin{multline*}
\sum_{\substack{i_{p},i_{p+1}\\i_q,i_{q+1}}}
W \left(\;
\psset{unit=0.8cm}
\begin{pspicture}(1,1.5)(12,3.5)
 \psline[linewidth=.5pt ]{-}(1,1)(1,1.8)(2,1.8)(2,1)(1,1)\rput(1.5,1.4){$T_1$}
 \psline[linewidth=1pt,linearc=.3,linestyle=dotted]{->}(2,1) (2.4,1)
  \psline[linewidth=1pt,linearc=.3,linestyle=dotted]{->}(3,1) (3.4,1)
  \psline[linewidth=.5pt ]{-}(3.5,1)(3.5,1.8)(4.5,1.8)(4.5,1)(3.5,1)\rput(4,1.4){$T_2$}
   \psline[linewidth=1pt,linearc=.3,linestyle=dotted]{->}(4.5,1) (4.9,1)
   \psline[linewidth=1pt,linearc=.4,ArrowInside=->](5,1)(5,3.5)(10.5,3.5)(10.5,1)
\psline[linewidth=1pt,linearc=.3,ArrowInside=-<](5.5,1)(5.5,3)(10,3)(10,1)
\pscustom[linewidth=5pt,linecolor=white,fillstyle=solid,fillcolor=white]{
\psline[linewidth=1pt,linearc=.4,ArrowInside=-<](8,1)(8,2.5)(2.5,2.5)(2.5,1)
  \psline[linewidth=1pt,linearc=.3,ArrowInside=-<](3,1)(3,2)(7.5,2)(7.5,1)
  }
   \psline[linewidth=1pt,linearc=.4,ArrowInside=-<](8,1)(8,2.5)(2.5,2.5)(2.5,1)
  \psline[linewidth=1pt,linearc=.3,ArrowInside=-<](3,1)(3,2)(7.5,2)(7.5,1)
    \psline[linewidth=1pt,linearc=.3,linestyle=dotted]{->}(5.5,1) (5.9,1)
\psline[linewidth=.5pt ]{-}(6,1)(6,1.8)(7,1.8)(7,1)(6,1)\rput(6.5,1.4){$T_3$}
    \psline[linewidth=1pt,linearc=.3,linestyle=dotted]{->}(7,1) (7.4,1)
        \psline[linewidth=1pt,linearc=.3,linestyle=dotted]{->}(8,1) (8.4,1)
        \psline[linewidth=.5pt ]{-}(8.5,1)(8.5,1.8)(9.5,1.8)(9.5,1)(8.5,1)\rput(9,1.4){$T_4$}
\psline[linewidth=1pt,linearc=.3,linestyle=dotted]{->}(9.5,1) (9.9,1)
\psline[linewidth=1pt,linearc=.3,linestyle=dotted]{->}(10.5,1) (10.9,1)
\psline[linewidth=.5pt ]{-}(11,1)(11,1.8)(12,1.8)(12,1)(11,1)\rput(11.5,1.4){$T_5$}
\rput(1,1){$\bullet$}\rput(2,1){$\bullet$}\rput(2.5,1){$\bullet$}\rput(3,1){$\bullet$}\rput(3.5,1){$\bullet$}
\rput(4.5,1){$\bullet$}\rput(5,1){$\bullet$}\rput(5.5,1){$\bullet$}\rput(6,1){$\bullet$}\rput(7,1){$\bullet$}
\rput(7.5,1){$\bullet$}\rput(8,1){$\bullet$}\rput(8.5,1){$\bullet$}\rput(9.5,1){$\bullet$}\rput(10,1){$\bullet$}
\rput(10.5,1){$\bullet$}\rput(11,1){$\bullet$}\rput(12,1){$\bullet$}
\rput(1,0.7){$i_1$}\rput(2,0.7){$i_k$}
\rput(3,0.7){$\;\;i_{k+1}$}
\rput(4.5,0.7){$i_{\ell}$}
\rput(5.5,0.7){$\;\;i_{\ell+1}$}
\rput(7,0.7){$\,i_{p}$}
\rput(8,0.7){$\,i_{p+1}$}
\rput(9.5,0.7){$\,i_{q}$}
\rput(10.5,0.7){$\,i_{q+1}$}
\rput(12,0.7){$\,i_{1}$}
\end{pspicture}
\;\right)\\
  = \hbar^2 W\left(\;
\psset{unit=0.8cm}
\begin{pspicture}(11,1.5)(18,2.5)
\psline[linewidth=.5pt]{-}(11,1)(11,1.8)(12,1.8)(12,1)(11,1)\rput(11.5,1.4){$T_1$}
\psline[linewidth=1pt,linearc=.3,linestyle=dotted]{->}(12,1) (12.4,1)
\psline[linewidth=.5pt ]{-}(12.5,1)(12.5,1.8)(13.5,1.8)(13.5,1)(12.5,1)\rput(13,1.4){$T_4$}
\psline[linewidth=1pt,linearc=.3,linestyle=dotted]{->}(13.5,1) (13.9,1)
\psline[linewidth=.5pt]{-}(14,1)(14,1.8)(15,1.8)(15,1)(14,1)\rput(14.5,1.4){$T_3$}
\psline[linewidth=1pt,linearc=.3,linestyle=dotted]{->}(15,1) (15.4,1)
\psline[linewidth=.5pt]{-}(15.5,1)(15.5,1.8)(16.5,1.8)(16.5,1)(15.5,1)\rput(16,1.4){$T_2$}
\psline[linewidth=1pt,linearc=.3,linestyle=dotted]{->}(16.5,1) (16.9,1)
\psline[linewidth=.5pt ]{-}(17,1)(17,1.8)(18,1.8)(18,1)(17,1)\rput(17.5,1.4){$T_5$}
\rput(11,1){$\bullet$}\rput(12,1){$\bullet$}\rput(12.5,1){$\bullet$}\rput(13.5,1){$\bullet$}\rput(14,1){$\bullet$}\rput(15,1){$\bullet$}
\rput(15.5,1){$\bullet$}\rput(16.5,1){$\bullet$}\rput(17,1){$\bullet$}\rput(18,1){$\bullet$}
\rput(11,0.7){$i_1$}\rput(12,0.7){$i_k$}
\rput(13.5,0.7){$i_{\ell+1}$}
\rput(15,0.7){$i_{k+1}$}
\rput(16.5,0.7){$i_{\ell}$}
\rput(18,0.7){$i_{1}$}
\end{pspicture}\;\right)
\end{multline*}
}
\label{FigCrossingEdgesTrace}
\end{figure}

Let us consider a few examples in relation to the expectation value of $\str M^n$.  First of all, one easily verifies that the contribution of the trace-type graph of Eq.\eqref{eqGraphT}  is equal to $\hbar Y_{i_1i_2}\hbar Y_{i_2i_3}\cdots \hbar Y_{i_n i_1}$,
so summing over all the indices as in Eq.\eqref{EqFormulaExpectation} leads to $\str (\hbar Y)^n$.  Consider next  the contribution of a graph with one edge:
{\small \begin{equation*}\label{EqGraphTkl}
\psset{unit=0.8cm}\begin{pspicture}(0,0)(12,2.5)
 \psline[linewidth=1pt,linearc=.3,ArrowInside=->](0.5,1)(0.5,1.8)
\psline[linewidth=1pt,linearc=.3,ArrowInside=->](1,1.8)(1,1)
\psline[linewidth=1pt,linearc=.3,ArrowInside=->](2,1)(2,1.8)
\psline[linewidth=1pt,linearc=.3,ArrowInside=->](2.5,1.8)(2.5,1)
\psline[linewidth=1pt,linearc=.3,ArrowInside=->](5,1)(5,1.8)
\psline[linewidth=1pt,linearc=.3,ArrowInside=->](5.5,1.8)(5.5,1)
\psline[linewidth=1pt,linearc=.3,ArrowInside=->](6.5,1)(6.5,1.8)
\psline[linewidth=1pt,linearc=.3,ArrowInside=->](7,1.8)(7,1)
\psline[linewidth=1pt,linearc=.3,ArrowInside=->](9.5,1)(9.5,1.8)
\psline[linewidth=1pt,linearc=.3,ArrowInside=->](10,1.8)(10,1)
\psline[linewidth=1pt,linearc=.3,ArrowInside=->](11,1)(11,1.8)
\psline[linewidth=1pt,linearc=.3,ArrowInside=->](11.5,1.8)(11.5,1)
 \psline[linewidth=1pt,linearc=.3,ArrowInside=->,linestyle=dotted](11.5,1)(11.5,0.4)(0.5,0.4)(0.5,1)
 \psline[ArrowInside=-<,linewidth=1pt,linearc=.3,linestyle=dotted](2,1)(1,1)
 \psline[ArrowInside=-<,linewidth=1pt,linearc=.3,linestyle=dotted](5,1)(4,1)
 \psline[ArrowInside=-<,linewidth=1pt,linearc=.3,linestyle=dotted](8,1)(7,1)
  \psline[ArrowInside=-<,linewidth=1pt,linearc=.3,linestyle=dotted](8,1)(7,1)
   \psline[ArrowInside=-<,linewidth=1pt,linearc=.3,linestyle=dotted](9.5,1)(8.5,1)
\psline[linewidth=1pt,linearc=.4,ArrowInside=-<](8.5,1)(8.5,2.5)(3.5,2.5)(3.5,1)
\psline[linewidth=1pt,linearc=.3,ArrowInside=-<](4,1)(4,2)(8,2)(8,1)
\rput(0.5,1){$\bullet$}\rput(1,1){$\bullet$}\rput(2,1){$\bullet$}\rput(2.5,1){$\bullet$}\rput(3.5,1){$\bullet$}\rput(4,1){$\bullet$}
\rput(8.5,1){$\bullet$}\rput(5,1){$\bullet$}\rput(8.0,1){$\bullet$}\rput(5.5,1){$\bullet$}\rput(6.5,1){$\bullet$}\rput(7,1){$\bullet$}\rput(9.5,1)
{$\bullet$}\rput(10,1){$\bullet$}\rput(11,1){$\bullet$}\rput(11.5,1){$\bullet$}
\rput(0.3,1.3){$i_1$}\rput(1.3,1.3){$i_2$}
\rput(2.5,0.7){$i_3$}\rput(3,1){$\ldots$}\rput(3.5,0.7){$i_k$}
\rput(4,0.7){$\;i_{k+1}$}
\rput(5.7,0.7){$i_{k+2}$}\rput(6,1){$\ldots$}
\rput(6.5,0.7){$i_{\ell-1}$}
\rput(7,0.7){$\,i_{\ell}$}
\rput(8.5,0.7){$\,\,\,i_{\ell+1}$}
\rput(9,1){$\ldots$}
\rput(10,0.7){$\;\;i_{\ell+2}$}\rput(10.5,1){$\ldots$}
\rput(10.8,1.3){$i_n\;$}\rput(11.7,1.3){$i_1$}
\end{pspicture}
\end{equation*}}
\noindent From Fig.\ref{FigEffectEdgeTrace}, we conclude if $T_2$ is equal to a sequence of non-connected half-edges of trace-type, then
{\small \begin{equation*}
\sum_{i_{k+1},\ldots, i_{\ell+1}} W \left(\;\;
\psset{unit=0.8cm}\begin{pspicture}(1,1.5)(7,2.5)
 \psline[linewidth=.5pt]{-}(1,1)(1,1.8)(2,1.8)(2,1)(1,1)\rput(1.5,1.4){$T_1$}
  \psline[linewidth=.5pt]{-}(3.5,1)(3.5,1.8)(4.5,1.8)(4.5,1)(3.5,1)\rput(4,1.4){$T_2$}
\psline[linewidth=.5pt]{-}(6,1)(6,1.8)(7,1.8)(7,1)(6,1)\rput(6.5,1.4){$T_3$}
\psline[linewidth=1pt,linearc=.4,ArrowInside=-<](5.5,1)(5.5,2.5)(2.5,2.5)(2.5,1)
\psline[linewidth=1pt,linearc=.3,ArrowInside=-<](3,1)(3,2)(5,2)(5,1)
\rput(1,1){$\bullet$}\rput(2,1){$\bullet$}\rput(2.5,1){$\bullet$}\rput(3,1){$\bullet$}\rput(3.5,1){$\bullet$}
\rput(4.5,1){$\bullet$}\rput(5,1){$\bullet$}\rput(5.5,1){$\bullet$}\rput(6,1){$\bullet$}\rput(7,1){$\bullet$}
\rput(1,0.7){$i_1$}\rput(2,0.7){$i_k$}
\rput(3,0.7){$\;i_{k+1}$}
\rput(4.5,0.7){$i_{\ell}$}
\rput(5.5,0.7){$\;i_{\ell+1}$}
\rput(7,0.7){$\,i_{1}$}
\psline[linewidth=1pt,linearc=.3,linestyle=dotted]{->}(2,1) (2.4,1)
\psline[linewidth=1pt,linearc=.3,linestyle=dotted]{->}(3,1) (3.4,1)
\psline[linewidth=1pt,linearc=.3,linestyle=dotted]{->}(4.5,1) (4.9,1)
\psline[linewidth=1pt,linearc=.3,linestyle=dotted]{->}(5.5,1) (5.9,1)
\end{pspicture}
\;\;\right)\, =\, \hbar \str (\hbar Y)^{\ell-k-1}\, W\left(\;\;
\psset{unit=0.8cm}\begin{pspicture}(11,1.5)(13.5,2.5)
\psline[linewidth=.5pt]{-}(11,1)(11,1.8)(12,1.8)(12,1)(11,1)\rput(11.5,1.4){$T_1$}
\psline[linewidth=.5pt]{-}(12.5,1)(12.5,1.8)(13.5,1.8)(13.5,1)(12.5,1)\rput(13,1.4){$T_3$}
\rput(11,1){$\bullet$}\rput(12,1){$\bullet$}\rput(12.5,1){$\bullet$}\rput(13.5,1){$\bullet$}
\rput(11,0.7){$i_1$}\rput(12,0.7){$i_k$}
\rput(13.5,0.7){$i_1$}
\psline[linewidth=1pt,linearc=.3,linestyle=dotted]{->}(12,1) (12.4,1)
\end{pspicture}\;\;\right).
\end{equation*}
}

\noindent Note that in the case where $\ell=k+1$, which corresponds to graph with a closed loop labeled by $i_{k+1}$,
 the previous formula remains valid if we interpret $\str (\hbar Y)^{\ell-k-1}$ as $\str \mathbf{1}=p-q$.
   We then conclude that the total contribution to the expectation value of $\str M^n$ of all  labeled graphs $T_{k\ell}$ as in Eq.\eqref{EqGraphTkl},
 with one edge linking the half-edges $i_ki_{k+1}$ and $i_{\ell}i_{\ell+1}$, is
 $$\sum_{i_1,\ldots,i_n}\sigma(i_1) W(T_{k\ell})=\hbar\str (\hbar Y)^{n+k-\ell-1}\str (\hbar Y)^{\ell-k-1}.$$

The ribbon graph $T_{15,27,36}$  given in Fig.\ref{figTr} is more complicated since the calculation of its contribution to the expectation of $\str M^8$ requires the use of the three rules given in Figures \ref{FigComTrace}, \ref{FigEffectEdgeTrace}, and \ref{FigCrossingEdgesTrace}.
One gets  $$\sum_{1\leq i_{1},\ldots, i_{8}\leq p+q} \sigma(i_1)W \left(T_{15,27,36}\right)=\hbar^3(p-q)\str(\hbar Y)^2.$$
The different factors in the last equation can be understood as follows: the 3 edges create a  $\hbar^3$; $p-q$ comes from the loop starting at $i_3$; $\str (\hbar Y)^2$ is obtained when considering the two non-connected half-edges.


\begin{figure}[h]\caption{  {\small Possible ribbon graph associated  to $\langle \mathrm{str}
M^8\rangle$}}
\centering
{ \psset{unit=0.8cm}
\begin{pspicture}(0,0)(12,5.5)
\psline[linewidth=1pt,linearc=.3,ArrowInside=->](11,1)(11,1.6)
\psline[linewidth=1pt,linearc=.3,ArrowInside=->](11.5,1.6)(11.5,1)
\psline[linewidth=1pt,linearc=.3,ArrowInside=->](5,1)(5,1.6)
\psline[linewidth=1pt,linearc=.3,ArrowInside=->](5.5,1.6)(5.5,1)
 \psline[linewidth=1pt,linearc=.3,ArrowInside=->,linestyle=dotted](11.5,1)(11.5,0.4)(0.5,0.4)(0.5,1)
 \psline[ArrowInside=-<,linewidth=1pt,linearc=.3,linestyle=dotted](2,1)(1,1)
 \psline[ArrowInside=-<,linewidth=1pt,linearc=.3,linestyle=dotted](3.5,1)(2.5,1)
 \psline[ArrowInside=-<,linewidth=1pt,linearc=.3,linestyle=dotted](5,1)(4,1)
 \psline[ArrowInside=-<,linewidth=1pt,linearc=.3,linestyle=dotted](6.5,1)(5.5,1)
 \psline[ArrowInside=-<,linewidth=1pt,linearc=.3,linestyle=dotted](8,1)(7,1)
  \psline[ArrowInside=-<,linewidth=1pt,linearc=.3,linestyle=dotted](8,1)(7,1)
   \psline[ArrowInside=-<,linewidth=1pt,linearc=.3,linestyle=dotted](9.5,1)(8.5,1)
  \psline[ArrowInside=-<,linewidth=1pt,linearc=.3,linestyle=dotted](11,1)(10,1)
\psline[ArrowInside=-<,linewidth=1pt,linearc=.3](2.5,1)(2.5,3)(9.5,3)(9.5,1)
\psline[linewidth=1pt,linearc=.4,ArrowInside=-<](10,1)(10,3.5)(2,3.5)(2,1)
\pscustom[linewidth=5pt,linecolor=white,fillstyle=solid,fillcolor=white]{%
\psline[linewidth=1pt,linearc=.4,ArrowInside=-<](7,1)(7,2.5)(0.5,2.5)(0.5,1)
\psline[linewidth=1pt,linearc=.3,ArrowInside=-<](1,1)(1,2)(6.5,2)(6.5,1)
}
\psline[linewidth=1pt,linearc=.4,ArrowInside=-<](7,1)(7,2.5)(0.5,2.5)(0.5,1)
\psline[linewidth=1pt,linearc=.3,ArrowInside=-<](1,1)(1,2)(6.5,2)(6.5,1)
\pscustom[linewidth=5pt,linecolor=white,fillstyle=solid,fillcolor=white]{%
\psline[linewidth=1pt,linearc=.4,ArrowInside=-<](8.5,1)(8.5,4.5)(3.5,4.5)(3.5,1)
\psline[linewidth=1pt,linearc=.3,ArrowInside=-<](4,1)(4,4)(8,4)(8,1)
}
\psline[linewidth=1pt,linearc=.4,ArrowInside=-<](8.5,1)(8.5,4.5)(3.5,4.5)(3.5,1)
\psline[linewidth=1pt,linearc=.3,ArrowInside=-<](4,1)(4,4)(8,4)(8,1)
\pscustom[linewidth=6pt,linecolor=white,fillstyle=solid,fillcolor=white]{%
 \psline[linewidth=1pt,linearc=.3]{->}(5,1)(5,5)
\psline[linewidth=1pt,linearc=.4]{-<}(5.5,5)(5.5,1) }
 \psline[linewidth=1pt,linearc=.3]{->}(5,1)(5,5)
\psline[linewidth=1pt,linearc=.4]{>-}(5.5,5)(5.5,1)
\pscustom[linewidth=6pt,linecolor=white,fillstyle=solid,fillcolor=white]{%
 \psline[linewidth=1pt,linearc=.3]{->}(11,1)(11,5)
\psline[linewidth=1pt,linearc=.4]{>-}(11.5,5)(11.5,1) }
 \psline[linewidth=1pt,linearc=.3]{->}(11,1)(11,5)
\psline[linewidth=1pt,linearc=.4]{>-}(11.5,5)(11.5,1)
\rput(0.5,1){$\bullet$}\rput(1,1){$\bullet$}\rput(2,1){$\bullet$}\rput(2.5,1){$\bullet$}\rput(3.5,1){$\bullet$}\rput(4,1){$\bullet$}
\rput(8.5,1){$\bullet$}\rput(5,1){$\bullet$}\rput(8.0,1){$\bullet$}\rput(5.5,1){$\bullet$}\rput(6.5,1){$\bullet$}\rput(7,1){$\bullet$}\rput(9.5,1)
{$\bullet$}\rput(10,1){$\bullet$}\rput(11,1){$\bullet$}\rput(11.5,1){$\bullet$}
\rput(0.3,1.3){$i_1$}\rput(1.3,1.3){$i_2$}
\rput(2.5,0.7){$i_3$}
\rput(4,0.7){$i_4$}
\rput(5.5,0.7){$i_5$}
\rput(7,0.7){$i_6$}
\rput(8.5,0.7){$i_7$}
\rput(10,0.7){$i_8$}
\rput(11.7,1.3){$i_1$}
\end{pspicture}
 \label{figTr}}
\end{figure}
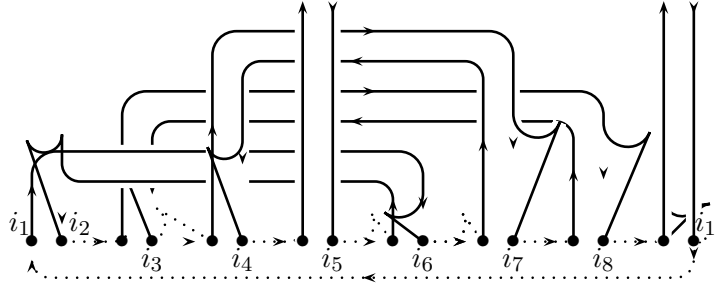

The previous discussion suggests that the weight associated to the sum over the indices of a trace-type ribbon graph depends only $\hbar$ and
 supertraces of the form $\str (\hbar Y)^m$, which includes the contribution $\str (\hbar Y)^0=\str \mathbf{1}=p-q$.
 It is indeed the case since  Lema \ref{LemmaTraceGraph} implies that when reordering the subgraphs correctly,
 the only signs that appear in the total weight are factors of the form  $\sigma(i_k)$, each of them associated to an independent trace-type subgraph.
 This means that each independent subgraph produces a factor $p-q$ if it is empty while it produces a factor $\str (\hbar Y)^n$ if it is made of
 $n$ non-connected half-edges.  Moreover, each edge produce a factor $\hbar$.

We can go further by defining a face of a trace-type ribbon graph as the region contained inside a loop,
which is a closed path that follows  the arrows (including the dotted arrow).
Note that a face can contain an independent  subgraph that contains non-connected half-edges.
 For instance, the graph given in Fig.\ref{figTr} has two faces: the first is delimited by the path $i_1\to i_6\to i_4\to i_5\to i_2\to i_8\to i_1$;
 the second  is found by following the path $i_3\to i_7\to i_3$.  By making use of Lemma \ref{LemmaTraceGraph},
 we conclude that each face, when summing over all the indices, carries a weight of $\str (\hbar Y)^m$, where $m\geq 0$
 corresponds to the number of non-connected half-edges.  The weight of the whole graph is given by the product of the weights associated
  to the edges and the faces.   Thus, if the labeled trace-type ribbon-graph $T$ that contains $n-m$ edges and $F$ faces $f_i$,
  each of them including $m_i\geq 0$ non-connected half-edges,
\beq
\sum_{i_1,\ldots,i_{n}}\sigma(i_1)W(T)=\hbar^n\prod_{i=1}^F\str(\hbar Y)^{m_i}
\eeq
Now define $m=\sum_im_n$ as the total number of non-connected half-edges in $T$.

Obviously, to any trace-type ribbon graph $T$  with $n-m$ edges, $m$ half-edges,
$F$ faces and labeled by the integers $i_1,\ldots,i_n$, we can associate a connected ribbon graph $\Gamma$, called a fatgraph star,
with $F$ faces, $V=1+m$ vertices, and $E=n$  edges labeled from 1 to n. Note that in the star $\Gamma$,
all the points carrying the indices of $T$ are considered as forming a unique vertex while the
$m$ non-connected half-edges of $T$ are interpreted as being connected to $m$ distinct vertices,
so that the total number of vertices in $\Gamma$ is $1+m$.  For example, the fatgraph star associated to the trace-type ribbon graph of Fig.\ref{figTr}
 is given in Fig.\ \ref{StarFatGraph}.  It is a classical result that any connected graph with $F$ faces, $E$ edges, and $V$ vertices, can be embedded on a surface of genus (at least) $g$ in a way that guaranties that the edges don't cross (except at the vertices) if $g$ is given by the following expression of the Euler characteristic \cite{Zvonkin}:
\beq
\chi=V+F-E=2-2g
\eeq
Thus, if $T$ is a trace-type ribbon graph with $m$ non-connected half-edges, $n-m$ edges, and $F$ faces,
\beq
\hbar^{-1}\sum_{i_1,\ldots,i_{n}}\sigma(i_1)W(T)=\hbar^{n-m-1-F}\prod_{i=1}^F\hbar\str(\hbar Y)^{m_i}
\eeq
which yields, when $E=n$ and $V=1+m$,
\beq
\hbar^{-1}\sum_{i_1,\ldots,i_{n}}\sigma(i_1)W(T)=\hbar^{E-V-F}\prod_{i=1}^F\hbar\str(\hbar Y)^{m_i}=\hbar^{-2+2g}\prod_{i=1}^F\hbar\str(\hbar Y)^{m_i}
\eeq
Going back to Eq.\eqref{EqFormulaExpectation}, we conclude that if
\beq \bar Y=\hbar Y,\quad p_{m_i}=\hbar \str \bar Y^{m_i},\quad\eeq
then
\beq\label{EqResultExpectation}\ee{-\frac{1}{2\hbar^2}p_2} \Big\langle \frac{1}{\hbar}
\str M^n \ee{\frac{1}{\hbar}\str M\bar Y}\Big\rangle_{M\in GU(p|q)}= \sum_\Gamma \hbar^{-2+2g}\prod_i p_{m_i}.
\eeq
where the sum runs over all connected fatgraph stars  $\Gamma$ with one $n$-valent vertex, $m$ univalent vertices and $F$ faces,
each face containing $m_i$ edges connected to $m_i$ distinct vertices.


\begin{figure}[h]\caption{  {\small The star fatgraph associated to the ribbon graph of Fig.\ \ref{figTr}.
The half-edges are labeled from 1 to 8.   The graph contains $V=3$ vertices (denoted  $a,b,c$), $F=2$ faces (denoted $F_1, F_2$), and $E=5$ edges.  As shown on the right-hand side, it can be embedded on compact surface of genus $g=(E-F-V+2)/2=1$.}   }
\centering
{ \psset{unit=0.8cm}
\begin{pspicture}(-3,-4)(4.5,5)
 \rput(0,1){$\bullet$}\rput(0.25,1.5){$1$}\rput(0.5,1){$\bullet$}\rput(1,1.2){$2$}\rput(0.8,0.6){$\bullet$}\rput(1.5,.3){$3$}\rput(0.8,0.1){$\bullet$}
 \rput(1,-.4){$4$}\rput(0.5,-0.3){$\bullet$}\rput(.25,-1){$5$}\rput(0,-0.3){$\bullet$}\rput(-.5,-.4){$6$}
 \rput(-0.3,0.1){$\bullet$} \rput(-1,.3){$7$}\rput(-0.3,0.6){$\bullet$}\rput(-.6,1.2){$8$}
 \rput(-1,4){$F_1$}
 \rput(-1,2.5){$F_2$}
 \rput(1,-1.5){$F_2$}
 \rput(0.25,0.5){a}
\psline[linewidth=1pt,linearc=.5,ArrowInside=->](0,1)(0,2)
\psline[linewidth=1pt,linearc=.4,ArrowInside=->](0.5,1)(1.5,2)(1.5,3)(-2,3)(-2,0.6)(-1.3,0.6)
\psline[linewidth=1pt,linearc=.5,ArrowInside=->](-1.3,0.1)(-2.5,0.1)(-2.5,3.5)(2,3.5)(2,1.8)(0.8,0.6)
\psline[linewidth=1pt,linearc=.3,ArrowInside=->](0.8,0.6)(1.8,0.6)
\psline[linewidth=1pt,linearc=.3,ArrowInside=->](1.8,0.1)(0.8,0.1)
\psline[linewidth=1pt,linearc=.3,ArrowInside=->](0.8,0.1)(1.8,-0.9)
\psline[linewidth=1pt,linearc=.3,ArrowInside=->](1.5,-1.3)(0.5,-0.3)
\rput(1.7,-1.1){c}
\psline[linewidth=1pt,linearc=.3,ArrowInside=->](0.5,-0.3)(0.5,-1.3)
\psline[linewidth=1pt,linearc=.3,ArrowInside=->](0,-1.3)(0,-0.3)
\psline[linewidth=1pt,linearc=.4,ArrowInside=->](0,-0.3)(-1,-1.3)(-1,-2)(2.5,-2)(2.5,0.1)(1.8,.1)
\psline[linewidth=1pt,linearc=.5,ArrowInside=->](1.8,0.6)(3,0.6)(3,-2.5)(-1.5,-2.5)(-1.5,-1.1)(-0.3,0.1)
\psline[linewidth=1pt,linearc=.3,ArrowInside=->](-0.3,0.1)(-1.3,0.1)
\psline[linewidth=1pt,linearc=.3,ArrowInside=->](-1.3,0.6)(-0.3,0.6)

\psline[linewidth=1pt,linearc=.3,ArrowInside=->](-0.3,0.6)(-1.3,1.6)
\psline[linewidth=1pt,linearc=.3,ArrowInside=->](-1,2)(0,1)

\rput(-1.2,1.9){b}

\pscustom[linewidth=5pt,linecolor=white,fillstyle=solid,fillcolor=white]{
\psline[linewidth=1pt,linearc=.5,ArrowInside=->](0,2)(0,4.5)(4,4.5)(4,-3.5)(0,-3.5)(0,-1.3)
\psline[linewidth=1pt,linearc=.4,ArrowInside=->](.5,-1.3)(.5,-3)(3.5,-3)(3.5,4)(.5,4)(0.5,2)
}
\psline[linewidth=1pt,linearc=.5,ArrowInside=->](0,1)(0,2)(0,4.5)(4,4.5)(4,-3.5)(0,-3.5)(0,-1.3)(0,-.3)
\psline[linewidth=1pt,linearc=.4,ArrowInside=->](.5,-1.3)(.5,-3)(3.5,-3)(3.5,4)(.5,4)(0.5,2)(0.5,1)

\end{pspicture}
}
{ \psset{unit=0.7cm}
\begin{pspicture}(0,-3)(12,6)
\pscustom[linewidth=2pt,fillstyle=solid,fillcolor=lightgray]{
\psellipse(6,3)(6,2.4)}
\pscustom[linewidth=0pt,fillstyle=solid,fillcolor=white]{
\pscurve(4,3.3)(6,3.5)(8,3.3)
\pscurve(8,3.3)(6,3)(4,3.3)
}
\pscurve[linewidth=2pt](3,3.5)(6,3)(9,3.5)
\pscurve[linewidth=2pt](4,3.3)(6,3.5)(8,3.3)
\psellipse[doubleline=true,doublesep=2pt](6,3.3)(4.6,1.3)
\pscurve[linewidth=1pt,doubleline=true,doublesep=2pt,linearc=.4](4,2.2)(3,3)(3.5,3.3)
\pscurve[linestyle=dashed,dash=3pt](3.5,3.4)(4.4,2)(4,.8)
\pscurve[linewidth=1pt,doubleline=true,doublesep=2pt,linearc=.4](1,2.6)(2,2)(4,2.2)
\pscurve[linewidth=1pt,doubleline=true,doublesep=2pt,linearc=.2](7,0.7)(4,2.2)(6,2.9)
\pscurve[linestyle=dashed,dash=3pt](6,2.9)(9,1.6)(7,0.7)
\pscurve[linewidth=1pt,doubleline=true,doublesep=2pt,linearc=.2](4,2.2)(7,2.5)(9,3)
\pscurve[linewidth=1pt,doubleline=true,doublesep=2pt,linearc=.4](4,2.2)(3.6,1.6)(4,.8)

\pscurve[linewidth=2pt](8,3.3)(6,3)(4,3.3)

\pscustom[linewidth=2pt]{
\pscurve[linewidth=1pt](3,3.5)(6,3)(9,3.5)
}
\pscustom[linewidth=2pt]{
\psellipse(6,3)(6,2.4)}
\pscircle*(4,2.2){.15}
\pscircle*(1,2.6){.15}
\pscircle*(9,3){.15}
\rput(1,3.6){$F_2$}
\rput(4.6,1){$F_1$}
\end{pspicture}
}
 \label{StarFatGraph}
\end{figure}
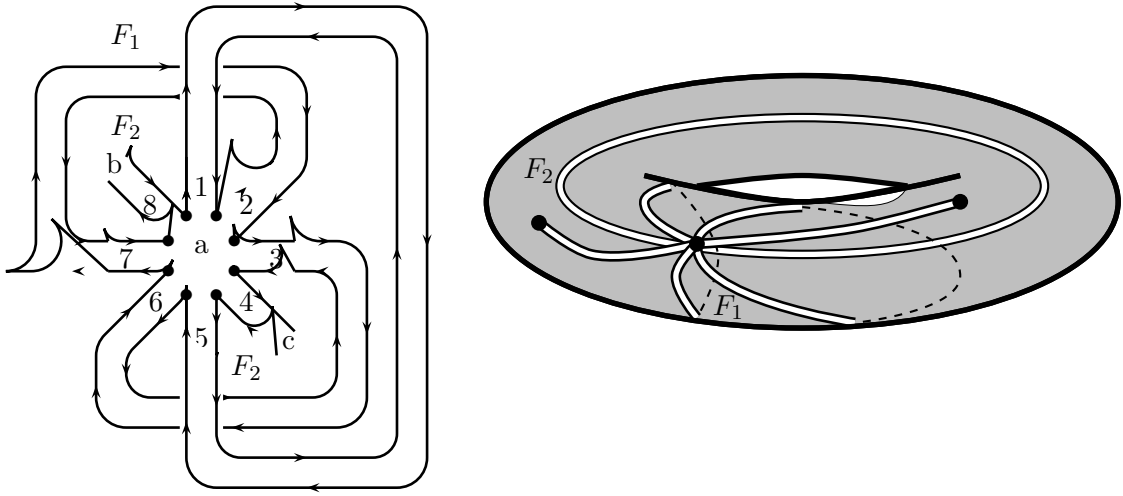

Any correlation of product of supertraces $\str M^n$ can be evaluated with
the help of a simple generalization the combinatorial method exposed above.
Indeed, Proposition \ref{PropBijection} implies that

\begin{multline}
\ee{-\frac{1}{2\hbar}\str \bar Y^2} \Big\langle \prod_{k=1}^d\frac{1}{\hbar}
\str M^{n_k} \ee{\frac{1}{\hbar}\str M\bar Y}\Big\rangle_{M\in GU(p|q)}\\
=\sum_{1\leq k\leq d\phantom{^k}}\,\sum_{1\leq i^k_1,\ldots, i^k_n\leq p+q} \sigma(i_1) \cdots \sigma(i_d)\, W(\,\Sigma T^{n_1}\cdots T^{n_d}\,)
\end{multline}
where $\Sigma T^{n_1}\cdots T^{n_d}$ stands for the sum of all possible ribbon graphs that can be obtained by connecting (partially or totally)
the half-edges contained in the following concatenation of $d$ trace-type ribbon graphs (cf.  Eq.\ref{eqGraphT}):
{\small \beq\label{eqGraphProdT}
\psset{unit=0.85cm}\begin{pspicture}(0,0.5)(6,1.8)
\psline[linewidth=1pt,linearc=.3]{->}(0.5,1)(0.5,1.8)
\psline[linewidth=1pt,linearc=.4]{>-}(1,1.8)(1,1)
\psline[linewidth=1pt,linearc=.3]{->}(2,1)(2,1.8)
\psline[linewidth=1pt,linearc=.4]{>-}(2.5,1.8)(2.5,1)
 \psline[linewidth=1pt,linearc=.3]{->}(4,1)(4,1.8)
\psline[linewidth=1pt,linearc=.4]{>-}(4.5,1.8)(4.5,1)
 \psline[linewidth=1pt,linearc=.3]{->}(5.5,1)(5.5,1.8)
\psline[linewidth=1pt,linearc=.4]{>-}(6,1.8)(6,1)
 \psline[linewidth=1pt,linearc=.3,ArrowInside=->,linestyle=dotted](4.5,1)(5.5,1)
\psline[linewidth=1pt,linearc=.3,ArrowInside=->,linestyle=dotted](6,1)(6,0.4)(0.5,0.4)(0.5,1)
 \psline[linewidth=1pt,linearc=.3,ArrowInside=->,linestyle=dotted](1,1)(2,1)
  \rput(3.3,1){$\ldots$}
\rput(0.5,1){$\bullet$}\rput(1,1){$\bullet$}\rput(2,1){$\bullet$}\rput(2.5,1){$\bullet$}\rput(4,1){$\bullet$}\rput(4.5,1){$\bullet$}
\rput(6,1){$\bullet$}\rput(5.5,1){$\bullet$}
\rput(0.3,1.3){$i_1^1$}\rput(1.3,1.3){$i_2^1$}\rput(2,0.7){$i_2^1$}\rput(2.5,0.7){$i_3^1$}\rput(4,0.7){$i_{n_1-1}^1$}\rput(4.7,0.7){$i_{n_1}^1$}
\rput(5.3,1.3){$i_{n_1}^1$}\rput(6.2,1.3){$i_1^1$}
\end{pspicture}
\qquad\qquad
\psset{unit=0.85cm}
\begin{pspicture}(0,0.5)(6,1.8)
\rput(-0.5,1){$\ldots$}
\psline[linewidth=1pt,linearc=.3]{->}(0.5,1)(0.5,1.8)
\psline[linewidth=1pt,linearc=.4]{>-}(1,1.8)(1,1)
\psline[linewidth=1pt,linearc=.3]{->}(2,1)(2,1.8)
\psline[linewidth=1pt,linearc=.4]{>-}(2.5,1.8)(2.5,1)
 \psline[linewidth=1pt,linearc=.3]{->}(4,1)(4,1.8)
\psline[linewidth=1pt,linearc=.4]{>-}(4.5,1.8)(4.5,1)
 \psline[linewidth=1pt,linearc=.3]{->}(5.5,1)(5.5,1.8)
\psline[linewidth=1pt,linearc=.4]{>-}(6,1.8)(6,1)
 \psline[linewidth=1pt,linearc=.3,ArrowInside=->,linestyle=dotted](4.5,1)(5.5,1)
\psline[linewidth=1pt,linearc=.3,ArrowInside=->,linestyle=dotted](6,1)(6,0.4)(0.5,0.4)(0.5,1)
 \psline[linewidth=1pt,linearc=.3,ArrowInside=->,linestyle=dotted](1,1)(2,1)
  \rput(3.3,1){$\ldots$}
\rput(0.5,1){$\bullet$}\rput(1,1){$\bullet$}\rput(2,1){$\bullet$}\rput(2.5,1){$\bullet$}\rput(4,1){$\bullet$}\rput(4.5,1){$\bullet$}
\rput(6,1){$\bullet$}\rput(5.5,1){$\bullet$}
\rput(0.3,1.3){$i_1^d$}\rput(1.3,1.3){$i_2^d$}\rput(2,0.7){$i_2^d$}\rput(2.5,0.7){$i_3^d$}\rput(4,0.7){$i_{n_d-1}^d$}\rput(4.7,0.7){$i_{n_d}^d$}
\rput(5.3,1.3){$i_n^d$}\rput(6.2,1.3){$i_1^d$}
\end{pspicture}
\eeq}
In the last equation, the $i$th independent subgraph has valency $n_i$.
The connection process may produce a minimum of 0 and a  maximum of $d$ non-connected trace-type ribbon graphs.  The use of Lemma \ref{LemmaTraceGraph}
allows to eliminate the sum over the indices $i^k_\ell$ and leads to the following proposition which establish that any correlation function as an expansion in terms of $(p-q)^{-2}$
and supertraces of the external field $Y$.

\begin{theorem}\label{PropCentral}
Let $p_n=\hbar \str   Y^n$, where $ Y$ denotes a Hermitian supermatrix or size $p+q$.
Suppose that $\Gamma$ is a star fatgraph with $d$ labeled vertices, each of them having a valency $n_i$, and define $E=\sum_{i=1}^d n_i$.
Suppose moreover that $\Gamma$ contains $\delta$ non-connected subgraphs $\Gamma_i$.  Define
the number of faces in $\Gamma$ as $F=\sum_{i=1}^\delta f_i$, where $f_i$ is equal to the the number of faces in $\Gamma_i$.
Finally, let $m_i$ stand for the number of non-connected half-edges in the $i$th face of $\Gamma$ and  $V=d+\sum_{i=1}^Fm_i$.
Then
\beq
 \Big\langle \prod_{k=1}^d\frac{1}{\hbar}
\str M^{n_k} \ee{\frac{1}{\hbar}\str M  Y}\Big\rangle_{M\in GU(p|q)}=\ee{\frac{1}{2\hbar^2}p_2}\,\sum_\Gamma \hbar^{E-V-F}\prod_{i=1}^F p_{m_i}
\eeq
where the sum extends over all possible (connected or not) star fatgraphs $\Gamma$.  Moreover, $V+F-E=2-2g(\Gamma)$, where
$g(\Gamma)=\sum_{i=1}^\delta g_i$ and where $g_i$ denotes the minimal genus of the compact surface on which the graph $\Gamma_i$ can be embedded.
\end{theorem}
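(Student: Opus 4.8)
The plan is to treat the statement as the multi-trace generalization of the single-trace computation that culminates in Eq.~\eqref{EqResultExpectation}, running the concatenated diagram \eqref{eqGraphProdT} through the same three-stage pipeline---Proposition~\ref{PropBijection}, Lemma~\ref{LemmaTraceGraph}, and the Euler-characteristic count---now with $d$ vertices instead of one. First I would apply Proposition~\ref{PropBijection} together with the definition of the supertrace to write
$$\ee{-\frac{1}{2\hbar}\str\bar Y^2}\Big\langle \prod_{k=1}^d \tfrac1\hbar\str M^{n_k}\,\ee{\frac1\hbar\str M\bar Y}\Big\rangle_{M\in GU(p|q)}=\sum_{i^k_\ell}\sigma(i^1_1)\cdots\sigma(i^d_1)\,W(\Sigma T^{n_1}\cdots T^{n_d}),$$
reducing everything to evaluating $W$ on the sum of all ribbon graphs built by connecting the half-edges of the $d$ trace-type blocks, followed by summation over all indices $i^k_\ell\in\{1,\dots,p+q\}$.

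The combinatorial heart is the index summation, which I would perform using the three rules of Lemma~\ref{LemmaTraceGraph} (Figs.~\ref{FigComTrace}, \ref{FigEffectEdgeTrace}, \ref{FigCrossingEdgesTrace}). These let the trace-type subgraphs be reordered into a canonical disjoint form in which each closed loop, i.e.\ each face, is summed independently. The facts I must extract are: (i) every edge contributes one factor $\hbar$; (ii) a face carrying $m_i$ non-connected half-edges contributes, after summing its indices against the single base-point sign, the weight $\str\bar Y^{m_i}$, the empty face giving $\str\mathbf 1=p-q$; and (iii) the only signs that survive the reordering are one factor $\sigma(\cdot)$ per independent loop, which is precisely what upgrades each ordinary trace into a supertrace. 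The vanishing condition \eqref{EqNonZeroW} together with $B_{ij}^2=\emptyset$ forces every face of nonzero $\mathbb{Z}_2$-degree to drop out, so that all surviving faces are bosonic and their components commute freely; this is what makes (iii) clean and reproduces, block by block, the single-trace identity $\hbar^{-1}\sum\sigma(i_1)W(T)=\hbar^{E-V-F}\prod_i p_{m_i}$.

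Next I would translate each surviving diagram into a star fatgraph $\Gamma$: the $d$ trace blocks become $d$ labelled vertices of valencies $n_1,\dots,n_d$, the $m=\sum_i m_i$ non-connected half-edges become $m$ univalent vertices, the connections become the $E=\sum_k n_k$ edges, and the loops become the $F$ faces, so that $V=d+m$. Collecting the $\hbar$-powers from (i)--(ii) yields $\hbar^{E-V-F}\prod_{i=1}^F p_{m_i}$, and moving the Gaussian factor $\ee{-\frac1{2\hbar}\str\bar Y^2}$ to the right-hand side reproduces the prefactor $\ee{p_2/2\hbar^2}$. Finally I would invoke the embedding formula $\chi=V+F-E=2-2g$ on each connected component $\Gamma_i$ and sum over the $\delta$ components, identifying the $\hbar$-exponent with the genus and giving $g(\Gamma)=\sum_{i=1}^\delta g_i$.

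The step I expect to be the main obstacle is the sign bookkeeping in the second stage. One must verify, using the crossing rule of Fig.~\ref{FigCrossingEdgesTrace}, that permuting the trace-type blocks into canonical order causes all the accumulated crossing signs $(-1)^{(\epsilon+\epsilon)(\epsilon+\epsilon)}$ to cancel in pairs, leaving exactly the per-loop factors $\sigma(\cdot)$. The mechanism is the identity $(-1)^{(\epsilon_k+\epsilon_\ell)(\epsilon_k+\epsilon_\ell)}=(-1)^{\epsilon_k+\epsilon_\ell}$, by which a squared degree sitting on a repeated index collapses to the single sign needed to promote $\tr$ to $\str$; everything else is the bookkeeping already illustrated on the examples $T_{k\ell}$ and $T_{15,27,36}$.
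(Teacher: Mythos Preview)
Your proposal is correct and follows essentially the same route as the paper: apply Proposition~\ref{PropBijection} to reduce the product of supertraces to the weighted sum over all ribbon graphs built on the concatenated trace-type diagram~\eqref{eqGraphProdT}, then use the three rules of Lemma~\ref{LemmaTraceGraph} to perform the index summations face by face, and finally reinterpret the result through the star-fatgraph dictionary and the Euler characteristic, exactly as in the single-trace derivation leading to Eq.~\eqref{EqResultExpectation}. Your treatment of the sign bookkeeping and of the disconnected case is in fact more explicit than the paper's, which simply invokes the single-trace argument by analogy.
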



We remark that if $\hbar=(p-q)^{-1}$, then the polynomial $p_0=1$.  In that case,
the previous theorem directly implies that when the external field is zero, the correlation functions of
 a Gaussian supermatrix model are equivalent to that of a Gaussian model involving a $N\times N$ matrix if one identifies $\hbar$ with $N$.

\begin{corollary}\label{PropCentral}
Let $\hbar=(p-q)^{-1}$ and let $n=\sum_k n_k$ be an even integer.  Then, with the notation explained above,
\beq
 \Big\langle \prod_{k=1}^d\frac{1}{\hbar}
\str M^{n_k} \ee{\frac{1}{\hbar}\str M  Y}\Big\rangle_{M\in GU(p|q)}= \,\sum_\Gamma \hbar^{2g(\Gamma)-2}
\eeq
where the sum extends over all possible (connected or not) star fatgraphs $\Gamma$  in which the $n$  half-edges are all connected by pairs.
\end{corollary}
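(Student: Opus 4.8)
The plan is to read the corollary directly off Theorem \ref{PropCentral}, specializing $\hbar=(p-q)^{-1}$ and isolating the contribution of those star fatgraphs carrying no external-field insertion. The one arithmetic fact that drives everything is that, with this choice of $\hbar$,
\beq
p_0=\hbar\,\str\mathbf{1}=\hbar\,(p-q)=1 .
\eeq
First I would translate the combinatorial hypothesis ``the $n$ half-edges are all connected by pairs'' into the notation of the theorem. A star fatgraph in which every half-edge is paired has no univalent vertices, i.e.\ no non-connected half-edges; hence $m_i=0$ for each of its $F$ faces and $V=d$. The weight attached to such a $\Gamma$ by the theorem is therefore
\beq
\hbar^{E-V-F}\prod_{i=1}^F p_{m_i}=\hbar^{E-V-F}\,p_0^{\,F}=\hbar^{E-V-F},
\eeq
the product collapsing to $1$ precisely because of the identity above.

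Next I would rewrite the exponent using the Euler relation already supplied by the theorem. Since $V+F-E=2-2g(\Gamma)$, one has $E-V-F=-(V+F-E)=2g(\Gamma)-2$, so each fully paired graph contributes exactly $\hbar^{2g(\Gamma)-2}$, which is the summand on the right-hand side. The hypothesis that $n=\sum_k n_k$ be even enters here as an existence condition: each edge consumes two half-edges, so a complete pairing of the $n$ half-edges exists only when $n$ is even; for odd $n$ the restricted sum is empty and, by the vanishing of the odd Gaussian moments recorded in the Wick corollary, the left-hand side vanishes as well, so the identity holds trivially.

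Finally I would account for the remaining ingredients of Theorem \ref{PropCentral}, namely the prefactor $\ee{p_2/2\hbar^2}=\ee{\frac{1}{2\hbar}\str Y^2}$ and the graphs possessing at least one non-connected half-edge, each of which carries a factor $p_{m_i}=\hbar\,\str Y^{m_i}$ with some $m_i\ge 1$. All of these are functions of the external field and drop out once one passes to the external-field-free case $Y=0$ (as in the remark preceding the statement), for which the exponential $\ee{\frac{1}{\hbar}\str MY}$ and the prefactor both equal $1$ and every $p_{m_i}$ with $m_i\ge1$ vanishes. I expect this last reconciliation --- verifying that restricting to fully paired graphs is exactly the operation that simultaneously kills the prefactor and every term with $m_i\ge1$ --- to be the only delicate point; once it is granted, summing $\hbar^{2g(\Gamma)-2}$ over all (connected or not) fully paired star fatgraphs produces the stated formula.
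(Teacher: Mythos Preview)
Your proposal is correct and matches the paper's own reasoning. The paper does not give a separate proof of the corollary; it only supplies the remark immediately preceding it, namely that with $\hbar=(p-q)^{-1}$ one has $p_0=1$, and that the theorem then ``directly implies'' the result \emph{when the external field is zero}. You have spelled out exactly this derivation: $p_0=1$ collapses the face weights for fully paired graphs, the Euler relation turns $\hbar^{E-V-F}$ into $\hbar^{2g(\Gamma)-2}$, and setting $Y=0$ kills both the prefactor $\ee{p_2/2\hbar^2}$ and every term containing a $p_{m_i}$ with $m_i\ge1$, leaving only the fully paired graphs. Your observation that the displayed corollary still carries the factor $\ee{\frac{1}{\hbar}\str MY}$ while the right-hand side is $Y$-independent is on point; this is a notational leftover, and the remark makes clear that $Y=0$ is intended.
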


Let us make a last remark regarding the partition of the
one-supermatrix model \beq \label{EqPartFormal}
\mathcal{Z}_{(p|q)}(t_1,t_2,\ldots)=\left\langle \ee{-\sum_{k\geq 1}
\frac{t_k}{k}\str M^k}\right\rangle_{M\in GU{(p|q)}} \eeq This is
the usual form of the generating function considered for the
enumeration of maps.  As explained in \cite{Des}, partition
functions of one-matrix models, when considering the $t_k$ as formal
parameters (thus,  not as complex numbers), are equivalent to the
expectation value of products of characteristic polynomials in
Gaussian ensembles. This is easily generalized to the supermatrix
case.

\begin{proposition} \label{PropModelGauss} Let $S={\rm diag}(s_i)$ be an arbitrary diagonal supermatrix of size (possibly infinite) $p'+q'$.  Moreover, set
\beq t_k=\gamma \str S^{-k} = \gamma
\sum_{i=1}^{p'}s_i^{-k}-\gamma\sum_{i=p'+1}^{q'}s_i^{-k} \eeq where $\gamma$ is a complex
parameter.  Then, as long as the $t_k$s are kept arbitrary, one
formally has \beq
 \mathcal{Z}_{(p|q)}(t_1,t_2,\ldots)=(\sdet S)^{\gamma(q-p)}\left\langle \frac{\;\;\prod_{j=1}^{p'}\;\;\sdet(s_j-M)^\gamma}{\prod_{k=p'+1}^{p'+q'}\sdet(s_k-M)^\gamma}\right\rangle_{M\in GU(p|q)}
\eeq
\end{proposition}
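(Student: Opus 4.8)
The plan is to collapse both sides to a single formal power series in the inverse eigenvalues $s_i^{-1}$ by exploiting the fundamental supermatrix identity $\sdet A=\ee{\str\ln A}$, equivalently $\ln\sdet A=\str\ln A$, recalled in the appendices. Applying it to each factor $\sdet(s_i-M)^{\pm\gamma}$ inside the expectation value turns the product and ratio of superdeterminants into a single exponential of a supertrace, which I then expand in powers of $M$ and compare termwise with the exponent $-\sum_{k}\frac{t_k}{k}\str M^k$ defining $\mathcal{Z}_{(p|q)}$.

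First I would write, for each $s_i$,
\beq
\str\ln(s_i-M)=\str\ln\!\big(s_i(\mathbf{1}-M/s_i)\big)=(\ln s_i)\,\str\mathbf{1}+\str\ln(\mathbf{1}-M/s_i),
\eeq
where here $\str$ and $\mathbf{1}$ refer to the $(p+q)$-dimensional space of $M$, so that $\str\mathbf{1}=p-q$. Expanding the logarithm as the formal series $\ln(\mathbf{1}-M/s_i)=-\sum_{k\geq1}\frac{1}{k\,s_i^{\,k}}M^k$ and taking the supertrace gives
\beq
\str\ln(s_i-M)=(p-q)\ln s_i-\sum_{k\geq1}\frac{1}{k\,s_i^{\,k}}\,\str M^k .
\eeq

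Next I would assemble the full exponent. Multiplying by $\gamma$ and summing over $i$ with a plus sign for $1\le i\le p'$ and a minus sign for $p'+1\le i\le p'+q'$ — exactly the pattern produced by the ratio of superdeterminants — the constant terms combine into $\gamma(p-q)\big(\sum_{i=1}^{p'}\ln s_i-\sum_{i=p'+1}^{p'+q'}\ln s_i\big)=\gamma(p-q)\ln\sdet S$, while the $\str M^k$ terms collect into $-\sum_{k\ge1}\frac{1}{k}\,\gamma\,\str S^{-k}\,\str M^k=-\sum_{k\ge1}\frac{t_k}{k}\str M^k$, using $t_k=\gamma\str S^{-k}$ (now the supertrace is over the $(p'+q')$-dimensional space of $S$). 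Hence
\beq
\frac{\prod_{j=1}^{p'}\sdet(s_j-M)^{\gamma}}{\prod_{k=p'+1}^{p'+q'}\sdet(s_k-M)^{\gamma}}=(\sdet S)^{\gamma(p-q)}\,\ee{-\sum_{k\ge1}\frac{t_k}{k}\str M^k}.
\eeq
Taking the Gaussian expectation of both sides, the scalar prefactor $(\sdet S)^{\gamma(p-q)}$ factors out, and multiplying by the prefactor $(\sdet S)^{\gamma(q-p)}$ appearing in the statement cancels it exactly, leaving $\langle\ee{-\sum_k\frac{t_k}{k}\str M^k}\rangle=\mathcal{Z}_{(p|q)}(t_1,t_2,\ldots)$, which is the claim.

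The step requiring the most care is the bookkeeping of two distinct supertraces: the one over the $(p+q)$-space of $M$, responsible for the factor $\str\mathbf{1}=p-q$ that produces $(\sdet S)^{\gamma(p-q)}$ and is exactly cancelled by the prefactor, and the one over the $(p'+q')$-space of $S$ that enters through $t_k=\gamma\str S^{-k}$. I also need to stress that every manipulation is an identity of formal power series in the $s_i^{-1}$ (equivalently in the $t_k$): no convergence is claimed, consistent with the hypothesis that the $t_k$ are kept arbitrary and the equality is understood formally. Finally, the use of $\ln\sdet=\str\ln$ together with the logarithmic series for a supermatrix carrying Grassmann-odd blocks is legitimate by the standard theory of Berezinians recalled in Appendices A and B.
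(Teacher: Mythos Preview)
Your proof is correct and follows essentially the same route as the paper: both arguments reduce to the formal identity
\[
\frac{\prod_{j=1}^{p'}\sdet(s_j-M)^{\gamma}}{\prod_{k=p'+1}^{p'+q'}\sdet(s_k-M)^{\gamma}}=(\sdet S)^{\gamma(p-q)}\,\exp\Big(-\sum_{k\ge1}\tfrac{t_k}{k}\str M^k\Big),
\]
obtained by expanding the logarithm of each factor as a formal series in $s_i^{-1}$. The only cosmetic difference is that the paper writes this out in terms of the eigenvalues of $M$ (and $IM$), whereas you invoke the identity $\ln\sdet=\str\ln$ directly; the content is the same.
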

 \begin{proof}
  First, suppose that the eigenvalues of $M$ are $(\lambda_1,\ldots,\lambda_p,\lambda_{p+1},\ldots,\lambda_{p+q})$, so the eigenvalues of $IM$ are   $(\lambda_1,\ldots,\lambda_p,\ima\lambda_{p+1},\ldots,\ima\lambda_{p+q})$.
Second,
formally expand $\sum_{k\geq 1} {t_k} \str M^k/k$ in terms of the
eigenvalues and get \beq \frac{\displaystyle \;\prod_{j=1}^{p'}
\;{s_{j}}^{\gamma(p-q)}}{\displaystyle \prod_{j=p'+1}^{p'+q'}
{s_{j}}^{\gamma(p-q)}} \exp\left(- \sum_{k\geq 1} \frac{t_k}{k}\str
M^k\right)=\frac{\;\;\displaystyle\prod_{i=1}^p
\;\;\prod_{j=1}^{p'}\; (s_{j}-\lambda_i)^\gamma \prod_{i=p+1}^{p+q}
\prod_{j=p'+1}^{p'+q'}(s_{j}-\ima\lambda_i)^\gamma}{\displaystyle\prod_{i=1}^p
\prod_{j=p'+1}^{p'+q'}(s_{j}-\lambda_i)^\gamma\prod_{i=p+1}^{p+q}
\;\;\prod_{j=1}^{p'}\;\;(s_{j}-\ima\lambda_i)^\gamma}. \eeq We can
rewrite the last equation  as \beq \exp\left(- \sum_{k\geq 1}
\frac{t_k}{k}\str M^k\right)=(\sdet
S)^{\gamma(q-p)}\frac{\displaystyle\;\prod_{j=1}^{p'}\;\sdet(s_j-IM)^\gamma}{\displaystyle\prod_{j=p'+1}^{p'+q'}\sdet(s_j-IM)^\gamma}
\eeq
 and the proposition follows.
\end{proof}

\section{Loop equations}\label{sectAlg}

In this section, we show that supermatrix models obey the same loop equations (Schwinger-Dyson equations) as usual matrix models.
Since the solution of loop equations of usual matrix models is known order by order in $\hbar$, in terms of invariants of a spectral curve \cite{EOFg}, then this also gives the solution for supermatrix models.
The only novelty, is that supermatrix models can have more general spectral curves than usual matrix models.

\subsection{Schwinger-Dyson equations}

Schwinger-Dyson equations are called loop equations in the context of matrix models.
They merely proceed from the invariance of an integral under changes of variables.
\smallskip

Let us consider the change of variables $f:M\mapsto M'$.  The measure transforms according to the following rule \cite{Berezin}:
\beq
dM'= J(M',M) dM=\sdet\left(\frac{\partial M'}{\partial M}\right)\,dM
\eeq
 $J$ is the Berezinian, which is a generalisation of the Jacobian.  Note that in the last equation,  $\partial M'/\partial M$ is a supermatrix of size $(p+q)^2$ when the size of $M$ is $p+q$.  For instance, if
\beq
M=\begin{pmatrix}
   a&b\\b^*&d
  \end{pmatrix}
\and
M'=\begin{pmatrix}
   \alpha&\beta\\ \beta^*&\delta
  \end{pmatrix}
\eeq
then
\beq
\frac{\partial M'}{\partial M\phantom{'}}=\begin{pmatrix}
                                \frac{\partial \alpha}{\partial a}&  \frac{\partial \delta}{\partial a}&  \frac{\partial \beta}{\partial a} & \frac{\partial \beta^*}{\partial a\phantom{*}}\\
 \frac{\partial \alpha}{\partial d}&  \frac{\partial \delta}{\partial d}&  \frac{\partial \beta}{\partial d} & \frac{\partial \beta^*}{\partial d \phantom{*}}\\
 \frac{\partial \alpha}{\partial b}&  \frac{\partial \delta}{\partial b}&  \frac{\partial \beta}{\partial b} & \frac{\partial \beta^*}{\partial b\phantom{*}}\\
 \frac{\partial \alpha\phantom{*}}{\partial b^*}&  \frac{\partial \delta\phantom{*}}{\partial b^*}&  \frac{\partial \beta\phantom{*}}{\partial b^*} & \frac{\partial \beta^*}{\partial b^*}
                               \end{pmatrix}
\eeq

Now suppose that the transformation $f$ is infinitesimal, that is
\beq\label{EqdeltaM}
f(M)=M'=M+\epsilon g(M) + O(\epsilon^2),
\eeq
for $\epsilon \ll 1$.  By virtue of $\sdet M=\exp (\str \ln M)$, we get
\beq\label{EqdeltaJ}
J(M',M)=1+\epsilon K(M)+\mathcal{O}(\epsilon^2)
\eeq
where
\beq
 K(M)=\str \left(\frac{\partial g(M)}{\partial M}\right).
\eeq

\begin{lemma} \label{lemmaSplit} Let $A$, $B$, and $C$ be supermatrices of the same type than $M$.  Suppose that
\beq
 g(M)=A\frac{1}{x-B M}C.
\eeq
Then we have the ``splitting rule''
\beq\label{EqSplit}
K(M)= \str \left(A\frac{1}{x-B M}B\right) \str \left(\frac{1}{x-B M}C\right).
\eeq
\end{lemma}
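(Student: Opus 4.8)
The plan is to reduce the computation of $K(M)=\str(\partial g(M)/\partial M)$ to the supertrace of a single ``sandwich'' super-operator acting on the $(p+q)^2$-dimensional space of supermatrices, and then to show that this supertrace factorizes. First I would abbreviate the resolvent by $R=(x-BM)^{-1}$, so that $g(M)=ARC$. Performing the infinitesimal substitution $M\mapsto M+\delta M$ and expanding to first order, the only variation comes from the resolvent: since $\delta(x-BM)^{-1}=-R\,\delta(x-BM)\,R$ and $\delta(x-BM)=-B\,\delta M$, the two sign changes cancel and $\delta R=R\,(B\,\delta M)\,R$. Hence
\beq
\delta g = A R B\,(\delta M)\,R C,
\eeq
which exhibits $\partial g/\partial M$ as the linear super-operator $\mathcal L\colon X\mapsto P\,X\,Q$ with $P=ARB$ and $Q=RC$. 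By the Berezinian expansion $\sdet(\I+\epsilon\,\partial g/\partial M)=1+\epsilon\,\str\mathcal L+\mathcal O(\epsilon^2)$, one has $K(M)=\str\mathcal L$.

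The heart of the argument is then the identity $\str\mathcal L=\str P\,\str Q$ for a sandwich operator of this kind. I would prove it by writing the matrix elements of $\mathcal L$ on the graded space of supermatrices: since $(\mathcal L X)_{ab}=\sum_{c,d}P_{ac}X_{cd}Q_{db}$, moving the coordinate $X_{cd}$ (of parity $\epsilon_c+\epsilon_d$) to the left past $P_{ac}$ (of parity $\epsilon_a+\epsilon_c$) produces $\mathcal L_{(ab),(cd)}=(-1)^{(\epsilon_c+\epsilon_d)(\epsilon_a+\epsilon_c)}P_{ac}Q_{db}$. The supertrace over the composite index weights $(ab)$ by $(-1)^{\epsilon_a+\epsilon_b}$, and on the diagonal $c=a,\ d=b$ the reordering sign collapses to $1$ because $\epsilon_a+\epsilon_c=2\epsilon_a$ is even. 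Therefore
\beq
\str\mathcal L=\sum_{a,b}(-1)^{\epsilon_a+\epsilon_b}P_{aa}Q_{bb}
=\Big(\sum_a(-1)^{\epsilon_a}P_{aa}\Big)\Big(\sum_b(-1)^{\epsilon_b}Q_{bb}\Big)=\str P\,\str Q,
\eeq
the factorization being legitimate since the diagonal entries $P_{aa},Q_{bb}$ are Grassmann-even and so commute. Substituting $P=ARB$ and $Q=RC$ yields exactly \eqref{EqSplit}.

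The main obstacle I anticipate is the careful bookkeeping of the $\mathbb Z_2$-graded signs in two places: the definition of the supertrace of an operator on the space of supermatrices, which weights the composite index $(ab)$ by $(-1)^{\epsilon_a+\epsilon_b}$ in accordance with the bosonic/fermionic block structure of $\partial M'/\partial M$ illustrated in the excerpt; and the reordering of Grassmann factors needed to read off the operator's matrix elements. The whole point is that these signs conspire to become trivial on the diagonal—the reordering sign vanishes because it involves $\epsilon_a+\epsilon_c$ evaluated at $c=a$, and the surviving diagonal entries are even—so the ordinary (bosonic) splitting rule survives verbatim in the graded setting. I would also remark that $x$ is treated as a scalar times the identity, so that $R=(x-BM)^{-1}$ is a well-defined formal supermatrix and $A,B,C$ being of the same (even) type as $M$ guarantees that $P$ and $Q$ are even throughout.
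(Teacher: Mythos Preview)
Your proof is correct, and the core identity you establish---that the sandwich operator $X\mapsto PXQ$ has supertrace $\str P\,\str Q$---is exactly Eq.~\eqref{EqAMB} in the paper. The difference lies in how each proof reaches the point of invoking it. The paper expands the resolvent as a formal geometric series $A(x-BM)^{-1}C=\sum_{k\geq 0}x^{-k-1}A(BM)^kC$, computes $K_k$ for each monomial $A(BM)^kC$ by the Leibniz rule (obtaining a finite sum of sandwich terms), and then resums the resulting double series back into a product of two resolvents. You instead compute the first variation $\delta R=R\,B\,\delta M\,R$ in one step via the resolvent identity, which immediately exhibits $\partial g/\partial M$ as a single sandwich operator; the factorization identity then needs to be applied only once. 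Your route is more economical and sidesteps the formal series manipulation, while the paper's route has the minor pedagogical advantage of displaying the intermediate expressions $K_k(M)$ explicitly. Your sign bookkeeping on the graded supertrace is more careful than the paper's one-line treatment and correctly identifies why the reordering sign trivializes on the diagonal.
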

\begin{proof}
Firstly, we consider $g_1 (M)=AMB$. According to Eq.\ \eqref{EqdeltaJ},
\beq
K_1 (M)=\str \left(\frac{\partial}{\partial M}AMB\right)=\sum_{i,j}\sigma(i)\sigma(j)\frac{\partial}{\partial M_{ij}}(AMB)_{ij}
\eeq
that is
\beq\label{EqAMB}
K_1 (M)=\sum_{i,j}\sigma(i)\sigma(j) A_{ii}B_{jj}=\str A\str B.
\eeq
Secondly, we choose $g_k(M)=A(BM)^kC$ and successively use $K_1(M)$:
\beq
K_k(M)=\str \left(\frac{\partial}{\partial M}A(B M)^kC\right)=\sum_{\ell=1}^\ell\str \left(A (B M)^{\ell-1} B\right)\str \left((B M)^{k-\ell}C\right).
\eeq
We finally set  $g(M)=A(x-BM)^{-1}C=\sum_{k\geq 0} x^{-k-1} A (B M)^k C$, so that
\begin{equation}
K(M)= \sum_{k\geq 1}\frac{1}{x^{k+1}}K_k(M)
=\sum_{k\geq 0}\sum_{\ell=0}^k\frac{1}{x^{\ell+1}}\str (A(B M)^{\ell}B)\frac{1}{x^{k-\ell+1}}\str ((B M)^{k-\ell}C),
\end{equation}
 which is equivalent to equation we wanted to prove.
\end{proof}

\begin{lemma}Using the above notation, suppose that
\beq
  g(M)=A\str\left(\frac{1}{x-B M}C\right)
\eeq
Then we have the ``merging rule''
\beq\label{EqMerge}
K(M)= \str \left(A\frac{1}{x-B M}C \frac{1}{x-B M}B\right) .
\eeq
\end{lemma}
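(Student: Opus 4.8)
The plan is to follow the same strategy as in the proof of Lemma~\ref{lemmaSplit}: expand the resolvent as a geometric series, reduce the computation to monomial building blocks, and then resum. Writing $R=\frac{1}{x-BM}=\sum_{k\ge0}x^{-k-1}(BM)^k$, it suffices to treat the building block $g_k(M)=A\,\str\bigl((BM)^kC\bigr)$ and to show that $K_k(M)=\str(\partial g_k/\partial M)$ resums to $\str(ARCRB)$. Since $g_k(M)$ is the matrix $A$ multiplied by the \emph{scalar} $\str((BM)^kC)$, we have $(g_k)_{ij}=A_{ij}\,\str((BM)^kC)$, so that, using the same identity $\str(\partial h/\partial M)=\sum_{i,j}\sigma(i)\sigma(j)\,\partial h_{ij}/\partial M_{ij}$ that already underlies the splitting rule,
\[
K_k(M)=\sum_{i,j}\sigma(i)\sigma(j)\,A_{ij}\,\frac{\partial}{\partial M_{ij}}\str\bigl((BM)^kC\bigr).
\]

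The key computational tool is the derivative of the resolvent. From $R(x-BM)=1$ one obtains $\frac{\partial R}{\partial M_{ij}}=R\,B\,E_{ij}\,R$, where $E_{ij}$ is the elementary matrix with a single $1$ in position $(i,j)$; equivalently, the graded Leibniz rule applied to $(BM)^k$ distributes the derivative over each factor $M$. Carrying this out gives $\frac{\partial}{\partial M_{ij}}\str(RC)=\sum_a\sigma(a)(RB)_{ai}(RC)_{ja}$, whence after resummation
\[
K(M)=\sum_{i,j}\sigma(i)\sigma(j)\,A_{ij}\sum_a\sigma(a)\,(RB)_{ai}(RC)_{ja}.
\]
The point is that contracting the two free indices $i,j$ against $A_{ij}$ performs a graded \emph{insertion} of $A$ at the slot that was differentiated. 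In the splitting lemma the two indices landed on the \emph{diagonal} of two independent factors, so $\sigma(i)\sigma(j)$ split cleanly into the weights of two separate supertraces; here the indices of $A_{ij}$ are distinct, and they join the pieces $RC$ and $RB$ through $A$ into the single cyclic word $A\,RC\,RB$. By cyclicity of the supertrace this is exactly $\str(ARCRB)$, which is the merging rule.

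The delicate point—and the step I expect to be the main obstacle—is the bookkeeping of the $\mathbb{Z}_2$-gradings. The naive manipulation above leaves the extra sign factors $\sigma(i)\sigma(j)$ inside the index sum, whereas the target $\str(ARCRB)=\sum_a\sigma(a)(ARCRB)_{aa}$ carries only the single supertrace weight $\sigma(a)$. One must therefore show that these superfluous factors are precisely cancelled by the signs generated when the odd entries of $R$, $B$ and $C$ are commuted past one another in the derivative of $\str((BM)^kC)$; equivalently, that the graded insertion $\sum_{i,j}\sigma(i)\sigma(j)A_{ij}\,E_{ij}$ acts as the substitution $E_{ij}\mapsto A$ inside a supertrace. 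I would verify this first on the building block for small $k$, and then promote it to all $k$ by induction using $\str(XY)=(-1)^{\deg X\cdot\deg Y}\str(YX)$—exactly the commutation manipulations already justified in Figures~\ref{figW2} and~\ref{figW3}. Once the signs are under control, the resummation of the geometric series in $x$ is identical to that in Lemma~\ref{lemmaSplit}.
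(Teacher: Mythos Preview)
Your approach is essentially the same as the paper's: reduce to the monomial building blocks $g_k(M)=A\,\str((BM)^kC)$, compute $K_k$, and resum the geometric series. The paper's proof is in fact more terse than your plan: it simply states the base case $K_1(M)=\str(AB)$ for $g_1(M)=A\,\str(MB)$, then records the general formula $K_k(M)=\sum_{\ell=1}^k\str\bigl(A(BM)^{k-\ell}C(BM)^{\ell-1}B\bigr)$ as a consequence, and resums---without any explicit sign discussion.

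One comment on your sign bookkeeping: in passing from $K_k(M)=\sum_{i,j}\sigma(i)\sigma(j)\,\partial_{M_{ij}}(g_k)_{ij}$ to your displayed formula you seem to have dropped the sign incurred when commuting $\partial/\partial M_{ij}$ (parity $\epsilon_i+\epsilon_j$) past $A_{ij}$ (same parity). That commutation produces an additional $\sigma(i)\sigma(j)$, so the ``superfluous'' factors you later worry about actually cancel at this earlier step, leaving $K_k(M)=\sum_{i,j}A_{ij}\,\partial_{M_{ij}}\str((BM)^kC)$. Once you track this, the remaining sign analysis is shorter than you anticipate, and the base case $k=1$ already gives $\str(AB)$ directly---which is exactly where the paper starts.
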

\begin{proof}
Recall Eqs.\ \eqref{EqdeltaM} and \eqref{EqdeltaJ}.  Direct manipulations lead to
\beq
K_1 (M)=\str (AB)\qquad \text{if}\qquad g_1 (M)= A\str (MB).
\eeq
and, as a consequence,
\beq
K_k (M)=\sum_{\ell=1}^k\str(A(B M)^{k-\ell} C(B M)^{\ell-1}B)\qquad \text{if}\qquad g_k (M)= A\str ((B M)^kC).
 \eeq
Now set $g(M)=A\str\left( ({x-BM})^{-1}C\right)$.  Hence
\begin{equation}
K(M)=\sum_{k\geq 1}\frac{1}{x^{k-1}}K_k (M)=\sum_{k\geq0}\sum_{\ell=0}^k\str\left(\frac{1}{x^{k-\ell+1}}A(B M)^{k-\ell}\frac{1}{x^{ \ell+1}}C(B M)^{\ell}B\right)
\end{equation}
 and the lemma follows.
\end{proof}

Let us find the effect of the transformation  $f:M \mapsto M'=M+\epsilon g(M)$ on matrix integrals.  We define the general expectation value of an analytic  function $G$ as
\beq\label{EqExpGen}
\Big\langle G(M)\Big\rangle_M=\int dM \ee{-\frac{1}{\hbar} \str  V(IM)}G(I M)\Big/ \int dM \ee{- \frac{1}{\hbar} \str  V(IM)}
\eeq
Note that the potential $V$ is  a rational function.     The average of the identity matrix is proportional to
\begin{multline}\label{eqTransInt}
 \int dM' \ee{-\frac{1}{\hbar} \str V(IM') }=
\int dM \ee{ -\frac{1}{\hbar }\str V(IM)}
\Big( 1+\epsilon K(M)\\
-\frac{\epsilon}{\hbar }\sum_{i,j}[Ig(M)]_{ij}\frac{\partial}{\partial (IM)_{ij}} V(IM)+\mathcal{O}(\epsilon^2)\Big)
\end{multline}
However,  the measure is invariant under reparametrisation, which means
\beq
\Big\langle G(M)\Big\rangle_{M }=\Big\langle G(M')\Big\rangle_{M' }.
\eeq
Hence, the following equation must be satisfied:
\beq\label{EqSD1}
 \left\langle \Big( K(I^\dagger M)- \sum_{i,j}\Big(Ig(I^\dagger M)\Big)_{ij}\frac{\partial}{\partial M_{ij}}\str V(M)\Big) \right\rangle_{M }=0.
\eeq
Suppose additionally that
\beq
V(M)=\sum_{k\geq 0}g_k \str M^k-\str MY.
\eeq
From
\beq
\frac{\partial}{\partial M_{ij}}\str M^k=k\sigma(i){M^{k-1}}_{ji}\and \frac{\partial}{\partial M_{ij}}\str MY=\sigma (i)Y_{ji},
\eeq
we get, for any supermatrix $A$,
\beq
\sum_{i,j}A_{ij}\frac{\partial}{\partial M_{ij}}\str V(M)=\str AV'(M).
\eeq
The substitution of the last relation into Eq.\ \eqref{EqSD1} leads to the following  Schwinger-Dyson equation, also known as loop equation.

\begin{lemma}  \label{lemmaSD}
Supermatrix model satisfies the following loop equations:
 \beq
 \left\langle   K(I^\dagger M)\right\rangle_{M }=  \frac{1}{\hbar}\left\langle\str Ig(I^\dagger M)V'(M)  \right\rangle_{M }.
\eeq
\end{lemma}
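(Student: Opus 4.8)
The plan is to derive the loop equation as the vanishing of the first-order term in the expansion, in the infinitesimal parameter $\epsilon$, of the supermatrix integral under the change of variables $f:M\mapsto M'=M+\epsilon g(M)$ of Eq.~\eqref{EqdeltaM}, applied to the observable $G=1$. The starting point is that the number $\int dM'\,\ee{-\frac{1}{\hbar}\str V(IM')}$ is literally equal to $\int dM\,\ee{-\frac{1}{\hbar}\str V(IM)}$, since the two differ only by the name of the integration variable; reparametrisation invariance of the measure then forces every coefficient of $\epsilon^k$ with $k\geq 1$ in the $\epsilon$-expansion to vanish, and the coefficient of $\epsilon^1$ is exactly the content of the lemma.

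First I would expand the transformed integrand to first order in $\epsilon$, which has two sources. For the measure, Eq.~\eqref{EqdeltaJ} gives $dM'=(1+\epsilon K(M)+\mathcal{O}(\epsilon^2))\,dM$ with $K(M)=\str(\partial g(M)/\partial M)$. For the action, one writes $V(IM')=V(IM)+\epsilon\sum_{i,j}[Ig(M)]_{ij}\,\partial V(IM)/\partial(IM)_{ij}+\mathcal{O}(\epsilon^2)$ and expands the exponential, which produces the derivative term with its prefactor $\epsilon/\hbar$. Collecting both contributions yields Eq.~\eqref{eqTransInt}; dividing through by the partition function $\int dM\,\ee{-\frac{1}{\hbar}\str V(IM)}$ and rewriting the resulting averages in the notation of Eq.~\eqref{EqExpGen} (so that the arguments $M$ and $Ig(M)$ appear as $I^\dagger M$ and $Ig(I^\dagger M)$ inside the bracket) turns the vanishing of the $\mathcal{O}(\epsilon)$ coefficient into the intermediate identity \eqref{EqSD1}, now carrying the factor $1/\hbar$ on the derivative term.

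The remaining step is purely algebraic: I would collapse the index sum into a single supertrace. For the potential $V(M)=\sum_{k\geq0}g_k\str M^k-\str MY$, the derivative formulas $\partial_{M_{ij}}\str M^k=k\sigma(i)(M^{k-1})_{ji}$ and $\partial_{M_{ij}}\str MY=\sigma(i)Y_{ji}$ combine, for an arbitrary supermatrix $A$, into the identity $\sum_{i,j}A_{ij}\,\partial_{M_{ij}}\str V(M)=\str(A\,V'(M))$. Applying this with $A=Ig(I^\dagger M)$ replaces the sum in \eqref{EqSD1} by $\str(Ig(I^\dagger M)\,V'(M))$, and isolating $\langle K(I^\dagger M)\rangle_M$ gives the stated loop equation, the factor $1/\hbar$ on the right being inherited from the $\epsilon/\hbar$ coefficient identified in the previous paragraph.

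I expect the main obstacle to be bookkeeping rather than anything conceptual: one must keep track of the grading signs $\sigma(i),\sigma(j)$ that enter both the supertrace in $K$ and the Berezinian, and correctly thread the convergence-ensuring factors $I$ and $I^\dagger$ through every average. In particular it must be justified that for super-integration it is the Berezinian, and not the ordinary Jacobian, that appears in $dM'$ (this is the content of \eqref{EqdeltaJ}), and that the formal first-order expansion is legitimate with no surviving boundary contributions, which is standard for these Gaussian-type formal supermatrix integrals.
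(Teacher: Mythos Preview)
Your proposal is correct and follows essentially the same route as the paper: expand the reparametrized integral to first order in $\epsilon$ using the Berezinian for the measure and the chain rule for the action to obtain Eq.~\eqref{eqTransInt}, invoke reparametrization invariance to extract Eq.~\eqref{EqSD1}, and then collapse the index sum via the derivative identities for $\str M^k$ and $\str MY$ to reach $\str(A\,V'(M))$. The bookkeeping points you flag---tracking $\sigma(i)$ in the supertrace and threading the $I,I^\dagger$ factors---are exactly the ones the paper handles in the displayed formulas preceding the lemma.
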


Notice that since the infinitesimal Berezinian $K$ satisfies the same split and merge rule as usual matrix models, we already have that supermatrix models satisfy the same loop equations as usual matrix models.

\subsection{Loop equations for the supermatrix model in an external field}

Here, we are interested in $V(M)=\str v(M)-MY$.

Moreover, in order to close the set of loop equations, it was found in \cite{EOFg} that one should consider loop equations for the following expectation values, or more precisely their connected parts or (joint) cumulants  $\langle G\rangle^c$. \footnote{Cumulants can be defined via
$
\langle G_1(M)\ldots G_n(M)\rangle=\sum_{\pi\{1,\ldots,n\}}\prod_{J\in \pi}\langle G_{j_1}G_{j_2}\ldots\rangle^c,
$
 where the sum is over all partitions $\pi$ of the set $\{1,\ldots,n\}$ while $J=\{j_1,j_2,\ldots\}$ is an element of $\pi$.  For instance
$
\langle AB\rangle=\langle AB \rangle^c+ \langle A \rangle^c\langle   B\rangle^c
$
and
$
\langle ABC\rangle=\langle ABC\rangle^c+ \langle AB  \rangle^c\langle   C\rangle^c+\langle AC  \rangle^c\langle   B \rangle^c+\langle BC  \rangle^c\langle   A\rangle^c+\langle A\rangle^c\langle B \rangle^c\langle   C\rangle^c.
$
}
Specifically, we consider
\beq\label{defwbar}
\bar{w}(z_1,\ldots,z_k)=\left\langle\prod_{i=1}^k\str\frac{1}{z_i-M}\right\rangle^c_M
\eeq
\beq\label{defubar}
\bar{u}(x,y;z_1,\ldots,z_k)=\left\langle \str\frac{1}{x-M} \frac{\mu(y)}{y-Y}\prod_{i=1}^k\str\frac{1}{z_i-M}\right\rangle^c_M
\eeq
\beq\label{defubar}
\bar{p}(x,y;z_1,\ldots,z_k)=\left\langle \str\frac{v'(x)-v'(M)}{x-M} \frac{\mu(y)}{y-Y}\prod_{i=1}^k\str\frac{1}{z_i-M}\right\rangle^c_M
\eeq
where $\mu(y)=\mu(y;Y)$ is the minimal polynomial of $Y$, i.e. the polynomial $\prod_j(y-y_j)$, where the product is over the disctinct eigenvalues of $Y$.

Notice that $\bar{u}$ and $\bar{p}$ are polynomials in the variable $y$, and, if $v'(x)$ is a rational fraction of $x$, then  $\bar{p}(x,y;z_1,\dots,z_k)$ is also a rational fraction of $x$ with the same poles, and with degree one less than $v'(x)$.

\smallskip
In terms of those expectation values, we have the following loop equations:

\begin{proposition}Let $Y$ and $M$ be Hermitian supermatrices  of size $p+q$.  In   Eq.\ \eqref{EqExpGen}, set
\beq
 V(M)=v(M)-MY.
\eeq
 Then, for every set of variables $J=\{z_1,\dots,z_k\}$, we have the loop equation:
 \begin{multline}\label{loopeq1}
\bar{u}(x,y;J\cup\{x\}) + \sum_{I\subset J} \bar{u}(x,y;I)\bar{w}(x,J\setminus I) + \sum_{j=1}^k {\partial \over \partial z_j}\,
{\bar{w}(x,J \setminus \{z_j\})-\bar{w}(z_j,J\setminus \{z_j\})\over x-z_j}  \\
=  {1\over \hbar}\Big[ (v'(x)-y)\bar{u}(x,y;J) -\bar{p}(x,y;J) +\mu(y) \bar{w}(x,J) \Big]
 \end{multline}
which is the same loop equation as the usual 1-matrix model in an external field.

\end{proposition}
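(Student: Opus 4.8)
The plan is to obtain the loop equation directly from the Schwinger--Dyson identity of Lemma \ref{lemmaSD}, applied to the infinitesimal reparametrization $M\mapsto M+\epsilon\,g(M)$ with
\beq
g(M)=\frac{1}{x-M}\,\frac{\mu(y)}{y-Y}\,\prod_{j=1}^k\str\frac{1}{z_j-M},
\eeq
and then to evaluate both sides with the splitting rule \eqref{EqSplit} and the merging rule \eqref{EqMerge}. Since the $k$ scalar traces merely multiply the supermatrix $\frac{1}{x-M}\frac{\mu(y)}{y-Y}$, the object $g$ is a genuine supermatrix and Lemma \ref{lemmaSD} applies verbatim. The guiding principle is that, because $K$ obeys \emph{exactly} the same split and merge rules as the ordinary Jacobian, the whole computation is formally identical to the one for the usual Hermitian one--matrix model in an external field; the only point requiring care is that the $\mathbb{Z}_2$--grading introduces no extra terms.

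First I would compute the left-hand side $\langle K(I^\dagger M)\rangle$. Writing $K=\str(\partial g/\partial M)$ and applying Leibniz, the derivative either acts on the factor $\frac{1}{x-M}$ or on one of the traces $\str\frac{1}{z_j-M}$. The first case is governed by the splitting rule \eqref{EqSplit} and produces $\str\frac{1}{x-M}\cdot\str(\frac{1}{x-M}\frac{\mu(y)}{y-Y})\prod_j\str\frac{1}{z_j-M}$; expanding this full expectation into connected cumulants yields exactly the fully connected term $\bar u(x,y;J\cup\{x\})$ together with the factorized terms $\sum_{I\subset J}\bar u(x,y;I)\,\bar w(x,J\setminus I)$. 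The second case is governed by the merging rule \eqref{EqMerge}, which inserts a double resolvent; using $\frac{1}{(z_j-M)^2}=-\partial_{z_j}\frac{1}{z_j-M}$ together with the resolvent identity $\frac{1}{x-M}\frac{1}{z_j-M}=\frac{1}{x-z_j}\big(\frac{1}{z_j-M}-\frac{1}{x-M}\big)$, these contributions reorganize into the derivative terms $\sum_{j}\partial_{z_j}\frac{\bar w(x,J\setminus\{z_j\})-\bar w(z_j,J\setminus\{z_j\})}{x-z_j}$.

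Next I would compute the right-hand side $\frac{1}{\hbar}\langle\str Ig(I^\dagger M)V'(M)\rangle$ with $V'(M)=v'(M)-Y$. For the $v'(M)$ piece I would move $v'(M)$ next to the resolvent by cyclicity of the supertrace and use $\frac{v'(M)}{x-M}=v'(x)\frac{1}{x-M}-\frac{v'(x)-v'(M)}{x-M}$, so that the supertrace splits into $v'(x)\,\bar u(x,y;J)$ and $-\bar p(x,y;J)$, the latter matching the very definition of $\bar p$. For the $-Y$ piece I would exploit that $\mu(y)$ is the minimal polynomial of $Y$, so that $\frac{\mu(y)\,Y}{y-Y}=y\,\frac{\mu(y)}{y-Y}-\mu(y)\,\I$ is polynomial in $Y$; taking supertraces and connected expectations converts it into $-y\,\bar u(x,y;J)+\mu(y)\,\bar w(x,J)$. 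Collecting the four pieces reproduces $\frac{1}{\hbar}\big[(v'(x)-y)\bar u(x,y;J)-\bar p(x,y;J)+\mu(y)\bar w(x,J)\big]$.

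The main obstacle is bookkeeping rather than conceptual: one must track the signs generated by the $\mathbb{Z}_2$--grading when cycling and permuting factors inside supertraces and when passing from full expectation values to connected cumulants, and one must handle the non-commutativity of the integration variable $M$ with the external field $Y$ in the merge step. The decisive simplification, which is precisely the content of Lemmas \ref{lemmaSplit}, \eqref{EqMerge} and \ref{lemmaSD}, is that the infinitesimal Berezinian $K$ satisfies identically the splitting and merging rules of the bosonic theory, so no grading-dependent terms appear that are absent in the ordinary case. Consequently the resulting identity coincides term by term with the loop equation of the usual Hermitian one-matrix model in an external field, which is what we had to prove.
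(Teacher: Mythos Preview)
Your approach is the same as the paper's: choose $g$ to be the resolvent $\frac{1}{x-M}$ times $\frac{\mu(y)}{y-Y}$ times the product of trace resolvents, then apply Lemma~\ref{lemmaSD} together with the splitting (and merging) rules. The paper's proof is in fact terser than yours---it only records the choice of $g$ (with the $I^\dagger$ and $I$ factors dictated by the conventions of Lemma~\ref{lemmaSD}, which you omit but which are cosmetic) and says ``a few manipulations complete the proof''---so your explicit handling of the Leibniz split, the merging rule for the $z_j$-derivative terms, and the $v'(M)$ and $Y$ contributions is a faithful expansion of the same argument.
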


\begin{proof} We   choose
\beq
g(M)=I^\dagger \frac{1}{x-IM}\frac{\mu(y)}{y-Y}\, \prod_{j=1}^k \str\frac{1}{z_j-M}
\eeq
and apply the splitting rule of Lemma \ref{lemmaSplit}  to the Schwinger-Dyson equation of Lemma \ref{lemmaSD}.  A few manipulations complete the proof.
\end{proof}

Since loop equations arise from local changes of variables, it is clear that
loop equations may have many solutions, in fact as many as possible open integration domains in which matrices are integrated.
Here, the integration domain is specified by defining our matrix integral as formal series, such that each term in the series is a polynomial moment of a Gaussian integral, like in Eq.\ \eqref{EqPartFormal}.

Moreover we have seen from theorem \ref{PropCentral}, that our formal supermatrix integrals have a topological expansion
\beq
\bar{w}(x_1,\dots,x_n) = \sum_{g=0}^\infty \hbar^{2g-2+n}\,\bar{w}^{(g)}(x_1,\dots,x_n)
\eeq

It was found in \cite{EOFg}, that there is a unique formal power series solution of loop equations having such a topological expansion, and that unique solution was computed in terms of the "symplectic invariants" of a spectral curve.
We explain below how to find the spectral curve.

\subsection{Spectral curve}
\label{SectionSpectral}

The spectral curve can be found from the loop equation with $J=\emptyset$; that is,
\beq
\hbar\, \bar{u}(x,y;x)
=   (v'(x)-y-\hbar \bar{w}(x))\,\bar{u}(x,y) -\bar{p}(x,y) +\mu(y) \bar{w}(x)
 \eeq
 and if we expand it into powers of $\hbar$, to leading order we have:
\beq
0 =   (v'(x)-y-\bar{w}^{(0)}(x))\,\bar{u}^{(0)}(x,y) -\bar{p}^{(0)}(x,y) +\mu(y) \bar{w}^{(0)}(x)
\eeq
We see this functional equation is greatly simplified if the second term disappears, i.e., if we choose $y=v'(x)-\bar{w}^{(0)}(x)$.  Moreover, in order to obtain algebraic relations, we introduce  $D(x)$, the denominator of $v'(x)$.

\begin{corollary}\label{coroSpectral}   Define two scalar functions of one and two complex variables respectively:
\beq
\mathcal{Y}(x)=v'(x)-\bar{w}^{(0)}(x)
\eeq
and $E_\mathrm{ext}$ which is a polynomial of its two variables such that
\beq\label{defspcurveEext}
D(x)^{-1}\,E_\mathrm{ext}(x,y)=\left(v'(x)-y\right)\mu(y)-\bar p^{(0)}(x,y).
\eeq
 Then, the following algebraic equation holds:
\beq\label{Extxy0}
E_\mathrm{ext}(x,\mathcal{Y}(x))=0.
\eeq
\end{corollary}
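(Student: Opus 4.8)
The plan is to read off the result from the leading-order loop equation displayed immediately before the statement,
\beq
0 = (v'(x)-y-\bar{w}^{(0)}(x))\,\bar{u}^{(0)}(x,y) -\bar{p}^{(0)}(x,y) +\mu(y)\,\bar{w}^{(0)}(x),
\eeq
which arises by putting $J=\emptyset$ in the loop equation \eqref{loopeq1} and collecting the lowest order in $\hbar$ from the topological expansions of $\bar w$, $\bar u$ and $\bar p$. This is an identity in the two independent variables $x$ and $y$, so I am free to specialize $y$ to any function of $x$.

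The whole point of the definition $\mathcal{Y}(x)=v'(x)-\bar{w}^{(0)}(x)$ is that it makes the coefficient of $\bar{u}^{(0)}$ vanish: substituting $y=\mathcal{Y}(x)$ gives $v'(x)-\mathcal{Y}(x)-\bar{w}^{(0)}(x)=0$, so that term drops out. Before relying on this I would note that the cancellation is genuine rather than an indeterminate $0\cdot\infty$: as recorded just above the statement, $\bar u$ (and hence $\bar u^{(0)}$) is a polynomial in $y$, so $\bar{u}^{(0)}(x,\mathcal{Y}(x))$ is finite and the term really is zero.

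With that term gone, evaluating the leading-order equation at $y=\mathcal{Y}(x)$ leaves
\beq
\bar{p}^{(0)}(x,\mathcal{Y}(x)) = \mu(\mathcal{Y}(x))\,\bar{w}^{(0)}(x) = \big(v'(x)-\mathcal{Y}(x)\big)\,\mu(\mathcal{Y}(x)),
\eeq
the last equality being just $\bar{w}^{(0)}(x)=v'(x)-\mathcal{Y}(x)$ from the definition of $\mathcal{Y}$. It then remains to substitute $y=\mathcal{Y}(x)$ into the defining relation \eqref{defspcurveEext},
\beq
D(x)^{-1}\,E_\mathrm{ext}(x,\mathcal{Y}(x)) = \big(v'(x)-\mathcal{Y}(x)\big)\,\mu(\mathcal{Y}(x)) - \bar{p}^{(0)}(x,\mathcal{Y}(x)),
\eeq
whose right-hand side vanishes by the previous display. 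Since $D(x)$ is the denominator of the rational fraction $v'(x)$ and is not identically zero, multiplying through by $D(x)$ yields $E_\mathrm{ext}(x,\mathcal{Y}(x))=0$.

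I do not expect a real obstacle here: once the leading-order loop equation is available the argument is a single substitution followed by bookkeeping with the definitions of $\mathcal{Y}$ and $E_\mathrm{ext}$. The only step deserving a remark is the clean vanishing of the $\bar u^{(0)}$ contribution, which is secured by its polynomial dependence on $y$.
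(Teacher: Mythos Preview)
Your argument is correct and matches the paper's own reasoning: set $J=\emptyset$ in the loop equation, pass to leading order in $\hbar$, and substitute $y=\mathcal{Y}(x)$ to kill the $\bar u^{(0)}$ term, after which the definition of $E_{\mathrm{ext}}$ yields the result. Your explicit remark that $\bar u^{(0)}(x,y)$ is polynomial in $y$, so no $0\cdot\infty$ ambiguity arises, is a nice bit of care that the paper leaves implicit.
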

The algebraic plane curve defined by this equation is called the spectral curve.

As we said, the unique formal series solution of those loop equations of the form \eqref{loopeq1}, having a topological expansion in $\hbar^2$,  was computed in  \cite{EOFg} in terms of the "symplectic invariants" of the spectral curve:
\beq
{\mathcal E}(x,y)=E_\mathrm{ext}(x,y) = 0.
\eeq
It fact, to any spectral curve ${\mathcal E}$, one can associate an infinite sequence of numbers $\mathcal F^{(g)}({\mathcal E})$, $g=0,1,2,\dots$.
The $\mathcal F^{(g)}$'s are computed in terms of  the spectral curve, and are residues of birational expressions of $x$ and $y$. We refer the reader to \cite{EOFg} for detailed computations of the $\mathcal F^{(g)}$'s.
The result is that:
\beq
 \ln{Z} = \sum_{g=0}^\infty\, \hbar^{2g-2}\, \mathcal F^{(g)}({\mathcal E})
\eeq

The functions  $\mathcal F^{(g)}$ are called the symplectic invariants of the spectral curve because, as exposed in the following theorem, they remain unchanged if one changes the spectral curve without changing the symplectic form $dx\wedge dy$.

\begin{theorem}\cite{EOFg} \label{TeoSymplectic} For all $g\geq 2$, the free
energy $\mathcal{F}^{(g)}(\mathcal{E})$ is invariant under the
following  transformations of the algebraic equation
$\mathcal{E}(x,y)=0$:
\beq\begin{array}{lll}
(1) & x\mapsto x,  &y\mapsto y+R(x)  \\
(2)& x\mapsto c x, & y\mapsto y/c  \\
(3)& x\mapsto -x, & y\mapsto y  \\
(4)& x\mapsto y , & y\mapsto x
\end{array}\eeq
where $R$ is a rational function and $c$ is a complex number.
$\mathcal{F}^{(0)}$ is invariant under the four transformations if
$R$ is replaced by a polynomial $P$.  $\mathcal{F}^{(1)}$ is
invariant under the four transformations up to an additive factor of
$\ima \pi/12$.  These transformations preserve, up to a sign, the
symplectic form $dx\wedge dy$.\end{theorem}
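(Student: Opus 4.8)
The plan is to prove Theorem \ref{TeoSymplectic} transformation by transformation, working inside the topological-recursion machinery of \cite{EOFg} that produces the $\mathcal{F}^{(g)}$. Recall the data attached to the curve $\mathcal{E}(x,y)=0$: the branch points $a_i$, defined as the zeros of $dx$ on the curve; the fundamental bidifferential (Bergman kernel) $B(z_1,z_2)$, which depends only on the complex structure of the curve and not on the particular functions $x,y$; a recursion kernel $K(z_0,z)$ built from $B$ and from $y\,dx$; the correlators $\omega^{(g)}_n$ defined recursively as residues of $K$ at the $a_i$; and finally, for $g\geq 2$,
\beq
\mathcal{F}^{(g)}=\frac{1}{2-2g}\sum_i \Res_{z\to a_i}\Phi(z)\,\omega^{(g)}_1(z),
\eeq
where $\Phi$ is any local primitive of $y\,dx$ near the $a_i$.

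First I would dispose of transformations (1), (2), (3), which are elementary. For (1), $y\mapsto y+R(x)$, the form $dx$ and hence the branch points $a_i$ are unchanged, and $B$ is intrinsic, so $K$ and all correlators $\omega^{(g)}_n$ with $(g,n)\neq(0,1)$ are literally unchanged; the modification affects only $\omega^{(0)}_1=y\,dx$. One then checks that, for $g\geq 2$, the residue defining $\mathcal{F}^{(g)}$ is unaffected, because the extra term $R(x)\,dx$ is holomorphic at the zeros of $dx$ and so contributes nothing to $\Res_{z\to a_i}\Phi\,\omega^{(g)}_1$. For (2), the scaling $(x,y)\mapsto(cx,y/c)$ leaves both $y\,dx$ and $B$ invariant, hence leaves $K$, every $\omega^{(g)}_n$, and each $\mathcal{F}^{(g)}$ invariant outright. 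For (3), $x\mapsto-x$ sends $dx\mapsto-dx$ while fixing the set of branch points and keeping $B$ intrinsic; a routine sign-tracking through $K$ and the residue formula shows $\mathcal{F}^{(g)}$ is preserved.

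Then I would treat transformation (4), $x\leftrightarrow y$, which is the real content of the theorem and the main obstacle. The difficulty is structural: the recursion is built on the zeros of $dx$, whereas after the exchange it is built on the zeros of $dy$—an entirely different set of points on the same Riemann surface—so there is no term-by-term matching of the correlators. The strategy I would follow is the deformation argument of \cite{EOFg}: connect the curve $\mathcal{E}(x,y)=0$ to its reflection $\mathcal{E}(y,x)=0$ by a one-parameter family $\mathcal{E}_t$ of spectral curves that keeps $dx\wedge dy$ fixed up to sign, and show $\partial_t\mathcal{F}^{(g)}=0$ along the family. This rests on the variational (form-cycle duality) formula of \cite{EOFg}, which expresses the derivative of $\mathcal{F}^{(g)}$ under an infinitesimal deformation as a residue pairing of $\omega^{(g)}_1$ with the one-form generating the deformation; combined with the symmetry and pole structure of the $\omega^{(g)}_n$, this pairing is shown to vanish for any deformation preserving $dx\wedge dy$.

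The most delicate step is controlling the exceptional cases $g=0$ and $g=1$, for which the residue formula above does not apply directly. For $\mathcal{F}^{(0)}$ one must restrict $R$ to a polynomial $P$ in (1), since the prepotential is sensitive to the behaviour of $y\,dx$ at infinity, and the deformation computation must be redone retaining the boundary terms that were negligible for $g\geq 2$. For $\mathcal{F}^{(1)}$, which is essentially a regularized determinant (a Bergman tau-function), the exchange $x\leftrightarrow y$ produces a universal anomaly; tracking this anomaly through the change from the zeros of $dx$ to those of $dy$ yields exactly the additive shift of $\ima\pi/12$ stated in the theorem. I expect the bookkeeping of these boundary and anomaly terms—rather than the main $g\geq 2$ argument—to be where the proof becomes genuinely technical.
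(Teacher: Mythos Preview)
The paper does not contain a proof of this theorem: it is quoted from \cite{EOFg} and used as a black box, so there is no argument here against which to compare your proposal. Your sketch is a reasonable outline of the strategy actually used in \cite{EOFg}---transformations (1)--(3) are indeed handled by direct inspection of the recursion data, and the exchange $x\leftrightarrow y$ is the substantive part, established there via the variational formulae for the $\mathcal{F}^{(g)}$ under symplectic-form-preserving deformations---but the present paper simply imports the result.
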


We stress that previous theorem implies that for all $g\geq0$, $\mathcal{F}^{(g)}({\mathcal E})$ is invariant under the exchange of $x$ and $y$ in the $\mathcal{E}(x,y)=0$.  This invariance  will be exploited for proving the duality for the Gaussian model.

\section{Gaussian model with sources and external fields}

\subsection{Spectral curve}
We now focus on the study of the partition function given in Eq.\ \eqref{eqPartFS}.  For this, we choose
\beq
V(M)=v(M)-MY, \qquad  v(M)=\frac{1}{2}M^2-\hbar\ln\prod_{i=1}^{m+n}\left(x_i-M\right)^{\sigma_{m,n}(i)}
\eeq
 where $M$ is a Hermitian supermatrix of size $p+q$.  We  suppose that the variables $x_i$'s are the eigenvalues of a Hermitian matrix of size $m+n$.  Amongst these eigenvalues  only $m'+n'$, say,  are distinct.  This means that
\beq
\sdet (z-X)=\prod_{i=1}^{m+n}(z-x_i)^{\sigma_{m,n}(i)}=\prod_{i=1}^{m'+n'}(z-x_i)^{a_i}
\eeq
where the signed multiplicities $a_i$ are such that
\beq
m=\sum_{i=1}^{m'}a_i \and n=-\sum_{i=m'+1}^{m'+n'}a_i.
\eeq
Similarly,
\beq
\sdet (z-Y)=\prod_{i=1}^{p+q}(z-y_i)^{\sigma_{p,q}(i)}=\prod_{i=1}^{p'+q'}(z-y_i)^{b_i}.
\eeq
Thus, the minimal polynomial of $Y$ and the denominator of $v'$ are respectively given by
\beq
\mu(y)=\mu(y;Y)=\prod_{i=1}^{p'+q'}(y-y_i) \and
D(x)=\mu(x;X)=\prod_{j=1}^{m'+n'}(x-x_i)
\eeq
The spectral curve given in Corollary \ref{coroSpectral} , now becomes
\begin{multline}\label{spcurvegenGaus}
\mathcal{E}(x,y)=E_\mathrm{ext}(x,y)=\prod_{i} (x-x_i)\prod_{j} (y-y_j)\left(x-\hbar\sum_{i}\frac{a_i}{x-x_i}\right.\\
\left.-y-\hbar\sum_{j} \frac{b_j}{y-y_j}+\hbar^2\sum_{i,j} \frac{a_i}{x-x_i}\frac{b_j}{y-y_j}\left\langle \left( \frac{1}{x_j-M}\right)_{jj}\right\rangle\right)
\end{multline}
where it is assumed that $i\in\{1,\ldots,m'+n'\}$ and $j\in\{1,\ldots,p'+q'\}$.

\smallskip

\begin{remark}
This is more or less the same spectral curve as in the usual hermitian matrix model in an external field, with one notable difference.
In the usual matrix model, the coefficients $b_j$ are necessarily positive integers, and here, supermatrix models allow more general spectral curves, where $b_j$'s can also be negative integers.
\end{remark}

\subsection{Geometry of the spectral curve}

The geometry of spectral curves of the type \ref{spcurvegenGaus} has been studied many times, and is a standard exercise of Riemann geometry \cite{Farkas, Fay}. Here, we only briefly summarize the main points, and we follow the same lines as \cite{EOFg}.

The spectral curve $\mathcal{E}(x,y)=0$ defines a unique compact Riemann surface.
Generically, this algebraic Riemann surface has genus at most $(m'+n')(p'+q')-1$.

The genus can be lower than that. In particular, if our supermatrix model is defined as perturbation of a Gaussian integral near $M=0$, (as we did in Proposition \ref{PropModelGauss}), then the spectral curve must have genus $0$, i.e. it must be a rational spectral curve.

But in general, the genus can be anything between $0$ and $(m'+n')(p'+q')-1$.

Notice that in Eq.\ \eqref{spcurvegenGaus}, the coefficients $\left\langle \left( \frac{1}{x_j-M}\right)_{jj}\right\rangle$ have not been determined.
Those coefficients cannot be determined by loop equations, they are related to the integration domain for the supermatrix integral.
Those coefficients are in 1-1 correspondence with the  "filling fractions":
\beq\label{ffcondition}
\epsilon_i = {1\over 2i\pi}\,\oint_{{\mathcal A}_i} ydx
\eeq
where ${\mathcal A}_i, i=1,\dots, {\rm genus}$ is a set of independent non-contractible cycles on the Riemann surface.

A choice of integration domain, is equivalent to a choice of those coefficients, and thus is equivalent to a choice of filling fractions.

\smallskip

The spectral curve is then described by this Riemann surface, and by two meromorphic functions $x(z)$ and $y(z)$ defined on that surface.
Those two functions can be completely described by their poles and by their cycle integrals \eqref{ffcondition}.

\subsection{Poles}

The spectral curve can be studied by determining the singularity structure of
the algebraic equation.   More precisely, let two complex function
$x$ and $y$.  Let $z$ belong to a the Riemann surface associated to
the algebraic equation if $\mathcal{E}(x(z), y(z))=0$ for all z on
the surface.

In our case, it is clear that the algebraic equation \eqref{spcurvegenGaus} has only three
types of singularity, all being poles.
First, it is obvious that the polynomial  $\mathcal{E}(x,y)$ diverges when both $x$ and $y$ become
infinite. The equation $\mathcal{E}(x,y)=0$ gives in that limit:
\beq
y \sim x - {\hbar\,(\sum_i a_i+\sum_j b_j)\over x} + O(1/x^2)=x - {\hbar\,(m-n+p-q)\over x} + O(1/x^2)
\eeq
Second,  we see that $\mathcal{E}$ also diverges when $x$ tends to
$x_i$.  We call $\xi_i$ the point on the surface such that
$x(\xi_i)=x_i$ and, in order to comply with the algebraic
equation $\mathcal{E}(x,y)=0$, we must have in that limit:
\beq
y \sim -\,{\hbar\, a_i\over x-x_i} + O(1)
\eeq
Third, the polynomial goes to infinity if $y$ approaches $y_i$, which
correspond to a diverging $x$. We thus set $y(\eta_i)=y_i$ and we have in this limit:
\beq
x\sim {\hbar\, b_i\over y-y_i} + O(1)
\eeq

The above equations yield the following characterization of $x$ and $y$ and, as a consequence, of the spectral curve.

\begin{lemma}\label{LemmaCurve}
The meromorphic functions $x$ and $y$, have simple poles at the points:
\begin{equation}\label{polestructurexy}
 \begin{array}{lcll}z=\infty &:& x=\infty,&
y=\infty\\
z=\xi_i & : & x=x_i,& y=\infty\\
z=\eta_j & : & x=\infty, &y=y_j.
\end{array}
\end{equation}
The spectral curve ${\mathcal E}_{{(m|n),(p|q)},(X,Y,\epsilon)}$ is complectly characterized by the residues
\begin{equation}\label{poleresxdyetai}
 \begin{split}
\Res_{z\to\infty} ydx &= \phantom{-}\hbar(m-n+p-q) \\
\Res_{z\to\infty} x dy &= - \hbar(m-n+p-q)
\\
\Res_{z\to \xi_i} ydx &= -\hbar \, a_i
\\
 \Res_{z\to \eta_i} xdy &= \phantom{-}\hbar\, b_i
\end{split}
\end{equation}
and the filling fractions
\beq
\epsilon_i = {1\over 2i\pi}\,\oint_{{\mathcal A}_i} ydx \label{ffcondition1}
\eeq
where we recall that a choice of integration domain, is equivalent to a choice of filling fractions.
\end{lemma}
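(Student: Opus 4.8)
The plan is to extract the pole table and the four residues directly from the three local expansions derived immediately before the statement, and then to obtain the \emph{complete characterization} from the classical uniqueness theorem for abelian differentials on a compact Riemann surface. For the pole structure itself there is essentially nothing to do: the singularity analysis preceding the lemma gives $y\sim x-\hbar(m-n+p-q)/x$ near $z=\infty$, $y\sim-\hbar a_i/(x-x_i)$ near $z=\xi_i$ with $x$ regular, and $x\sim\hbar b_j/(y-y_j)$ near $z=\eta_j$ with $y$ regular. In every case the leading term is a constant over a local coordinate, so each pole of $x$ and of $y$ is simple and located as in \eqref{polestructurexy}.

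Next I would read off the residues of the one-forms $ydx$ and $xdy$ from these expansions. Using $x-x_i$ as local coordinate at $\xi_i$ gives $\Res_{z\to\xi_i} ydx=-\hbar a_i$, and using $y-y_j$ at $\eta_j$ gives $\Res_{z\to\eta_j} xdy=\hbar b_j$. At $z=\infty$ the only term of $ydx$ producing a residue is $-\hbar(m-n+p-q)\,dx/x$, and since the residue at infinity reverses the sign of this coefficient one finds $\Res_{z\to\infty} ydx=\hbar(m-n+p-q)$; because $ydx+xdy=d(xy)$ is exact, $\Res_{z\to\infty} xdy=-\hbar(m-n+p-q)$. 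As a check, the residue theorem for $ydx$ (using $\Res_{z\to\eta_j} ydx=-\hbar b_j$) reads $\hbar(m-n+p-q)-\hbar\sum_i a_i-\hbar\sum_j b_j=0$, which indeed holds since $\sum_i a_i=m-n$ and $\sum_j b_j=p-q$.

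For the characterization I would view $ydx$ as a meromorphic differential on the compact surface $\mathcal{C}$ defined by $\mathcal{E}(x,y)=0$ and invoke the standard fact that such a differential is determined by its principal parts up to a holomorphic differential, the latter being fixed by its $\mathcal{A}$-periods thanks to the non-degeneracy of the period matrix (a holomorphic differential all of whose $\mathcal{A}$-periods vanish is identically zero). Because all poles of $x$ and $y$ are simple, the only coordinate-free singular data of $ydx$ are its residues---any germ $t^{-k}dt$ with $k\ge2$ is locally exact and carries no residue---so the singular part of $ydx$ is pinned down by the four residues above, while the remaining $g$-dimensional freedom is removed by prescribing the filling fractions $\epsilon_i$ of \eqref{ffcondition1}. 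Hence $ydx$, and with it the functions $x$ and $y$ and the curve $\mathcal{E}$, is uniquely determined.

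The delicate point is this last step. One must justify that the \emph{second-kind} contributions of $ydx$ at $z=\infty$ and at the $z=\eta_j$ (where $ydx$ genuinely has poles of order three and two) carry no independent parameters: the non-residue leading behaviors $y\sim x$ and $x\sim\hbar b_j/(y-y_j)$ are forced by the algebraic equation and involve no constants beyond those already recorded in the residues, so the second-kind principal parts are not free data. Closing this gap rigorously, together with the existence half of the assertion---for $g>0$ not every assignment of simple-pole principal parts is realizable, so one needs the Riemann--Roch and theta-function machinery---is where I would lean on the standard treatments \cite{Farkas,Fay} and on the explicit constructions of \cite{EOFg}.
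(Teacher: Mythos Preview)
Your proposal is correct and follows the paper's line exactly where the paper has one: the three local expansions you quote are precisely the ones derived in the paragraphs immediately preceding the lemma, and the paper's entire justification is the sentence ``The above equations yield the following characterization of $x$ and $y$ and, as a consequence, of the spectral curve.'' In other words, the paper offers no separate proof beyond the asymptotic analysis; the lemma is presented as a summary.

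Where you go beyond the paper is in the second half. The paper does not spell out the residue computations (your verification, including the residue-theorem consistency check, is a useful addition), and it never explicitly invokes the uniqueness theorem for abelian differentials or discusses the second-kind principal parts at $\infty$ and at the $\eta_j$. Your identification of that ``delicate point''---that the order-three and order-two pieces of $ydx$ are forced by the leading behaviors $y\sim x$ and $x\sim\hbar b_j/(y-y_j)$ and introduce no free parameters beyond those already fixed---is exactly the content the paper leaves to the references \cite{Farkas,Fay,EOFg}. So your argument is not a different route; it is the paper's route, filled in with the standard Riemann-surface machinery that the paper simply cites.
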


\section{Duality}

We have determined the spectral curve, which allows in principle  to construct the free energy via the following expansion:
Then, when we have determined the spectral curve, we have:
\beq\label{EqDefZFcurve}
\ln{(Z_{(m|n),(p|q)}(X,Y,\epsilon))} = \sum_{g=0}^\infty \hbar^{2g-2}\,\, \mathcal{F}_g(\mathcal{E}_{{(m|n),(p|q)},(X,Y,\epsilon)}),
\eeq
where $\mathcal{F}_g(\mathcal{E}_{{(m|n),(p|q)},(X,Y,\epsilon)})$ denotes the symplectic invariant $\mathcal{F}_g(\mathcal{E})$ defined
in \cite{EOFg} for the particular spectral curve $\mathcal{E}=\mathcal{E}_{{(m|n),(p|q)},(X,Y,\epsilon)}$.

\begin{theorem} \label{TeoDualityFilling} The following duality holds:
\beq
Z_{(m|n),(p|q)}(X,Y,\epsilon) = Z_{(p|q),(m|n)}(Y,X,-\epsilon)
\eeq
\end{theorem}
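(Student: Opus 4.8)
The plan is to turn the claimed identity of partition functions into an identity of symplectic invariants and then to invoke the exchange symmetry of Theorem~\ref{TeoSymplectic}. Because both sides carry \emph{the same} formal parameter $\hbar$, the expansion \eqref{EqDefZFcurve} reduces the statement to proving, for every $g\geq 0$,
\beq
\mathcal{F}_g\big(\mathcal{E}_{(m|n),(p|q),(X,Y,\epsilon)}\big)=\mathcal{F}_g\big(\mathcal{E}_{(p|q),(m|n),(Y,X,-\epsilon)}\big).
\eeq
The whole problem is thereby moved to the level of spectral curves: I would show that the curve attached to the dual model is the image of the curve attached to the original model under a composition of the transformations listed in Theorem~\ref{TeoSymplectic}, the crucial one being the exchange $x\mapsto y,\ y\mapsto x$ (item~(4)), under which every $\mathcal{F}_g$ is invariant.

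To compare the two curves I would use only their intrinsic characterization in Lemma~\ref{LemmaCurve}, namely the locations of the simple poles of $x$ and $y$, the residues of $y\,dx$ and $x\,dy$ there, and the filling fractions. The key observation is that passing from $Z_{(m|n),(p|q)}(X,Y,\epsilon)$ to $Z_{(p|q),(m|n)}(Y,X,-\epsilon)$ literally interchanges the sources and the external field, so that the points $\xi_i$ (where $x=x_i$, carrying $\Res y\,dx=-\hbar a_i$) and the points $\eta_j$ (where $y=y_j$, carrying $\Res x\,dy=\hbar b_j$) are swapped with the points $\tilde\eta_i$ and $\tilde\xi_j$ of the dual curve, the signed multiplicities $a_i,b_j$ being transported correctly. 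The sign flip $\epsilon\mapsto-\epsilon$ of the filling fractions is then forced, and is the cleanest piece of the argument: since $d(xy)$ is exact, $\oint_{\mathcal{A}_i}\tilde y\,d\tilde x=\oint_{\mathcal{A}_i}x\,dy=-\oint_{\mathcal{A}_i}y\,dx$, so the cycle integrals defining the dual filling fractions are exactly the negatives of the original ones.

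The step I expect to be the main obstacle is the precise matching of residues, because the bare exchange does \emph{not} carry one curve onto the other on the nose. Indeed $\Res x\,dy=-\Res y\,dx$ at every pole, so the swap reverses all residues and turns $v'(x)-y$ into $-(v'(x)-y)$ at leading order; to repair this one must compose the exchange with the shear (item~(1)), the scaling (item~(2)) and the reflection (item~(3)) of Theorem~\ref{TeoSymplectic}, all of which still preserve $\mathcal{F}_g$, and then verify that one such composition sends \eqref{spcurvegenGaus} with the original data to \eqref{spcurvegenGaus} with the swapped data $(a\leftrightarrow b,\ x_i\leftrightarrow y_j,\ \epsilon\mapsto-\epsilon)$, the shear being what restores the finite residues at $x_i$ and $y_j$ that the swap has flipped. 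A secondary but genuine difficulty is the low-genus bookkeeping: Theorem~\ref{TeoSymplectic} gives full invariance only for $g\geq 2$, whereas $\mathcal{F}^{(0)}$ is invariant under a shear only when the shifting function is a \emph{polynomial} (here the repairing shear is genuinely rational, with poles at the $x_i$), and $\mathcal{F}^{(1)}$ changes by an additive $\ima\pi/12$. One must therefore check separately that these exceptional contributions agree on the two sides, either by tracking them through the explicit transformation or, more safely, by evaluating $\mathcal{F}^{(0)}$ and $\mathcal{F}^{(1)}$ directly from the Gaussian integral and its genus expansion.
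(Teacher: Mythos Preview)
Your overall architecture is exactly the paper's: reduce the identity of partition functions to an identity of symplectic invariants via \eqref{EqDefZFcurve}, then argue that the two spectral curves are related by the exchange $x\leftrightarrow y$ and invoke Theorem~\ref{TeoSymplectic}. The paper's own proof is two sentences: it points to the characterization in Lemma~\ref{LemmaCurve}, asserts that the two spectral curves $\mathcal{E}_{(m|n),(p|q),(X,Y,\epsilon)}$ and $\mathcal{E}_{(p|q),(m|n),(Y,X,-\epsilon)}$ are obtained from one another by the swap $x\leftrightarrow y$, and concludes.

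Where you diverge from the paper is in the ``repair'' step. The paper does \emph{not} compose the swap with any of the transformations (1)--(3); it treats item~(4) alone as carrying one curve onto the other. In the paper's reading of Lemma~\ref{LemmaCurve}, the swap simply interchanges the roles of the $\xi_i$ and $\eta_j$ and of the forms $y\,dx$ and $x\,dy$, so that the list of pole/residue conditions \eqref{poleresxdyetai} is sent to the corresponding list for the dual data, and the filling-fraction integral picks up the sign $\epsilon\mapsto-\epsilon$ exactly as you computed via $\oint x\,dy=-\oint y\,dx$. No rational shear is introduced, so the caveat you raise about $\mathcal{F}^{(0)}$ under transformation~(1) with a non-polynomial $R$ never arises in the paper's argument, and the paper does not separately treat $g=0,1$.

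So your proposal is correct in spirit and matches the paper's route, but it is more elaborate than what the paper actually does. The extra machinery you anticipate (composing with (1)--(3), tracking low-genus anomalies) is not part of the paper's proof; whether those refinements are genuinely needed to make the argument airtight is a fair question, but the paper itself does not engage with them.
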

\begin{proof}
Looking at the spectral curve characterized by data given in Lemma \ref{LemmaCurve}, we see that $\mathcal{E}_{{(m|n),(p|q)},(X,Y,\epsilon)}$ and $\mathcal{E}_{{(p|q),(m|n)},(Y,X,-\epsilon)}$ are just obtained from one another by the exchange of $x$ and $y$.  We now from Theorem \ref{TeoSymplectic} that the $\mathcal{F}_g$ are invariant under such a transformation, so that the result follows from Eq.\ \eqref{EqDefZFcurve}.
\end{proof}
The change $\epsilon\to -\epsilon$ is just the change of orientation for the integration contours used to define the integrals.

 In case we are studying the perturbative supermatrix integral (small deformation of the Gaussian integral), we need a genus zero spectral curve.  Indeed, in such instance, the parameter $z$ belongs to the complex plane, and $x(z)$ and $y(z)$ are rational fractions of $z$ whose poles are fixed by the analysis performed in the previous section, i.e. by equations \eqref{polestructurexy} to \eqref{poleresxdyetai}.  We therefore find:
\begin{lemma}\label{LemaRational} For the perturbative matrix integral, the rational spectral curve can be parametrized as follows:
\beq
{\mathcal E}_{{(m|n),(p|q)},(X,Y)}
=\left\{
\begin{array}{ll}
x(z)&=z+\hbar\sum_i\frac{b_i}{y'(\eta_i)(z-\eta_i)}\\
y(z)&=z-\hbar\sum_i\frac{a_i}{x'(\xi_i)(z-\xi_i)}.
\end{array}
\right.
\eeq
where the complex numbers $\xi_i$, $\eta_i$ are obtained by solving $x(\xi_i)=x_i$ and $y(\eta_i)=y_i$.
\end{lemma}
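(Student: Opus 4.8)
The plan is to construct the functions $x(z)$ and $y(z)$ directly from the pole and residue data collected in Lemma \ref{LemmaCurve}, using the genus-zero hypothesis. Since the perturbative integral forces the spectral curve to be rational, its Riemann surface is $\mathbb{P}^1$ and $x,y$ are rational functions of a single coordinate $z$. By Lemma \ref{LemmaCurve}, the only poles of $x$ are at $z=\infty$ and at the points $\eta_j$ (where $y=y_j$ stays finite), all simple, while the only poles of $y$ are at $z=\infty$ and at the points $\xi_i$ (where $x=x_i$ stays finite), again all simple. Hence $x$ and $y$ must each be of the form ``affine part plus simple partial fractions''.

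First I would fix the gauge. The $\mathrm{PGL}(2,\mathbb{C})$ freedom in the choice of $z$ is used to send the common pole of $x$ and $y$ to $z=\infty$; the residual affine freedom $z\mapsto\alpha z+\beta$ is then spent normalising the leading coefficient of $x$ at infinity to $1$ and removing its constant term. The asymptotics $y\sim x-\hbar(m-n+p-q)/x+O(1/x^2)$ near $z=\infty$, established in the previous subsection, show that $y-x\to 0$, so $x$ and $y$ share the \emph{same} leading coefficient and the \emph{same} constant term at infinity; thus normalising $x$ simultaneously normalises $y$. This pins down
\[
x(z)=z+\sum_j \frac{c_j}{z-\eta_j},\qquad y(z)=z+\sum_i\frac{d_i}{z-\xi_i},
\]
with $c_j=\Res_{z\to\eta_j}x$ and $d_i=\Res_{z\to\xi_i}y$ the only unknowns.

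The key step is to convert the residues of $y\,dx$ and $x\,dy$ given in \eqref{poleresxdyetai} into $c_j$ and $d_i$. Because $y$ is holomorphic at $\eta_j$, one has $\Res_{z\to\eta_j}x\,dy=\big(\Res_{z\to\eta_j}x\big)\,y'(\eta_j)=c_j\,y'(\eta_j)$; comparing with $\Res_{z\to\eta_j}x\,dy=\hbar b_j$ gives $c_j=\hbar b_j/y'(\eta_j)$. Symmetrically, since $x$ is holomorphic at $\xi_i$, $\Res_{z\to\xi_i}y\,dx=d_i\,x'(\xi_i)=-\hbar a_i$, so $d_i=-\hbar a_i/x'(\xi_i)$. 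Substituting these back reproduces exactly the claimed parametrisation, and the defining relations $x(\xi_i)=x_i$, $y(\eta_j)=y_j$ are precisely the conditions listed in Lemma \ref{LemmaCurve}. As a consistency check I would confirm the residues at $z=\infty$: since the sum of all residues of a meromorphic one-form vanishes, and since $\Res_{z\to\eta_j}y\,dx=-\Res_{z\to\eta_j}x\,dy=-\hbar b_j$ (the residue of $d(xy)$ being zero), we get $\Res_{z\to\infty}y\,dx=-\sum_i(-\hbar a_i)-\sum_j(-\hbar b_j)=\hbar\sum_i a_i+\hbar\sum_j b_j=\hbar(m-n+p-q)$, in agreement with \eqref{poleresxdyetai}.

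The subtle point, and the part I expect to require the most care, is that the parametrisation is self-referential: the residues $c_j,d_i$ contain $y'(\eta_j)$ and $x'(\xi_i)$, and the nodes $\xi_i,\eta_j$ are themselves defined implicitly through $x(\xi_i)=x_i$ and $y(\eta_j)=y_j$. Existence and uniqueness must therefore be argued in the perturbative regime: at $\hbar=0$ one has $x(z)=y(z)=z$, hence $\xi_i=x_i$ and $\eta_j=y_j$, and the Jacobian of the defining system is invertible there, so the implicit function theorem (equivalently, an order-by-order solution as a formal power series in $\hbar$) produces a unique analytic family $\xi_i(\hbar),\eta_j(\hbar)$ deforming this configuration. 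This closes the argument.
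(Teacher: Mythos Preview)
Your proposal is correct and follows the same route the paper takes: the paper simply remarks that in the perturbative (genus-zero) case $x(z)$ and $y(z)$ are rational functions of $z$ whose poles are fixed by the data \eqref{polestructurexy}--\eqref{poleresxdyetai}, and then states the parametrisation without further justification. You have filled in exactly the details the paper omits---the gauge fixing via $\mathrm{PGL}(2,\mathbb{C})$, the conversion of the residues of $x\,dy$ and $y\,dx$ into the partial-fraction coefficients, and the implicit-function-theorem argument resolving the self-referential nature of the formulas---so your write-up is in fact a complete proof of what the paper only sketches.
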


This means that in the rational case, there is no filling fraction.  The duality exposed in the introduction is thus a corollary
 of Theorem \ref{TeoDualityFilling} and Lema \ref{LemaRational}.

We end this section with a simple application of the duality:
we show that the expectation of a product of characteristic polynomials in a Gaussian supermatrix model formally tends to a
generalization of the Kontsevich model (matrix Airy function) when $1/(p-q)\to 0$.  We first make use of Eq.\ \eqref{eqGaussIntA},
$$Z_{(m|n),(p|q)}(X,0)=Z_{(p|q),(m|n)}(0,X),\quad\text{and}\quad\sum_{i=1}^{p+q}\sigma(i)=p-q,$$
which yields
\beq \label{EqK}
 Z_{(m|n),(p|q)}(\ima A,0)= \frac{\ee{\frac{1}{2\hbar}\str A^2}}{z_{m,n}(\hbar)}\int_{H(m|n)} dN \ee{-\frac{1}{2\hbar}\str(IN)^2+ \frac{\ima}{\hbar}\str INA} \sdet(-IN)^{p-q},
\eeq
In the last equation,  $A=\mathrm{diag}(a_1,\ldots,a_{m+n})$ and $z_{m,n}(\hbar)$ stands for the normalization coefficient defined in Eq.\ \eqref{EqCoeffz}.
Now, let  $W$ and $X$  be a Hermitian supermatrix of size $m+n$  and a diagonal matrix $X$ with $m+n$ complex entries, respectively.
We also set
\beq \hbar=(p-q)^{-1}, \quad IN=\ima +\hbar^{1/3}W, \quad \text{and}\quad A=2+\hbar^{2/3}X .\eeq
Finally, we formally expand the integrand in Eq.\ \eqref{EqK} in powers of $\hbar=(p-q)^{-1}$ and conclude that, as $\hbar\to0$,
\beq
\frac{\ee{-\frac{1}{\hbar^{1/3}}\str X}}{\bar z_{m,n}(\hbar)}\,Z_{(m|n),(p|q)}(2\ima+\ima \hbar^{1/3} X,0)= \mathrm{Ai}(X)+O(\hbar^{1/3}),
\eeq
where \beq \mathrm{Ai}(X)=\int_{H(m|n)}dW\ee{\frac{\ima}{3}\str W^3+\ima\str WX} \eeq
and \beq \bar z_{m,n}(\hbar)=\frac{\ee{\frac{m-n}{2\hbar}}\ima^{(m-n)(\hbar^{-1}-n)}}{\hbar^{(m-n)^2/3}}z_{m,n}(\hbar).\eeq

\section{Conclusion}

We have proved that gaussian supermatrix integrals with external fields and sources, respectively denoted by $Y=\mathrm{diag}(y_1,\ldots y_{p+q})$
 and $X=\mathrm{diag}(x_1,\ldots,x_{m+n})$ for some complex numbers $y_i$ and $z_i$, satisfy a duality formula which extends that of \cite{Brezin,Des}.
Usual hermitian matrices with sources at both numerators and denominators could not have this duality,
 because we see that numerators and denominators are transformed, under this duality, into variables of different signs,
 which can be only obtained with supermatrices. In some sense supermatrices allow eigenvalues with negative multiplicities.
We conjecture that the supermatrix duality proved in the article extends to the case where the external sources and fields, $X$ and $Y$, are arbitrary supermatrices.



\section*{Acknowledgments}
The authors would like to thank N. Orantin
for useful and fruitful discussions on this subject.
The work of B.E. is partly supported by the Enigma European network MRT-CT-2004-5652, by the ANR project G\'eom\'etrie et int\'egrabilit\'e en physique math\'ematique ANR-05-BLAN-0029-01,
by the European Science Foundation through the Misgam program,
by the Quebec government with the FQRNT.  The work of P.D. is supported by the  Fondo Nacional de Desarrollo Cient\'ifico y Tecnol\'ogico (FONDECYT) Grant \#1090034.  The project was initiated while P.D. was a postdoctoral fellow at the IPhT, CEA-Saclay; P.D. wishes to thank the members of the IPhT for their kind generosity and hospitality.

\appendix

\section{Grassmann variables}

The Grassmann algebra of order $n$, denoted $\Gamma=\Gamma(n;\mathbb{C})$, is the algebra over $\mathbb{C}$ generated by $1$ and $n$ quantities $\theta_i$, called Grassmann variables, which satisfy
\begin{equation}
 \theta_i\theta_j=-\theta_j\theta_i.
\end{equation}
Note in particular that $\theta_i^2=0$.  Let $c$ be a complex number.  We adopt the following convention for the complex conjugation:
\begin{equation}
 (c\theta_i)^*=c^*\theta_i^*,\qquad (\theta_i\theta_j)^*=\theta_j^*\theta_i^*,\qquad \theta_i^{**}=\theta_i.
\end{equation}

$\Gamma$ has $2^n$ generators: Every element $x$ can be decomposed as
\begin{equation}
 x=x^{(0)}+\sum_{i_1} x^{(i_1)}\theta_{i_1}+\sum_{i_1<i_2}x^{(i_1,i_2)}\theta_{i_1}\theta_{i_2}+\ldots+x^{(1,\ldots,n)}\theta_{1}\cdots\theta_n
\end{equation}
with $x^{(i_1,i_2,\ldots)}\in\mathbb{C}$.  We say that $x$ is even (or bosonic) if it contains only monomials with an even number of $\theta_i$; $x$ is odd (fermionic) if contains only monomials with an odd number of $\theta_i$.  Elements $x_i$ a the Grassmann algebra satisfy
\begin{equation}
 x_i x_j=(-1)^{(\deg x_i \cdot \deg x_j) }x_j x_i
\end{equation}
where $\deg x=0$ if $x$ is even while $\deg x=1$ if $x$ is odd.  Note that we also write
\beq
 \deg x = \epsilon (x)\and (-1)^{\deg x}=\sigma (x).
\eeq

The (left) derivative and the integration with respect to the Grassmann variables are respectively defined by
\begin{equation}
\frac{\partial}{\partial \theta_i}( \theta_{j_1} \theta_{j_2}\cdots \theta_{j_k})=\delta_{i,j_1}(\theta_{j_2}\cdots\theta_{j_k})-\delta_{i,j_2}(\theta_{j_1}\theta_{j_3}\cdots\theta_{j_k})+(-1)^k\delta_{i,j_k}(\theta_{j_1}\cdots\theta_{j_{k-1}})
\end{equation}
and
\begin{equation}
 \int d\theta_i\,\theta_j=  -\int\theta_j\,d\theta_i =\delta_{i,j}.
\end{equation}
With the latter definition, one can obtain the following integral representation for the determinant of a $N\times N$ matrix $X$:
\begin{equation}
 \det X=\int d\theta^\dagger d\theta \ee{-\theta^\dagger X\theta}
\end{equation}
where $\theta$  (resp.\ $\theta^\dagger$) is interpreted as a column (resp.\  row)  vector with $N$ fermionic components.   The bosonic counterpart of the latter formula is the usual Gaussian integral involving complex variables $z_i$; that is, if $X$ is Hermitian,
\begin{equation}
 \frac{1}{\det X}=\frac{1}{(2\pi\ima)^N}\int  d{z}^\dagger dz \ee{ -z^\dagger X z}.
\end{equation}

\section{Supermatrices}

Let $X$ denote an even (bosonic) square supermatrix of size $(p+q)$.  It can be written as
\begin{equation}
X=\Big( X_{ij} \Big)_{1\leq i,j\leq p+q}=\begin{pmatrix}A&B\\C&D\end{pmatrix}
\end{equation}
 where $A$ and $D$  are respectively $p\times p$ and $q\times q$ matrices with even Grassmann elements, while  $B$ and $C$  are respectively $p\times q$ and $q\times p$ matrices with Grassmann odd elements.   The following notation is useful:
\beq
(-1)^{\deg X_{ij}}=\sigma(i)\sigma(j)
\eeq
where
\beq
\sigma(i)=(-1)^{\epsilon(i)}=
\begin{cases}
  +1,& i=1,\ldots p,\\
  -1, & i=p+1,\ldots p+ q.
\end{cases}
\eeq

 The supertrace is given by
\begin{equation}
 \str X=\sum_{i=1}^{p+q}\sigma(i)X_{ii}=\tr A-\tr D
\end{equation}
and satisfies
\begin{equation}
 \str (X+Y)=\str X+\str Y,\qquad \str(XY)=\str (YX).
\end{equation}
  The superdeterminant, which exists only if $A$ as well as $D$ are invertible, is given by
\begin{equation}
 \sdet X=\frac{\det(A-BD^{-1}C)}{\det (D)}=\frac{\det(A)}{\det (D-CA^{-1}B)}.
\end{equation}
 One can show that
\begin{equation}
 \sdet(XY)=\sdet X\sdet Y\qquad\text{and}\qquad\det(\exp X)=\exp (\str X).
\end{equation}

We use the following definitions for the transpose and the adjoint of  a supermatrix:
\begin{equation}
 X^{\trans}=\begin{pmatrix}A^\trans &C^\trans \\ B^\trans & D^\trans \end{pmatrix}\qquad\mbox{and}\qquad  X^{\dagger}=\begin{pmatrix}A^\dagger &C^\dagger \\ B^\dagger & D^\dagger \end{pmatrix}
\end{equation}
where $A^\dagger$ means $(A^\trans)^*$.  On easily shows that
\begin{equation}
(X^\dagger)^{\dagger}=X,\qquad (XY)^{\dagger}=Y^\dagger X^\dagger,\qquad \sdet X^\dagger=(\sdet X)^*.
\end{equation}
However, $(XY)^\trans\neq Y^\trans X^\trans$ and $\sdet X^\trans\neq \sdet X$ in general.  \footnote{Other non equivalent definitions for the transpose and the adjoint are possible. They lead, for instance, to a distinct unitary supergroup, namely $sU(p|q)$.  \cite{Berezin, Frappat}.}  A supermatrix $X$ is Hermitian if $X^\dagger=X$; it is unitary if $X^\dagger=X^{-1}$.  The set of all invertible even supermatrices of size $p+q$, whose elements belong to a Grassman algebra over $\mathbb{C}$, form the general linear super group $GL(p|q)$.   All unitary supermatrices form the superunitary group $U(p|q)$.

Let $\xi^\dagger=(z_1^*,\ldots,z_p^*, \theta_1^*,\ldots,\theta_q^*)$ be the adjoint of the supervector $\xi$.  Then,  the superdeterminant of a Hermitian supermatrix $X$ has the following integral representation:
\begin{equation}
 \frac{1}{\sdet X}=\frac{1}{(2\pi \ima)^p}\int d\xi^{\dagger} d\xi \ee{-\xi^\dagger X\xi}.
\end{equation}

The measure for Hermitian supermatrices is given by the  product $\prod_{i,j} dX_{i,j}$  which is equal, up to a multiplicative constant, to the product of real independent differential elements.   Explicitly, if $X\in H(p|q)$,
\beq
dX=\prod_{1\leq i\leq p+q}dX_{ii} \prod_{1\leq i<j\leq p}dX_{ij}dX^*_{ij}\prod_{\substack{1\leq i\leq p\\
p+1\leq j\leq p+q}}dX_{ij}dX^*_{ij}.
\eeq
For instance, by using the bosonic and fermionic Gaussian integrals,
\beq
\int_\mathbb{R}dx \ee{-x^2+xy}=\sqrt{ {\pi} }\,\ee{\frac{1}{4}y^2}\and    \int d\theta d\theta^* \ee{\theta^*\theta+\theta^*\eta+\eta^*\theta}=\ee{\eta\eta^*},
\eeq
where $\deg \eta=\deg \eta^*=1$, one readily shows that the following formulas hold for any (not necessarily Hermitian) supermatrix  $Y$ of size $p+q$:
\beq\label{eqGaussIntA}
\int dX \ee{-\frac{1}{2\hbar}\str  (I X)^2}\ee{\str IXY }
=z_{p,q}(\hbar)\ee{\frac{\hbar}{2}\str Y^2},
\eeq
where
\beq \label{EqCoeffz}z_{p,q}(\hbar)=2^{(p+q)/2}\ima^{pq}\pi^{(p^2+q^2)/2}\hbar^{(p-q)^2/2}, \eeq
so that
\beq\label{eqGaussInt}
\left\langle \ee{\str XY}\right\rangle_{X\in GU(p|q)} =
\frac{\int dX \ee{-\frac{1}{2\hbar}\str  (I X)^2}\ee{\str IXY }}{ \int dX \ee{-\frac{1}{2\hbar}\str  (I X)^2}}=\ee{\frac{\hbar}{2}\str Y^2}.\eeq



\end{document}